\setlist[itemize]{itemsep=0pt}
\setlist[enumerate]{itemsep=0pt}
\Crefname{lemma}{Lemma}{Lemmas}
\Crefname{fact}{Fact}{Facts}
\Crefname{theorem}{Theorem}{Theorems}
\Crefname{corollary}{Corollary}{Corollaries}
\Crefname{claim}{Claim}{Claims}
\Crefname{example}{Example}{Examples}
\Crefname{problem}{Problem}{Problems}
\Crefname{definition}{Definition}{Definitions}
\Crefname{notation}{Notation}{Notations}
\Crefname{assumption}{Assumption}{Assumptions}
\Crefname{subsection}{Subsection}{Subsections}
\Crefname{section}{Section}{Sections}
\newtheorem{theorem}{Theorem}[section]
\newtheorem*{theorem*}{Theorem}
\newtheorem*{proposition*}{Proposition}
\newtheorem{lemma}[theorem]{Lemma}
\newtheorem*{lemma*}{Lemma}
\newtheorem{corollary}[theorem]{Corollary}
\newtheorem*{corollary*}{Corollary}
\newtheorem*{conjecture*}{Conjecture}
\newtheorem{fact}[theorem]{Fact}
\newtheorem*{fact*}{Fact}
\newtheorem*{exercise*}{Exercise}
\newtheorem*{hypothesis*}{Hypothesis}
\newtheorem{conjecture}[theorem]{Conjecture}
\theoremstyle{definition}
\newtheorem{definition}[theorem]{Definition}
\newtheorem{notation}[theorem]{Notation}
\newtheorem{example}[theorem]{Example}
\newtheorem{exercise-easy}[theorem]{Exercise}
\newtheorem{exercise-med}[theorem]{Exercise}
\newtheorem{exercise-hard}[theorem]{Exercise$^\star$}
\newtheorem{claim}[theorem]{Claim}
\newtheorem*{claim*}{Claim}
\newtheorem*{remark*}{Remark}
\newtheorem*{observation*}{Observation}
\newcommand{\savehyperref}[2]{\texorpdfstring{\hyperref[#1]{#2}}{#2}}
\DeclareSymbolFont{largesymbolsyhmath}{OMX}{yhex}{m}{n}
\DeclareMathAccent{\widehat}{\mathord}{largesymbolsyhmath}{"62}
\DeclareMathAccent{\widetilde}{\mathord}{largesymbolsyhmath}{"65}
\DeclareMathOperator*{\E}{\mathbb E}
\renewcommand{\Pr}{\operatorname*{\mathbf{Pr}}}
\newcommand{\eps}{\varepsilon}
\newcommand{\abs}[1]{\left| #1 \right|}
\newcommand{\vabs}[1]{\left\| #1 \right\|}
\newcommand{\abra}[1]{\left\langle #1 \right\rangle}
\newcommand{\pbra}[1]{\left( #1 \right)}
\newcommand{\sbra}[1]{\left[ #1 \right]}
\newcommand{\cbra}[1]{\left\{ #1 \right\}}
\renewcommand{\mid}{\,\middle\vert\,}
\newcommand{\bin}{\{0,1\}}
\newcommand{\binpm}{\{\pm1\}}
\newcommand{\poly}{\mathsf{poly}}
\newcommand{\polylog}{\mathsf{polylog}}
\newcommand{\Span}{\mathsf{Span}}
\newcommand{\rank}{\mathsf{rank}}
\newcommand{\sgn}{\mathsf{sgn}}
\newcommand{\indicator}{\mathsf{1}}
\newcommand{\Enc}{\mathsf{Enc}}
\newcommand{\depth}{\mathsf{depth}}
\newcommand{\ubm}{\bm{u}}
\newcommand{\vbm}{\bm{v}}
\newcommand{\Fbb}{\mathbb{F}}
\newcommand{\Nbb}{\mathbb{N}}
\newcommand{\Rbb}{\mathbb{R}}
\newcommand{\Dcal}{\mathcal{D}}
\newcommand{\Ecal}{\mathcal{E}}
\newcommand{\Fcal}{\mathcal{F}}
\newcommand{\Hcal}{\mathcal{H}}
\newcommand{\Lcal}{\mathcal{L}}
\newcommand{\Pcal}{\mathcal{P}}
\newcommand{\Scal}{\mathcal{S}}
\newcommand{\Tcal}{\mathcal{T}}
\newcommand{\Ucal}{\mathcal{U}}
\newcommand{\Xcal}{\mathcal{X}}
\renewcommand{\tilde}{\widetilde}
\renewcommand{\hat}{\widehat}
\title{Fourier Growth of Parity Decision Trees}
\author{
Uma Girish\thanks{Department of Computer Science, Princeton University. Email: \texttt{ugirish@cs.princeton.edu}}
\and
Avishay Tal\thanks{Department of EECS, University of California at Berkeley. Email: \texttt{avishay.tal@gmail.com}}
\and
Kewen Wu\thanks{Department of EECS, University of California at Berkeley. Email: \texttt{shlw\_kevin@hotmail.com}}
}
\date{}
\begin{document}
\maketitle

\begin{abstract}
We prove that for every parity decision tree of depth $d$ on $n$ variables, the sum of absolute values of Fourier coefficients at level $\ell$ is at most $d^{\ell/2} \cdot O(\ell \cdot \log(n))^\ell$. 
Our result is nearly tight for small values of $\ell$ and extends a previous Fourier bound for standard decision trees by Sherstov, Storozhenko, and Wu (STOC, 2021). 

As an application of our Fourier bounds, using the results of Bansal and Sinha (STOC, 2021), we show that the $k$-fold Forrelation problem has (randomized) parity decision tree complexity  $\tilde{\Omega}\left(n^{1-1/k}\right)$, while having quantum query complexity $\lceil k/2\rceil$. 

Our proof follows a random-walk approach, analyzing the contribution of a random path in the decision tree to the level-$\ell$ Fourier expression. 
To carry the argument, we apply a careful cleanup procedure to the parity decision tree, ensuring that the value of the random walk is bounded with high probability. We observe that step sizes for the level-$\ell$ walks can be computed by the intermediate values of level $\le \ell-1$ walks, which calls for an inductive argument.
Our approach differs from previous proofs of Tal (FOCS, 2020) and Sherstov, Storozhenko, and Wu (STOC, 2021) that relied on decompositions of the tree. In particular, for the special case of standard decision trees we view our proof as slightly simpler and more intuitive.

In addition, we prove a similar bound for noisy decision trees of cost at most $d$ -- a model that was recently introduced by Ben-David and Blais (FOCS, 2020).
\end{abstract}

\newpage
\section{Introduction}\label{sec:intro}
A common theme in the analysis of Boolean functions is proving structural results on classes of Boolean devices (e.g., decision trees, bounded-depth circuits) and then exploiting the structure to: (i) devise pseudorandom generators fooling these devices, (ii) prove lower bounds, showing that some explicit function cannot be computed by such Boolean devices of certain size, or (iii) design learning algorithms for the class of Boolean devices in either the membership-query model or the random-samples model.
Such structural results can involve properties of the Fourier spectrum of Boolean functions associated with Boolean devices, like concentration on low-degree terms or concentration on a few terms (i.e., ``approximate sparsity'').

In this work, we investigate the Fourier spectrum of parity decision trees. A parity decision tree (PDT) is an extension of the standard decision tree model. A PDT is a binary tree where each internal node is marked by a linear function (modulo $2$) on the input variables $(x_1, \ldots, x_n)$,  with two outgoing edges marked with $0$ and $1$, and each leaf is marked with either $0$ or $1$. A PDT naturally describes a computational model: on input $x = (x_1, \ldots, x_n)$, start at the root and at each step query the linear function specified by the current node on the input $x$ and continue on the edge marked with the value of the linear function evaluated on $x$. Finally, when reaching a leaf, output the value specified in the leaf. PDTs naturally generalize standard decision trees that can only query the value of a single input bit in each internal node.

PDTs were introduced in the seminal paper of Kushilevitz and Mansour \cite{DBLP:journals/siamcomp/KushilevitzM93}. Aligned with the aforementioned theme, Kushilevitz and Mansour proved a structural result for PDTs and used it to design learning algorithms for PDTs. They showed that every PDT of size $s$ computing a Boolean function $f\colon\bin^n \to \bin$ has 
$$
L_1(f) \triangleq \sum_{S \subseteq [n]} \abs{\hat{f}(S)} \le s,
$$
where $\hat{f}(S)$ are the Fourier coefficients of $f$ (see \Cref{sec:Fourier} for a precise definition). Then, they gave a learning algorithm in the membership-query model, running in time $\poly(t,n)$ that can learn any function $f$ with $L_1(f) \le t$. Combining the two results together, they obtained a $\poly(s,n)$-time algorithm for learning PDTs of size $s$.

Parity decision trees were also studied in relation to communication complexity and the log-rank conjecture \cite{MO09, ZS09, ZS10, TWXZ13, STV17, OWZST14, CS16,KQS15,DBLP:journals/siamcomp/HatamiHL18,Sanyal19, MS20
}.
Suppose Alice gets input $x \in \bin^n$, Bob gets input $y\in \bin^n$ and they want to compute some function $f(x,y)$. When $f$ is an XOR-function, namely $f(x,y) = g(x\oplus y)$ for some $g: \bin^n \to \bin$, then any PDT for $g$ of depth $d$ can be translated into a communication protocol for $f$ at cost $2d$: Alice and Bob simply traverse the PDT together, both exchanging the parity of their part of the input to simulate each query in the PDT. 
With this view, parity decision trees can be thought of as special cases of communication protocols for XOR functions.
A surprising result by Hatami, Hosseini, and Lovett \cite{DBLP:journals/siamcomp/HatamiHL18}, shows that this is not far from the optimal strategy for XOR functions. Namely, if the communication cost for computing $f$ is $c$, then the parity decision tree complexity of $g$ is at most $\poly(c)$. Due to this connection, the log-rank conjecture for XOR-functions reduces to the question of whether Boolean functions with at most $s$ non-zero Fourier coefficients can be computed by PDTs of depth $\polylog(s)$ \cite{MO09, ZS09}. The best known upper bound is that such functions can be computed by PDTs of depth $O(\sqrt{s})$ \cite{TWXZ13} (or even non-adaptive PDTs of depth $\tilde{O}(\sqrt{s})$  \cite{Sanyal19}).

While having small $L_{1}(f)$ norm implies learning algorithms and also simple pseudorandom generators fooling $f$~\cite{DBLP:journals/siamcomp/NaorN93}, this property can be quite restrictive. In particular, very simple functions (e.g., the Tribes function) have $L_{1}(f)$ exponential in $n$. Such examples motivated Reingold, Steinke, and Vadhan \cite{DBLP:conf/approx/ReingoldSV13} to study a more refined notion measuring for a given level $\ell$, the sum of absolute values of Fourier coefficients of sets $S$ of size exactly $\ell$, i.e, to study 
$$
L_{1,\ell}(f) \triangleq \sum_{S\subseteq [n]:|S|=\ell} \abs{\hat{f}(S)}.
$$
In particular, for $\ell=1$, the measure $L_{1,1}(f)$ is tightly related to the total influence of $f$ (and equals to it if $f$ is monotone).
The idea behind this more refined notion is that Fourier coefficients of different levels behave differently under standard manipulations to the function like random restrictions or noise operators. For example, when applying a noise operator with parameter $\gamma$, level-$\ell$ coefficients are multiplied by $\gamma^\ell$.
This motivates to establish a bound of the form
$L_{1,\ell}(f) \le t^{\ell}$ for some parameter $t$ and all $\ell = 1, \ldots, n$.  If $f$ satisfies such a bound, we say that $f \in \Lcal_{1}(t)$.\footnote{Note that if $f\in \Lcal_1(t)$ then after applying noise operator with $\gamma = 1/(2t)$, the noisy-version of $f$ has total $L_1$-norm at most $O(1)$ which makes it is quite easy to fool using small-biased distributions \cite{DBLP:journals/siamcomp/NaorN93}.}

Reingold, Steinke, and Vadhan \cite{DBLP:conf/approx/ReingoldSV13} showed that for read-once permutation branching programs of width $w$, while $L_{1}(f)$ could be exponential in $n$ (even for $w=3$), it nevertheless holds that
$L_{1,\ell}(f) \le (2w^2)^{\ell}$ for all $\ell=1, \ldots, n$. 
Then, they constructed a pseudorandom generator that fools any class of read-once branching programs for which $f \in \Lcal_1(t)$ using only $t \cdot \polylog(n)$ random bits.
This result was significantly generalized to a pseudorandom generator that fools any class of functions $f \in \Lcal_1(t)$ using only $t^2 \cdot \polylog(n)$ random bits~\cite{DBLP:journals/toc/ChattopadhyayHH19}. 
Further results established pseudorandom generators assuming  $L_{1,\ell}$ bounds only on the first few levels  \cite{DBLP:conf/stoc/ChattopadhyayHR18,DBLP:journals/corr/abs-2008-01316}.

It turns out that read-once permutation branching programs are just one example of many well-studied Boolean devices with non-trivial $L_{1,\ell}$ bounds. The following classes of Boolean functions are other examples:
\begin{enumerate}
	\item Width-$w$ CNF and width-$w$ DNF formulae are in $\Lcal_{1}(O(w))$ \cite{DBLP:journals/jcss/Mansour95}.
	\item $\mathsf{AC}^0$ circuits of size $s$ and depth $d$ are in $\Lcal_{1}\pbra{O(\log(s))^{d-1}}$ \cite{DBLP:conf/coco/Tal17}.
	\item Boolean functions with max-sensitivity at most $s$ are in $\Lcal_{1} (O(s))$ \cite{DBLP:journals/eccc/GopalanSTW16}
	\item Read-once branching programs of width $w$ are in $\Lcal_{1}\pbra{O(\log(n))^w}$\cite{DBLP:conf/stoc/ChattopadhyayHR18}
	\item Deterministic and randomized decision trees of depth $d$ are in $\Lcal_{1}\pbra{O\pbra{\sqrt{d \log(n)}}}$ \cite{DBLP:conf/focs/Tal20,  DBLP:journals/eccc/SherstovSW20}.
	\item If $f(x,y)$ is a function computed by communication protocol exchanging at most $c$ bits, then $h(z) = \E_x[f(x,x \oplus z)]$ satisfies $h \in \Lcal_1(O(c))$	\cite{DBLP:conf/innovations/GirishRT21,DBLP:journals/eccc/GirishRZ20a}.
	\item Polynomials $f$ over $\mathsf{GF}(2)$ of degree $d$ have $L_{1,\ell}(f) \le \left(2^{3d} \cdot \ell\right)^\ell$ \cite{DBLP:journals/toc/ChattopadhyayHH19}.
	\item Product tests, i.e., the XOR of multiple Boolean functions operating on  disjoint sets of at most $m$ bits each, are in $\Lcal_1(O(m))$ \cite{DBLP:conf/coco/Lee19}.
\end{enumerate}
We remark that Items~1, 2, 4, 5 and 8 are essentially tight, Item~3 can be potentially improved polynomially \cite{DBLP:journals/corr/abs-1204-6447,DBLP:journals/siamcomp/ODonnellS07}, Item~6 can be potentially improved quadratically \cite{DBLP:conf/innovations/GirishRT21} and Item~7 can be potentially improved exponentially \cite{DBLP:conf/innovations/ChattopadhyayHL19}. Indeed, improving Item~7 exponentially would imply that $\mathsf{AC}^0[\oplus]$ in $\Lcal_1(\polylog(n))$ and would give the first poly-logarithmic pseudorandom generators for this well-studied class of Boolean circuits \cite{DBLP:conf/innovations/ChattopadhyayHL19}.

The most relevant result to our work is the recent tight bounds on the $L_{1,\ell}$ of decision trees of depth $d$. Sherstov, Storozhenko and Wu \cite{DBLP:journals/eccc/SherstovSW20} recently proved that for any randomized decision tree of depth $d$ computing a function $f$, it holds that   $L_{1,\ell}(f) \le \sqrt{ {\binom d\ell} \cdot O(\log(n))^{\ell-1}}$. Their bound is nearly tight (see~\cite[Section~7]{DBLP:conf/focs/Tal20} and~\cite[Chapter~5.3]{DBLP:books/daglib/0033652} for tightness examples). One motivation for showing such a bound for decision trees is that it demonstrates a stark difference between quantum algorithms making few queries and randomized algorithms making a few queries. Indeed, the Fourier spectrum associated with quantum query algorithms making a few queries can be far from being approximately sparse (in the sense that its $L_{1,\ell}$ is quite large).
Based on that difference, both \cite{DBLP:journals/eccc/SherstovSW20} and \cite{DBLP:journals/eccc/BansalS20} showed that there are partial functions, either $k$-fold Forrelation or $k$-fold Rorrelation, that can be correctly computed with probability at least $1/2 + \Omega(1)$ by quantum algorithms making $\lceil{k/2\rceil}$ queries, but require $\tilde{\Omega}\pbra{n^{1-1/k}}$ queries for any randomized algorithm.
Moreover, due to the result of Aaronson and Ambainis \cite{DBLP:journals/siamcomp/AaronsonA18} this is the largest possible separation between the two models.

Indeed, as suggested in \cite{DBLP:conf/focs/Tal20}, one can show that any function with sufficiently good bounds on its $L_{1,\ell}$, for all $\ell = 1, \ldots, n$, cannot solve the $k$-fold Rorrelation, and such bounds were obtained by \cite{DBLP:journals/eccc/SherstovSW20} for randomized decision trees of depth $n^{1-1/k}/\polylog(n)$.
Independently, Bansal and Sinha obtained the same separation but only relying on the $L_{1,\ell}$ bounds for $\ell \in \{k, k+1, \ldots, k^2\}$. With this additional flexibility, they were able to obtain their separation for the simpler and explicit function called $k$-fold Forrelation.

For parity decision trees, the work of Blais, Tan, and Wan \cite{DBLP:journals/corr/BlaisTW15} established a tight bound of $O\pbra{\sqrt{d}}$ on the first level $\ell=1$. To the best of our knowledge, bounds on higher levels were not considered previously in the literature (in fact, even for standard decision trees, such bounds were not considered prior to \cite{DBLP:conf/focs/Tal20}).

\subsection{Our Results}
We prove level-$\ell$ bounds for any parity decision tree of depth $d$.
\begin{theorem}[Informal]\label{thm:informal}
Let $\Tcal$ be a depth-$d$ parity decision tree on $n$ variables. Then the sum of absolute Fourier coefficients at level $\ell$ is bounded by  $d^{\ell/2}\cdot O\!\pbra{\ell\cdot\log(n)}^\ell$.
\end{theorem}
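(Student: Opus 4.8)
The plan is to follow the random-walk strategy advertised in the abstract. Fix a depth-$d$ parity decision tree $\Tcal$ computing $f\colon\bin^n\to\bin$ and a level $\ell$. I want to express the level-$\ell$ Fourier mass $L_{1,\ell}(f)=\sum_{|S|=\ell}|\hat f(S)|$ as (the expectation of) a quantity that I can track along a random root-to-leaf path in $\Tcal$. Concretely, associate signs $\sigma_S\in\binpm$ to sets of size $\ell$ so that $L_{1,\ell}(f)=\sum_{|S|=\ell}\sigma_S\hat f(S)=\abra{f,\,g}$ where $g=\sum_{|S|=\ell}\sigma_S\chi_S$. Since $f$ is constant on each leaf's affine subspace, write $f=\sum_v \indicator[\text{reach }v]\cdot f(v)$ over leaves $v$; then $L_{1,\ell}(f)=\sum_v f(v)\cdot\widehat{\indicator[\text{reach }v]}$ restricted to level $\ell$, and the characteristic function of the affine subspace at leaf $v$ has an explicit Fourier expansion supported on the (coset of the) span of the queried parities. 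The upshot is a telescoping identity: walking down from the root, at each queried parity $\chi_a$ the ``level-$\ell$ potential'' $\Phi_\ell$ of the subtree updates by a step whose size is controlled by level-$\le \ell-1$ potentials of the two children. This is exactly the observation flagged in the abstract (``step sizes for the level-$\ell$ walks can be computed by the intermediate values of level $\le\ell-1$ walks''), and it is what forces the induction on $\ell$.

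The key steps, in order. \textbf{(1) Cleanup.} Before running the walk I would preprocess $\Tcal$: remove redundant queries, and—crucially—prune or reweight nodes where the local level-$(\ell-1)$ potential is atypically large, so that along a uniformly random path the partial sums stay bounded with high probability. The base case $\ell=0$ is trivial ($L_{1,0}\le 1$), and $\ell=1$ should recover the $O(\sqrt d)$ bound of Blais–Tan–Wan as a sanity check. \textbf{(2) Random-walk setup.} Sample a uniformly random input $x$, inducing a random root-leaf path of length $\le d$; define the martingale whose increments are the per-query contributions to $\abra{f,g}$ after the cleanup. \textbf{(3) Bounding increments.} Using the inductive hypothesis on levels $\le\ell-1$ together with the cleanup guarantee, bound each increment's $L_2$ size by roughly $\sqrt{\log n}$ times a level-$(\ell-1)$ quantity; there are at most $d$ steps, so an $L_2$/Azuma-type estimate gives the $d^{1/2}$ per-level gain, and iterating over $\ell$ levels yields $d^{\ell/2}$. \textbf{(4) The $O(\ell\log n)^\ell$ factor} comes from: the union-bound / concentration slack at each level ($O(\log n)$ per level), combinatorial factors of $O(\ell)$ from choosing which of the $\ell$ coordinates is ``resolved'' at a given step, and the accumulation of constants through the $\ell$-fold induction. \textbf{(5) Randomized trees.} Extend from deterministic to randomized PDTs by averaging, since $L_{1,\ell}$ is convex in $f$.

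I expect the main obstacle to be step (1)–(3): making the cleanup procedure quantitatively correct so that the random walk really is bounded with high probability \emph{and} the deleted/reweighted mass is negligible. Unlike ordinary decision trees, a parity query can correlate with \emph{many} of the $\ell$ relevant coordinates at once, so the ``step size'' at a node is not a single Fourier coefficient but a sum over all size-$(\ell-1)$ subsets of a larger ground set determined by the parities queried so far; controlling this requires the level-$(\ell-1)$ bound to hold not just globally but on every subtree after restriction, which is why the induction must be set up subtree-wise from the start. A secondary difficulty is handling the affine (rather than linear) structure of leaf subspaces and the resulting coset characters cleanly, so that the telescoping identity in step (2) is exact rather than merely approximate. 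Once the walk is set up correctly, the concentration arguments in steps (3)–(4) should be comparatively routine, and I would present the ordinary-decision-tree case first as the warm-up promised in the abstract.
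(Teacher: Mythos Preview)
Your high-level plan (random walk, signs $\sigma_S$, induction on $\ell$) matches the paper, but your description of the \emph{cleanup} step is wrong in a way that would make the argument collapse. You describe cleanup as ``prune or reweight nodes where the local level-$(\ell-1)$ potential is atypically large.'' That is not what is needed, and it does not address the actual obstruction. In a parity decision tree the product $x_S$ can be determined at a leaf while none of the individual bits $x_j$, $j\in S$, are determined; consequently the basic identity you rely on, $\hat\Tcal(S)=\E\bigl[\Tcal(v_d)\prod_{j\in S}\vbm^{(d)}_j\bigr]$, is simply false in general, and your telescoping walk never gets off the ground. The paper's cleanup does something structurally different: it \emph{adds} single-coordinate queries after each parity query so that whenever a set $S$ with $|S|\le k$ becomes determined, every individual $x_j$, $j\in S$, is also determined. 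This ``$k$-clean'' property (with $k=\Theta(\ell\log n)$, costing a factor $k$ in depth) is what makes the coordinate-wise potential $\vbm^{(i)}$ meaningful and restores the identity in \Cref{lem:clean_to_Fourier}.

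There is a second gap. Even after a correct cleanup, a single step of the walk may reveal a \emph{set} $I_i$ of coordinates simultaneously, and the increment $X_T^{(i)}-X_T^{(i-1)}$ is a degree-$\le\ell$ polynomial in the newly fixed bits $\{x_j:j\in I_i\}$. Conditioned on the history these bits are not independent; they are pairwise $\pm$-linked, so the increment splits as $Y_i+Z_i$ where $Y_i$ (even-degree part) is a deterministic \emph{drift} and $Z_i$ (odd-degree part) is the martingale step. You treat the walk as a pure martingale and plan to bound each step by ``$\sqrt{\log n}$ times a level-$(\ell-1)$ quantity''; this ignores the drift entirely, and for the martingale part the naive bound $|Z_i|\le |I_i|\cdot(\text{level-}(\ell-1)\text{ bound})$ gives $\sum_i |Z_i|^2\le D^2\cdot(\cdots)$, which is off by a factor of $D$. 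The fix in the paper is precisely what the $k$-clean property buys: conditioning on the nearest $k$-clean ancestor makes the newly revealed bits $2k$-wise independent, so hypercontractivity for low-degree polynomials (\Cref{lem:low_degree_polys}) controls $\sum_i |Z_i|^2$ and $\sum_i|Y_i|$ with high probability, after which an adaptive Azuma (\Cref{lem:adaptive_azuma}) closes the induction. None of your steps (1)--(3) as written would produce this $k$-wise independence, and without it the bound does not follow.
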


See \Cref{thm:level_1} and \Cref{thm:level_l} for a precise statement taking into account the probability that $\Tcal$ accepts a uniformly random input. 
\Cref{thm:informal} extends the result of \cite{DBLP:journals/eccc/SherstovSW20} from standard decision trees to parity decision trees at the cost of an $(\ell \cdot \log(n))^{O(\ell)}$ multiplicative factor. We remark that even for standard decision tree there is a lower bound of $L_{1,\ell}(f) \ge \sqrt{ \binom{d}{\ell} \cdot (\log(n))^{\ell-1}}$ \cite[Section~7]{DBLP:conf/focs/Tal20} for constant $\ell$ and  $L_{1,\ell}(f) \ge \frac{1}{\poly(\ell)} \cdot \sqrt{ \binom{d}{\ell}}$ for all $\ell$ \cite[Chapter~5.3]{DBLP:books/daglib/0033652}. Thus, our bounds are tight up to $\polylog(n)$ factors for constant $\ell$, and they deteriorate as $\ell$ grows. Nevertheless, our main application relies on the bounds for small values of $\ell$ (constant or at most $\log^2 n$).

\paragraph{Noisy Decision Trees.}
We also investigate the Fourier spectrum of noisy decision trees.
Noisy decision trees are a different generalization of the standard model; here in each internal node $v$ we query a noisy version of an input bit, that equals the true bit with probability $(1+\gamma_v)/2$. Any such query costs $\gamma_v^2$.  We say that a noisy decision tree has cost at most $d$ if the total cost in any root-to-leaf path is at most $d$.
Recent work studied this model and established connections to the question of how randomized decision tree complexity behaves under composition \cite{DBLP:conf/focs/Ben-DavidB20}. 

We prove level-$\ell$ bounds for any noisy decision tree of cost at most $d$. See \Cref{thm:level-l_noisy_DT} for a precise statement.
\begin{theorem}[Informal]\label{thm:noisy-informal}
Let $\Tcal$ be a noisy decision tree of cost at most $d$ on $n$ variables. Then the sum of absolute Fourier coefficients at level $\ell$ is bounded by  $O(d)^{\ell/2}\cdot \pbra{\ell\cdot\log(n)}^{(\ell-1)/2}$.
\end{theorem}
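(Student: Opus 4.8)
The plan is to run the random-walk argument behind \Cref{thm:level_1} and \Cref{thm:level_l}, now for noisy decision trees, where it should in fact be cleaner: a noisy query reveals information about a single coordinate rather than an arbitrary parity, which is what lets us avoid the extra $\log n$ factors and reach the shape $O(d)^{\ell/2}(\ell\log n)^{(\ell-1)/2}$ of the (randomized) decision-tree bound of \cite{DBLP:journals/eccc/SherstovSW20}.

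\textbf{Reduction and a path identity.} Let $f(x)=\E[\Tcal(x)]\in[-1,1]$ be the function computed by $\Tcal$, and fix worst-case signs $\sigma_S\in\binpm$, so that
\[
L_{1,\ell}(f)=\sum_{|S|=\ell}\sigma_S\hat f(S)=\E_x\E_{\mathrm{noise}}\Bigl[\mathsf{val}(\mathrm{leaf}_x)\cdot\textstyle\sum_{|S|=\ell}\sigma_S\chi_S(x)\Bigr].
\]
Run $\Tcal$ on a uniformly random $x$, and let $P$ be the root-to-leaf path taken (random over $x$ and the internal noise), with queried coordinates $i_1,\dots,i_M$, noisy answers $b_1,\dots,b_M\in\bin$, and bias parameters $\gamma_1,\dots,\gamma_M$ satisfying $\sum_j\gamma_j^2\le d$. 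Conditioning on $P$ fixes the leaf value, and because the noisy answers decouple the true input bits across distinct coordinates, $\E[\chi_S(x)\,|\,P]=\prod_{i\in S}\mu_i^P$, where $\mu_i^P:=\E[(-1)^{x_i}\,|\,P]\in[-1,1]$ is $0$ when $i$ is never queried on $P$, equals $(-1)^{b_j}\gamma_j$ when $i$ is queried exactly once (at step $j$), and in general is an explicit bounded function of the answers to the queries of $i$ along $P$. Hence
\[
L_{1,\ell}(f)\le\E_P\Bigl[\Bigl|\sum_{|S|=\ell:\ S\subseteq\{i_1,\dots,i_M\}}\sigma_S\prod_{i\in S}\mu_i^P\Bigr|\Bigr].
\]

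\textbf{The inductive random walk.} Index time by the step $m$ along $P$ and let $W^{(\ell)}_m$ be the partial sum obtained by restricting the inner sum above to sets $S\subseteq\{i_1,\dots,i_m\}$, with the $\mu$'s evaluated from the first $m$ answers. Then $(W^{(\ell)}_m)_m$ is a martingale with $W^{(\ell)}_0=0$: its increment at step $m$ equals $\delta_m\cdot Z^{(\ell-1)}_{m-1}$, where $\delta_m:=\mu^{P_{\le m}}_{i_m}-\mu^{P_{\le m-1}}_{i_m}$ is the one-step change of the posterior mean of $x_{i_m}$ (a martingale difference with $|\delta_m|=O(\gamma_m)$ and conditional variance $O(\gamma_m^2)$, regardless of how often $i_m$ has been queried), and $Z^{(\ell-1)}_{m-1}$ is a level-$(\ell-1)$ object of exactly the same type, built from steps $1,\dots,m-1$ with the sign pattern $S'\mapsto\sigma_{S'\cup\{i_m\}}$. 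Thus the level-$\ell$ walk is driven by running values of level-$(\ell-1)$ walks, which is the inductive hook: if for every $k<\ell$ every level-$k$ walk stays within $B_k$ along every root-to-leaf path, then $W^{(\ell)}$ has increments at most $O(\gamma_m)B_{\ell-1}$ and total conditional variance $\sum_m O(\gamma_m^2)(Z^{(\ell-1)}_{m-1})^2\le O(d)\,B_{\ell-1}^2$, so $\E|W^{(\ell)}_M|\le O(\sqrt d)\,B_{\ell-1}$, while a Freedman/Azuma tail bound, union-bounded over the $\le n^\ell$ relevant sets, gives $|W^{(\ell)}_M|\le O(\sqrt{d\,\ell\log n})\,B_{\ell-1}$ with high probability. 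Taking $B_k:=O(\sqrt{d\,\ell\log n})\,B_{k-1}$ with $B_0=1$, so that $B_k=O(d\,\ell\log n)^{k/2}$, we conclude $L_{1,\ell}(f)\le\E|W^{(\ell)}_M|\le O(\sqrt d)\,B_{\ell-1}=O(d)^{\ell/2}(\ell\log n)^{(\ell-1)/2}$; the base case $\ell=1$ is immediate because $W^{(1)}_M=\sum_j\sigma_{\{i_j\}}\delta_j$ is a sum of martingale differences with $\sum_j\E[\delta_j^2]\le O(d)$.

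\textbf{Cleanup, and the main obstacle.} The martingale estimates above are vacuous unless the low-level walks really are bounded along every path, which an arbitrary $\Tcal$ need not satisfy; so, as in the parity-decision-tree proof, one preprocesses $\Tcal$ by turning a node into a leaf as soon as some running level-$k$ value ($k<\ell$) would exceed $B_k$, and argues, via the $L_{1,k}$ bounds for $k\le\ell-1$ together with the high-probability tails on the low-level walks, that this truncation changes $\E_P[\,|\cdots|\,]$ by at most a constant factor. I expect the genuinely new difficulty, relative to standard decision trees, to be repeated queries of the same coordinate along a path: there the clean ``increment $=(-1)^{b_m}\gamma_m\cdot(\text{level }\ell{-}1\text{ object})$'' identity fails, and one must instead phrase the increments through the posterior-mean differences $\delta_m$ as above, checking that they are still martingale differences controlled by the cost budget $\sum_j\gamma_j^2\le d$ even when a single $\mu_i^P$ aggregates many queries and approaches $\pm1$ in magnitude. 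Calibrating the cleanup thresholds $B_k$ so that truncation is provably harmless simultaneously at all levels $k\le\ell$ is the other point requiring care.
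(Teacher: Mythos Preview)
Your proposal is essentially the paper's proof: posterior means $\mu_i^P$ play the role of $\vbm_j^{(i)}$, the increment factors as (posterior-mean change)$\cdot$(level-$(\ell{-}1)$ object), and one inducts on the level using Azuma with a union bound over the anchor sets $T$. Your worry about repeated queries is exactly the content of \Cref{clm:noisedelta}: the update $\vbm_{q}^{(i+1)}-\vbm_q^{(i)}$ is mean-zero with magnitude at most $2|\gamma|$ no matter how many times $q$ has been queried before, so there is no obstacle there.

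Where you diverge is the ``cleanup'' step, and here you have misread the PDT argument. The cleanup in \Cref{sec:clean} is not a truncation that caps the walks; it is an algebraic preprocessing that inserts extra single-bit queries so that $\widehat{\Pcal_v}(S)=\prod_{j\in S}\widehat{\Pcal_v}(j)$ for small $S$. For noisy decision trees this identity is automatic because $\Pcal_v$ is always a product distribution, so no cleanup of any kind is needed. The paper instead handles ``low-level walks are bounded only with high probability'' via \Cref{lem:adaptive_azuma}: that lemma allows the step-size (and drift) bounds to fail with probability $\eta$ and simply adds $\eta$ to the tail bound, by internally freezing the martingale the first time the bounds are violated. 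This replaces your external tree-truncation entirely and avoids having to argue that truncation changes $L_{1,\ell}$ by little. For the final step, rather than switching to an expectation bound at the top level, the paper notes that at level $t=\ell$ there is only the single set $T=\emptyset$, so the union-bound factor is $\log(1/\eps)$ rather than $\log(n^\ell/\eps)$; integrating the resulting tail bound over $\eps$ (as in the proof of \Cref{thm:level-l_noisy_DT}) recovers exactly the shape $O(d)^{\ell/2}(\ell\log n)^{(\ell-1)/2}$ you are aiming for.
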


\paragraph{Extension to Randomized Query Models.}
It is simple to verify that if $f$ is a convex combination of Boolean functions $f_1, \ldots, f_m$ each with $L_{1,\ell}(f_i) \le t_{\ell}$ then also $f$ satisfy $L_{1,\ell}(f) \le t_{\ell}$.
Thus, if we take a distribution over PDTs of depth $d$ (resp., noisy decision trees of cost $d$) we get the same bounds on their $L_{1,\ell}$ as those in \Cref{thm:informal} (resp., \Cref{thm:noisy-informal}). This is captured in the following corollary.

\begin{corollary}\label{cor:RPDT}
	Let $\Tcal$ be a randomized parity decision tree of depth at most $d$ on $n$ variables.
	Then, $$\forall \ell \in [n]: L_{1,\ell}(\Tcal) \le d^{\ell/2} \cdot O(\ell \cdot \log(n))^\ell.$$
	
	Let $\Tcal'$ be a randomized noisy decision tree of cost at most $d$ on $n$ variables.
	Then, $$\forall \ell \in [n]: L_{1,\ell}(\Tcal') \le O(d)^{\ell/2} \cdot (\ell \cdot \log(n))^{(\ell-1)/2}.$$
\end{corollary}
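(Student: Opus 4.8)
The plan is to derive \Cref{cor:RPDT} directly from the deterministic bounds \Cref{thm:informal} and \Cref{thm:noisy-informal} (in their precise forms, which also track the acceptance probability) using nothing more than linearity of the Fourier transform and the triangle inequality, exactly as sketched in the paragraph preceding the corollary. First I would fix the model: a randomized parity decision tree $\Tcal$ of depth at most $d$ is a distribution $\Dcal$ supported on deterministic parity decision trees $T$, each of depth at most $d$ on the same $n$ variables, and the object whose Fourier spectrum we study is the real-valued function $\Tcal(x) = \Pr_{T \sim \Dcal}[T \text{ accepts } x] = \E_{T \sim \Dcal}[T(x)]$, identifying leaf labels with $\bin$. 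Since there are only finitely many parity decision trees of depth at most $d$ on $n$ variables, we may assume $\Dcal$ has finite support, so no measure-theoretic subtleties arise.

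The computation is short. For every $S \subseteq [n]$, linearity of the Fourier transform gives $\hat{\Tcal}(S) = \E_{T \sim \Dcal}[\hat T(S)]$. Summing absolute values over all $S$ of size $\ell$ and moving the expectation outside via $\abs{\E[X]} \le \E[\abs X]$,
\[
L_{1,\ell}(\Tcal) \;=\; \sum_{\abs S = \ell} \abs{\E_{T \sim \Dcal}\sbra{\hat T(S)}} \;\le\; \sum_{\abs S = \ell} \E_{T \sim \Dcal}\sbra{\abs{\hat T(S)}} \;=\; \E_{T \sim \Dcal}\sbra{L_{1,\ell}(T)}.
\]
By \Cref{thm:informal}, every $T$ in the support of $\Dcal$ satisfies $L_{1,\ell}(T) \le d^{\ell/2}\cdot O(\ell\log n)^\ell$, and since this bound does not depend on $T$ the right-hand side is bounded by the same quantity. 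This proves the first inequality of the corollary.

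For the second inequality I would run the identical argument with $\Dcal$ now a distribution over noisy decision trees of cost at most $d$ and with \Cref{thm:noisy-informal} in place of \Cref{thm:informal}, yielding $L_{1,\ell}(\Tcal') \le \E_{T \sim \Dcal}[L_{1,\ell}(T)] \le O(d)^{\ell/2}\cdot(\ell\log n)^{(\ell-1)/2}$. There is essentially no obstacle here: the corollary uses only that Fourier coefficients are linear in the function, that $\abs{\E[X]}\le\E[\abs X]$, and that the deterministic bounds hold uniformly over the support of $\Dcal$; all the real work is contained in the deterministic theorems established earlier. The only point worth a sentence in the write-up is the reduction to finitely supported $\Dcal$, which is immediate.
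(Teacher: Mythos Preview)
Your proposal is correct and matches the paper's own justification, which is simply the one-sentence observation preceding the corollary that $L_{1,\ell}$ is subadditive under convex combinations (via linearity of Fourier coefficients and the triangle inequality), together with the deterministic bounds of \Cref{thm:informal} and \Cref{thm:noisy-informal}. The only remark I would drop is the reduction to finitely supported $\Dcal$: it is not literally available for noisy decision trees (the correlation parameters are continuous), but it is also unnecessary, since the sum over $S$ with $|S|=\ell$ is finite and $|\E[X]|\le\E[|X|]$ needs no finiteness assumption.
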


\subsection{Applications}\label{sec:applications}

\paragraph{Quantum versus Randomized Query Complexity.}
Let $k \le \log(n)$. Bansal and Sinha~\cite{DBLP:journals/eccc/BansalS20} gave a $\lceil k/2\rceil$ versus $\tilde{\Omega}\pbra{n^{1-1/k}}$ separation between the quantum and randomized query complexity of $k$-fold Forrelation (defined by \cite{DBLP:journals/siamcomp/AaronsonA18}).
For our purposes just think of $k$-fold Forrelation as a partial Boolean function on $n$ input bits.
Our main application is an extension of Bansal and Sinha's lower bound for the model of randomized parity decision trees.
This follows from their main technical result and \Cref{thm:informal}.
\begin{theorem}[Restatement of \protect{\cite[Theorem~3.2]{DBLP:journals/eccc/BansalS20}}]\label{thm:BS}
Let $f\colon\bin^{n} \to [0,1]$ such that $f$ and all its restrictions satisfy $L_{1,\ell}(f) \le t^{\ell}$ for $\ell = \{k, \ldots, k(k-1)\}$.
Let $\delta = 2^{-5k}$. 
Suppose $f$ is $\delta$-close to the value of $k$-fold Forrelation of $x$ for all $x$ on which $k$-fold Forrelation is defined.
Then, $t \ge \Omega\left(\frac{n^{(1-1/k)/2}}{k^{15}}\right)$.
\end{theorem}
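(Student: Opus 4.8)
Since \Cref{thm:BS} is a verbatim restatement of Theorem~3.2 of \cite{DBLP:journals/eccc/BansalS20}, the plan is simply to invoke their proof; for completeness I sketch the argument. The $k$-fold Forrelation function $\Phi_k$ is engineered to distinguish two explicit distributions on $\bin^n$: a \emph{Forrelated} distribution $\Dcal$ (a discretization of the natural correlated-Gaussian distribution under which $\Phi_k$ is close to $1$) and the uniform distribution $\Ucal$ (under which $\Phi_k$ is close to $0$). The first step is to deduce from $\delta$-closeness with $\delta = 2^{-5k} < 1/4$ that
\[
\E_{x\sim\Dcal}[f(x)] - \E_{x\sim\Ucal}[f(x)] \;\ge\; \Omega(1),
\]
where one also absorbs the tiny mass that $\Dcal$ and $\Ucal$ place outside the promise set of $\Phi_k$; the hypothesis that \emph{all restrictions} of $f$ obey the Fourier-growth bound --- harmless for PDTs, whose restrictions are again shallow PDTs --- is precisely what lets the conditioning steps in this (and in the Gaussian) argument go through.

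Next I would Fourier-expand the advantage. Since $\E_{x\sim\Ucal}[\chi_S(x)] = 0$ for every nonempty $S$,
\[
\E_{x\sim\Dcal}[f(x)] - \E_{x\sim\Ucal}[f(x)] \;=\; \sum_{\emptyset\ne S\subseteq[n]} \hat f(S)\cdot \E_{x\sim\Dcal}\sbra{\chi_S(x)}.
\]
The technical heart of \cite{DBLP:journals/eccc/BansalS20} is a sharp control of the Fourier moments $\E_{x\sim\Dcal}[\chi_S(x)]$ of the Forrelated distribution: they vanish for all $1 \le |S| < k$ (so the absence of hypotheses below level $k$ does no harm), while for $|S| \ge k$ one has $\abs{\E_{x\sim\Dcal}[\chi_S(x)]} \le \poly(k)^{|S|}\cdot n^{-|S|(1-1/k)/2}$ --- the exponent $n^{-|S|(1-1/k)/2}$ reflecting the $k-1$ layers of $\tfrac{1}{\sqrt n}$-normalized Hadamard transforms in $\Phi_k$ --- which in addition makes the contribution of all levels above $k(k-1)$ negligible. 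Grouping the sum by level $\ell$, applying the triangle inequality, and inserting $L_{1,\ell}(f)\le t^\ell$ then gives
\begin{align*}
\Omega(1) &\;\le\; \sum_{\ell=k}^{k(k-1)} L_{1,\ell}(f)\cdot \max_{|S|=\ell}\abs{\E_{x\sim\Dcal}[\chi_S(x)]} + o(1) \\
&\;\le\; \sum_{\ell=k}^{k(k-1)} \pbra{\,t\cdot\poly(k)\cdot n^{-(1-1/k)/2}\,}^{\ell} + o(1).
\end{align*}
If the ratio $\beta := t\cdot\poly(k)\cdot n^{-(1-1/k)/2}$ were at most $\tfrac12$, the right-hand side would be at most $2\cdot 2^{-k} + o(1) = o(1)$ for $k$ above a constant, contradicting the lower bound $\Omega(1)$; hence $\beta \ge \tfrac12$, i.e.\ $t \ge \Omega\!\pbra{n^{(1-1/k)/2}/\poly(k)}$, and following the exact powers of $k$ through the moment bound produces the stated $k^{15}$.

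The main obstacle --- the part genuinely carried out in \cite{DBLP:journals/eccc/BansalS20} rather than by us --- is the middle step: proving the per-level moment bound for the Forrelated distribution, together with the exact vanishing below level $k$ and the rapid decay above level $k(k-1)$. This combines a Wick/Gaussian-moment computation for the underlying correlated Gaussian, a comparison bounding the moments of the sign-rounded distribution by those of the Gaussian (up to a constant per index, via hypercontractive-type estimates), and a combinatorial accounting of which index multisets survive $k-1$ composed Hadamard layers. We use the statement as a black box; our only remaining task is to verify, via \Cref{thm:informal} and \Cref{cor:RPDT}, that a depth-$d$ randomized parity decision tree yields an $f$ with $L_{1,\ell}(f) \le t^\ell$ for $t = \sqrt d\cdot\polylog(n)$ throughout the window $\ell \in \{k,\dots,k(k-1)\}$ (using $k \le \log n$), so that \Cref{thm:BS} converts into the desired $\tilde\Omega\pbra{n^{1-1/k}}$ lower bound on the depth.
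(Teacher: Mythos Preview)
The paper does not prove \Cref{thm:BS} at all: it is stated as a restatement of \cite[Theorem~3.2]{DBLP:journals/eccc/BansalS20} and used purely as a black box in the subsequent corollary. Your proposal correctly identifies this and offers a reasonable high-level sketch of the Bansal--Sinha argument, so there is nothing in the present paper to compare your sketch against.
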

\begin{corollary}
If $\Tcal$ is a randomized parity decision tree of depth $d$ computing $k$-fold Forrelation with success probability $\frac{1}{2}+\gamma$, then $d \ge {\gamma^2}\cdot \frac{n^{1-1/k}}{\poly(k) \log^2 n}.$
\end{corollary}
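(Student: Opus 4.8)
The plan is to feed \Cref{cor:RPDT} into \Cref{thm:BS}, after a standard amplification step to upgrade ``success probability $\tfrac12+\gamma$'' into ``$\delta$-closeness'' with $\delta = 2^{-5k}$. Starting from an RPDT $\Tcal$ of depth $d$ that computes $k$-fold Forrelation with probability $\tfrac12+\gamma$ on every input where it is defined, I would take $m = \Theta\!\pbra{\log(1/\delta)/\gamma^2} = \Theta\!\pbra{k/\gamma^2}$ independent copies of $\Tcal$, run them one after another, and output the majority vote. This is a distribution over deterministic parity decision trees, each of depth at most $d' \triangleq m d = O\!\pbra{kd/\gamma^2}$, i.e.\ a randomized PDT of depth $d'$; call it $\Tcal'$ and let $f' \colon \bin^n \to [0,1]$ be its acceptance probability function. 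By a Chernoff bound, on every input $x$ where $k$-fold Forrelation is defined we have $\Pr[\Tcal'(x) \ne \mathrm{Forr}_k(x)] \le \delta$, hence $\abs{f'(x) - \mathrm{Forr}_k(x)} \le \delta$; that is, $f'$ is $\delta$-close to the value of $k$-fold Forrelation wherever the latter is defined.

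Next I would check that $f'$ satisfies the Fourier hypotheses of \Cref{thm:BS} on all levels $\ell \in \{k, \ldots, k(k-1)\}$, and that this survives restriction. Any restriction of a parity decision tree is again a parity decision tree of no larger depth (fixing some coordinates turns each queried linear form into an affine form in the free variables, and nodes that become constant are contracted), so every restriction of $\Tcal'$ is a randomized PDT of depth at most $d'$. Thus \Cref{cor:RPDT} applies uniformly to $f'$ and all its restrictions: for every $\ell$,
\[
L_{1,\ell}(f') \le (d')^{\ell/2}\cdot O(\ell\log n)^{\ell} \le \pbra{\,O(\ell\log n)\cdot\sqrt{d'}\,}^{\ell}.
\]
Since $k \le \log n$ we have $\ell \le k(k-1) \le k^2 \le \log^2 n$, so setting $t \triangleq O\!\pbra{k^2\log n}\cdot\sqrt{d'}$ we get $L_{1,\ell}(f') \le t^{\ell}$ for all $\ell \in \{k,\ldots,k(k-1)\}$, and likewise for every restriction of $f'$. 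So the hypotheses of \Cref{thm:BS} are met.

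Finally I would invoke \Cref{thm:BS}, which gives $t \ge \Omega\!\pbra{n^{(1-1/k)/2}/k^{15}}$. Substituting $t = O(k^2\log n)\sqrt{d'}$ and squaring yields $d' \ge \Omega\!\pbra{n^{1-1/k}/(k^{34}\log^2 n)}$, and then $d = \Omega\!\pbra{\gamma^2 d'/k} \ge \gamma^2\cdot \frac{n^{1-1/k}}{\poly(k)\log^2 n}$, as claimed (the statement is vacuous when this bound drops below $1$, so we may assume it exceeds $1$). I do not anticipate a genuine obstacle here: the only points needing care are the amplification bookkeeping (to convert $\gamma$-advantage into $\delta$-closeness while tracking the depth blow-up) and the observation that restrictions of parity decision trees remain parity decision trees of no larger depth, so that the ``all restrictions'' clause of \Cref{thm:BS} is automatically satisfied.
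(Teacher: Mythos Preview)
Your proposal is correct and follows essentially the same approach as the paper: amplify to error $2^{-5k}$ via $O(k/\gamma^2)$-fold majority (depth blow-up $d' = O(kd/\gamma^2)$), apply \Cref{cor:RPDT} with $\ell \le k(k-1)$ to get $t = O(k^2 \log n)\sqrt{d'}$, note that restrictions of RPDTs are RPDTs of no larger depth, and plug into \Cref{thm:BS}. The bookkeeping and constants match the paper's proof almost line-for-line.
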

\begin{proof}
We can amplify the success probability of the randomized parity decision tree from $1/2 + \gamma$ to $1- 2^{-5k}$  by repeating the query algorithm $O(k/\gamma^2)$ times independently and taking majority. This results in a randomized parity decision tree $\Tcal'$ of depth $d' = O(d \cdot k/\gamma^2)$.
Now, \Cref{cor:RPDT} gives $L_{1,\ell}(\Tcal') \le (d')^{\ell/2} \cdot \allowbreak O(\ell \cdot \log(n))^\ell$ for all $\ell$.
In particular, 
$L_{1,\ell}(\Tcal') \le t^{\ell}$ for all $\ell \le k(k-1)$
where $t = O\pbra{\sqrt{d'} \cdot k(k-1) \cdot \log(n)}$.
This is also true for any restriction of $\Tcal'$, since fixing variables to constants yields another randomized parity decision tree of depth at most $d'$.
Combining the bounds on $L_{1,\ell}(\Tcal')$ for $\ell\in \{k, \ldots, k(k-1)\}$ with Theorem~\ref{thm:BS} gives 
$d' \ge \frac{n^{1-1/k}}{O(k^{34})  \cdot \log^2(n)}$ and thus $d \ge  {\gamma^2}\cdot \frac{n^{1-1/k}}{O(k^{35})  \cdot \log^2(n)}$.
\end{proof}

For constant $k$ and $\gamma = 2^{-O(k)}$, we get a $\lceil{k/2\rceil}$ versus $\tilde{\Omega}\pbra{n^{1-1/k}}$ separation between the quantum query complexity and the randomized parity query complexity of $k$-fold Forrelation. We remark that separations in the reverse direction are also known: for the $n$-bit parity function, the (randomized) parity query complexity is $1$ whereas the quantum query complexity is $\Omega(n)$ \cite{DBLP:journals/tcs/MontanaroNR11}.

Similarly, we can obtain the following corollary for noisy decision trees.
\begin{corollary}
If $\Tcal$ is a randomized noisy decision tree of cost at most $d$ computing $k$-fold Forrelation with success probability $\frac{1}{2}+\gamma$, then $d \ge \gamma^2\cdot \frac{n^{1-1/k}}{\poly(k) \log(n)}.$
\end{corollary}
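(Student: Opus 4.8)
The plan is to repeat, essentially verbatim, the argument proving the corollary for randomized parity decision trees, substituting the noisy decision tree bound from \Cref{cor:RPDT} for the PDT bound. First I would boost the success probability: running the given randomized noisy decision tree $O(k/\gamma^2)$ times independently and taking the majority vote produces a randomized noisy decision tree $\Tcal'$ of cost $d' = O(d\cdot k/\gamma^2)$ --- costs simply add when one concatenates trees along root-to-leaf paths --- with success probability at least $1-2^{-5k}$ by a Chernoff bound. Equivalently, the acceptance-probability function of $\Tcal'$ is $2^{-5k}$-close to $k$-fold Forrelation wherever the latter is defined, which is exactly the closeness hypothesis of \Cref{thm:BS} with $\delta = 2^{-5k}$.

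Next I would invoke the noisy part of \Cref{cor:RPDT}, giving $L_{1,\ell}(\Tcal') \le O(d')^{\ell/2}\cdot(\ell\log n)^{(\ell-1)/2}$ for every $\ell$. Bounding $(\ell\log n)^{(\ell-1)/2}\le(\ell\log n)^{\ell/2}$ and taking the worst case over $\ell\in\{k,\ldots,k(k-1)\}$ (so that $\ell \le k^2$), this is at most $t^{\ell}$ for the single value $t = O\!\pbra{k\sqrt{d'\log n}}$. The same bound holds for every restriction of $\Tcal'$, since fixing input variables to constants yields another randomized noisy decision tree of cost at most $d'$; hence the full hypothesis of \Cref{thm:BS} is met.

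Finally, \Cref{thm:BS} yields $t \ge \Omega\!\pbra{n^{(1-1/k)/2}/k^{15}}$, so $k\sqrt{d'\log n} \ge \Omega\!\pbra{n^{(1-1/k)/2}/k^{15}}$, whence $d'\log n \ge \Omega\!\pbra{n^{1-1/k}/k^{32}}$ and therefore $d' \ge \Omega\!\pbra{n^{1-1/k}/(k^{32}\log n)}$. Substituting $d' = O(dk/\gamma^2)$ and rearranging gives $d \ge \gamma^2\cdot n^{1-1/k}/(\poly(k)\log n)$, as claimed.

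I do not expect any genuine obstacle: this is a routine adaptation of the parity decision tree corollary. The only points worth flagging are that the exponent $(\ell-1)/2$, rather than $\ell$, in the noisy $L_{1,\ell}$ bound is precisely what lets the effective $t$ scale like $k\sqrt{d'\log n}$ instead of $\sqrt{d'}\cdot k^2\log n$, saving one factor of $\log n$ in the denominator relative to the parity decision tree case; and that one should keep careful track of the $\poly(k)$ factors accumulated through amplification and through \Cref{thm:BS}.
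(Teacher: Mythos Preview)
Your proposal is correct and follows precisely the approach the paper intends: the paper does not spell out a separate proof for this corollary, writing only ``Similarly, we can obtain the following corollary for noisy decision trees,'' and your argument is exactly the analogue of the parity decision tree case with the noisy bound from \Cref{cor:RPDT} plugged in. Your observation that the exponent $(\ell-1)/2$ rather than $\ell$ accounts for the single $\log n$ in the denominator (versus $\log^2 n$ in the PDT case) is the one point of difference, and you have handled it correctly.
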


\paragraph{Towards Communication Complexity Lower Bounds.}
We recall an open question from \cite{DBLP:conf/innovations/GirishRT21}, which, if true, would demonstrate that the randomized communication complexity of the Forrelation problem composed with the XOR gadget is $\tilde\Omega(n^{1/2})$. The \emph{simultaneous} quantum communication complexity of this problem is $O(\polylog(n))$ and the best known randomized lower bound is  $\tilde\Omega(n^{1/4})$ due to~\cite{DBLP:conf/innovations/GirishRT21}.  
\begin{conjecture}\label{conj:GRT}
Let $f\colon \bin^n \times \bin^n \to \bin$ computed by a deterministic communication protocol of cost at most $c$.
Let $h: \bin^n \to [0,1]$ defined by $h(z) = \E_x[f(x, x\oplus z)]$.
Then, $L_{1,2}(h) \le c \cdot \polylog(n)$.
\end{conjecture}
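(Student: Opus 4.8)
\emph{Proof proposal.} The $\Lcal_1(O(c))$ bound of \cite{DBLP:conf/innovations/GirishRT21} already gives $L_{1,2}(h)\le O(c^2)$, so \Cref{conj:GRT} asks to shave this quadratic loss to linear at the price of a $\polylog(n)$ factor; the plan is to attack the $\ell=2$ case head on with a random walk over the protocol tree, in the spirit of the analysis this paper develops for parity decision trees, using the extra slack that we need only a single small level. \emph{Reduction to a signed rectangle sum.} A deterministic cost-$c$ protocol partitions $\bin^n\times\bin^n$ into monochromatic rectangles $\{A_v\times B_v\}_v$ over leaves $v$, with $\sum_v\mu(A_v)\mu(B_v)=1$ and leaf value $\beta_v\in\bin$. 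Substituting $y=x\oplus z$ and expanding in Fourier gives $\widehat h(S)=\widehat f(S,S)=\sum_{v:\beta_v=1}\widehat{\indicator_{A_v}}(S)\widehat{\indicator_{B_v}}(S)$. Since the rectangles partition $\bin^n\times\bin^n$, the full sum $\sum_v\widehat{\indicator_{A_v}}(S)\widehat{\indicator_{B_v}}(S)$ vanishes for every $S\ne\emptyset$, so for such $S$ one may symmetrize to
\[
\widehat h(S)=\tfrac12\sum_v(2\beta_v-1)\,\widehat{\indicator_{A_v}}(S)\,\widehat{\indicator_{B_v}}(S)=\tfrac12\,\E_{\bm v}\!\left[(2\beta_{\bm v}-1)\,\phi_{A_{\bm v}}(S)\,\phi_{B_{\bm v}}(S)\right],
\]
where $\bm v$ is the random leaf with $\Pr[\bm v=v]=\mu(A_v)\mu(B_v)$ and $\phi_A(S)=\E_x[\chi_S(x)]$ for uniform $x\in A$. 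A crude triangle inequality here costs $\sum_v\sqrt{\mu(A_v)\mu(B_v)}=2^{\Omega(c)}$, so everything rides on the signs $(2\beta_v-1)$ and the cancellation above, exactly as the sign $(-1)^{(\text{leaf value})}$ drives the walk in this paper.

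\emph{Random transcript walk with an induction on levels.} Pad the protocol so the players alternate single-bit messages (doubling $c$) and sample $\bm v$ one transcript bit at a time, each bit drawn according to the relative measure of the sub-rectangle it selects. Along the path $v_0,v_1,\dots,v_c=\bm v$, maintain the biases $X^{(S)}_t=\phi_{A_{v_t}}(S)$ and $Y^{(S)}_t=\phi_{B_{v_t}}(S)$ for all $|S|\le2$. When Alice speaks at step $t$, $Y^{(S)}$ is frozen and $X^{(S)}$ takes one step of a signed random walk whose step size is itself governed by the level-$\le1$ biases of the current Alice rectangle; thus the level-$2$ walk is driven by the level-$\le1$ walks --- the same ``step sizes for the level-$\ell$ walk come from level-$\le\ell-1$'' structure flagged in \Cref{sec:intro} that forces an induction on $\ell$. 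One then bounds $L_{1,2}(h)\le\tfrac12\,\E_{\bm v}\big[\sum_{|S|=2}|X^{(S)}_cY^{(S)}_c|\big]$ and tries to control each factor by a martingale estimate that accrues $O(\log n)$ of variance per message, targeting $O(c\cdot\polylog n)$ overall.

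\emph{Cleanup, and the main obstacle.} Two things must be tamed. First, as in the PDT argument, a lopsided message ($\mu(A_{v_{t+1}})\ll\mu(A_{v_t})$) can make a bias jump, so one preprocesses the protocol --- pruning low-probability branches, forbidding over-revealing messages --- so that along a typical transcript all $X^{(S)}_t,Y^{(S)}_t$ with $|S|\le2$ stay $\polylog(n)$-bounded while the discarded mass contributes $o(1)$ to every level-$2$ coefficient. Second, and this is the real difficulty, one must not pay $\binom n2$ for the sum over $|S|=2$: in a parity decision tree a random root-to-leaf path automatically singles out the $\le d$ parities of $z$ queried on it, but a communication transcript is made of bits of Alice's and Bob's \emph{local} functions and need not localize the level-$2$ Fourier mass of $h$ in the $z$-variable to few directions. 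So the cleanup must not only keep a valid cost-$O(c)$ protocol but also force the surviving rectangles to have $\polylog(n)$-sparse low-level Fourier support; reconciling these two demands without re-introducing a $2^{\Omega(c)}$ factor is the crux, and is precisely what keeps \Cref{conj:GRT} (and, relatedly, the conjectured quadratic improvement of \cite{DBLP:conf/innovations/GirishRT21}'s bound) open. A less structural alternative is to reopen the decomposition argument of \cite{DBLP:conf/innovations/GirishRT21} and try to shrink its per-level loss at $\ell=2$ from a constant factor to $\polylog(n)$; I expect this to run into the same localization barrier.
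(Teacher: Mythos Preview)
The statement you are asked to prove is a \emph{conjecture} in the paper, not a theorem: the paper does not prove \Cref{conj:GRT} and explicitly presents it as open. The only thing the paper establishes toward it is the trivial special case where the protocol simulates a parity decision tree $\Tcal$ on $x\oplus y$, in which case $h=\Tcal$ as functions and the bound $L_{1,2}(h)\le c\cdot\polylog(n)$ is immediate from \Cref{thm:informal}. There is therefore no ``paper's own proof'' to compare your proposal against.

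Your write-up is accordingly not a proof but a research outline, and to your credit you say so: you end by naming the localization obstacle as ``precisely what keeps \Cref{conj:GRT}\ldots open.'' That diagnosis is accurate. The reduction to $\widehat h(S)=\tfrac12\E_{\bm v}[(2\beta_{\bm v}-1)\phi_{A_{\bm v}}(S)\phi_{B_{\bm v}}(S)]$ is correct and standard, and the martingale-on-the-transcript idea is a natural extension of this paper's approach. But the step where the argument would have to go through --- controlling $\sum_{|S|=2}|X^{(S)}_cY^{(S)}_c|$ without paying $\binom{n}{2}$ or $2^{\Omega(c)}$ --- is exactly the missing idea, and nothing in your cleanup sketch (pruning low-probability branches, forbidding over-revealing messages) explains how to force general rectangles to have sparse low-level Fourier support while remaining a valid cost-$O(c)$ protocol. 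So as a proof this has a genuine gap at the crux; as a statement of where the difficulty lies, it is on target and consistent with the paper's own framing of \Cref{conj:GRT} as open.
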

We view \Cref{thm:informal} as a first step towards this conjecture.
Indeed, for communication protocols that follow a parity decision tree strategy according to some tree $\Tcal$, it is simple to verify that $h = \Tcal$ (as functions), and thus 
$L_{1,2}(h) = L_{1,2}(\Tcal) \le c\cdot \polylog(n)$.

We remark that there is a separation of $O(\polylog(n))$ versus $\tilde\Omega(n^{1/2})$ between \emph{simultaneous} quantum communication complexity and   \emph{two-way} randomized communication complexity due to~\cite{gavinsky}. We also know a separation of $O(k\log n)$ versus $\tilde\Omega(n^{1-1/k})$  between \emph{two-way} quantum communication complexity and \emph{two-way} randomized communication complexity. This can be obtained by combining the optimal quantum versus classical query complexity separations of ~\cite{DBLP:journals/eccc/BansalS20} and ~\cite{DBLP:journals/eccc/SherstovSW20} and the query-to-communication lifting theorems \cite{DBLP:conf/icalp/ChattopadhyayFK19} using the inner product gadget.

\paragraph{Application to Expander Random Walk.}
Recently, \cite{DBLP:journals/eccc/CohenPT20} showed that expander random walks fool symmetric functions and also general functions in $\Lcal_1(t)$. To be more precise, assume $f\in\Lcal_1(t)$. Let $G$ be an expander, with second eigenvalue $\lambda \ll \frac{1}{t^4}$, where half of $G$'s vertices are labeled by $0$ and the rest are labeled by $1$. Then the expected value of $f$ on bits sampled by an $(m-1)$-step random walk on $G$ is approximately the value it would get on a uniformly random string in $\bin^m$. Combined with our results, this shows that if $f$ can be computed by low-depth parity decision trees then $f$ can be fooled by the expander random walk.

\paragraph{Fourier Bounds for Small-size Parity Decision Trees.}
By a simple size-to-depth reduction we obtain Fourier bounds for parity decision trees of bounded size.
We defer the simple proof to \Cref{sec:small-size decision trees}.
\begin{corollary}\label{cor:small-size}
Let $\Tcal$ be a parity decision tree of size at most $s>1$ on $n$ variables.
Then, 
$$\forall{\ell \in [n]}: L_{1,\ell}(f) \le (\log(s))^{\ell/2} \cdot O(\ell \cdot \log(n))^{1.5\ell}.$$
\end{corollary}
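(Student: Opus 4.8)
The plan is a straightforward truncation argument: cut the parity decision tree off at a carefully chosen depth, bound the shallow part by \Cref{thm:informal}, and bound the remaining sparsely-supported part by a direct Fourier estimate.

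\textbf{Setup.} First I would assume, without loss of generality, that on every root-to-leaf path $\Tcal$ queries linear forms that are linearly independent over $\Fbb_2$: whenever a query along a path is forced by the earlier answers, exactly one of its two subtrees is reachable, so we may contract that node; this never increases the size and does not change the computed function $f$. After this cleanup, every reachable leaf $\lambda$ corresponds to an affine subspace $A_\lambda\subseteq\bin^n$ of codimension exactly $\depth(\lambda)$. Now fix $\ell\in[n]$ and set $d:=\lceil \ell\log_2 n+\log_2 s\rceil$.

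\textbf{Decomposition.} Split the leaves into shallow ($\depth\le d$) and deep ($\depth>d$), and write $f=g+e$ with $g:=\sum_{\lambda\text{ shallow}} b_\lambda\,\indicator_{A_\lambda}$ and $e:=\sum_{\lambda\text{ deep}} b_\lambda\,\indicator_{A_\lambda}$, where $b_\lambda\in\bin$ are the leaf labels. Since distinct $A_\lambda$ are disjoint, $g$ is exactly the Boolean function computed by the depth-$\le d$ parity decision tree obtained from $\Tcal$ by replacing every internal node at depth $d$ by a $0$-leaf; hence \Cref{thm:informal} gives $L_{1,\ell}(g)\le d^{\ell/2}\cdot O(\ell\log n)^{\ell}$. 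For the error term, the key computation is the Fourier expansion of an affine-subspace indicator: if $A$ has codimension $m$, defined by $\langle x,a^{(i)}\rangle=c_i$ with $a^{(1)},\dots,a^{(m)}$ linearly independent, then $\indicator_A=2^{-m}\sum_{U\subseteq[m]}(\pm1)\,\chi_{\triangle_{i\in U}\mathrm{supp}(a^{(i)})}$, and by linear independence the sets $\triangle_{i\in U}\mathrm{supp}(a^{(i)})$ are pairwise distinct; so at most $\binom n\ell\le n^{\ell}$ of them have size $\ell$, giving $L_{1,\ell}(\indicator_A)\le 2^{-m}n^{\ell}$. Summing over the at most $s$ deep leaves, each of depth $>d$, yields
\[
L_{1,\ell}(e)\ \le\ \sum_{\lambda\text{ deep}} L_{1,\ell}(\indicator_{A_\lambda})\ \le\ n^{\ell}\sum_{\lambda\text{ deep}} 2^{-\depth(\lambda)}\ \le\ n^{\ell}\cdot s\cdot 2^{-d}\ \le\ 1
\]
by the choice of $d$. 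Finally $L_{1,\ell}(f)\le L_{1,\ell}(g)+L_{1,\ell}(e)\le d^{\ell/2} O(\ell\log n)^{\ell}+1$; substituting $d\le \ell\log_2 n+\log_2 s+1$ and using $(u+v)^{\ell/2}\le 2^{\ell/2}(u^{\ell/2}+v^{\ell/2})$ together with $\log s\ge 1$ and $\ell\log n\ge 1$ gives $L_{1,\ell}(f)\le(\log s)^{\ell/2}\cdot O(\ell\log n)^{1.5\ell}$, which is the claimed bound. The size-to-depth reduction \Cref{cor:small-size} is deduced from the depth bound after this observation about $f$.

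\textbf{Main obstacle.} The one non-obvious point — and the source of the extra $(\ell\log n)^{\ell/2}$ factor beyond the naive guess $(\log s)^{\ell/2}O(\ell\log n)^{\ell}$ — is that the truncation error $e$, although supported on only an $s\cdot 2^{-d}$ fraction of inputs, does \emph{not} automatically have small $L_{1,\ell}$; it must be controlled through the per-leaf estimate $L_{1,\ell}(\indicator_{A_\lambda})\le 2^{-\depth(\lambda)}n^{\ell}$, which forces the threshold $d$ to absorb an $\ell\log n$ term that then propagates into the shallow bound. Everything else (the linear-independence cleanup and the final arithmetic) is routine.
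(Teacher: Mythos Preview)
Your proof is correct and follows essentially the same truncation approach as the paper: both choose depth $d=\lceil \log s + \ell\log n\rceil$, bound the truncated tree via \Cref{thm:informal}, and show the remainder contributes at most $1$ to $L_{1,\ell}$. The only cosmetic difference is in how the error term is handled: the paper observes directly that $\Pr_x[\Tcal(x)\neq\Tcal'(x)]\le s\cdot 2^{-d}\le n^{-\ell}$, hence every Fourier coefficient of the difference has magnitude at most $n^{-\ell}$, and sums over $\binom{n}{\ell}\le n^{\ell}$ sets; you instead expand each deep leaf's indicator explicitly and sum the per-leaf $L_{1,\ell}$ bounds. Both routes give the same $+1$ and the same final arithmetic.
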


\subsection{Technical Overview}\label{sec:proof_overview}
For the rest of the paper we consider Boolean functions as functions from $\binpm^n$ to $\bin$. This is for convenience, since most of our calculations become easier under this representation. Observe that under this view, a parity decision tree  queries at each internal node the product $\prod_{i\in S} x_i$ for some $S \subseteq [n]$ and goes left/right depending on whether $\prod_{i\in S} x_i = 1$ or $-1$.

Let $\ell\in \Nbb_+$. For simplicity of notation, we use $\tilde{O}_\eps\pbra{d^m}$ to denote $\pbra{d\cdot \polylog\pbra{n^\ell/\eps}}^m$ for $m,n,d\in\Nbb_+$ and $\eps\in(0,1/2]$. When we omit the subscript $\eps$, it is understood that $\eps=1$. As per this notation, we show a bound of $\tilde{O}\pbra{d^{\ell/2}}$ on the level-$\ell$ Fourier mass of parity decision trees of depth $d$.  We first describe the proof for standard decision trees and then show how to generalize to parity decision trees.
 
\paragraph*{Standard Decision Trees.}  Let $\Tcal$ be a decision tree and for simplicity, assume that every leaf is of depth $d$. Let $v_0,\ldots,v_d$ be a random root-to-leaf path in $\Tcal$ and $\vbm^{(0)},\ldots,\vbm^{(d)}\in\{-1,0,1\}^n$ denote the sequence of partial assignments, i.e., for $j\in[n]$ and $i\in\{0,\ldots,d\}$, let 
\begin{equation} 
\vbm^{(i)}_j=\begin{cases} 
1 & \text{if $x_j$ is fixed to $1$ before reaching $v_i$,} \\ 
-1 & \text{if $x_j$ is fixed to $-1$ before reaching $v_i$,} \\ 
0 &\text{otherwise.}\end{cases} 
\label{def:dt_1}\end{equation}
For $u\in \Rbb^n$, we use $u_S$ to denote $\prod_{j\in S} u_j$. Let $a_S=\sgn\pbra{\widehat{\Tcal}(S)}$ for $|S|=\ell$ and 0 otherwise. Note that 
\begin{equation}\label{eq:dt_2}
\sum_{S:|S|=\ell} \abs{\widehat{\Tcal}(S)} = 
\sum_{S:|S|=\ell} a_S\cdot \widehat{\Tcal}(S) = 
\sum_{S:|S|=\ell} a_S\cdot \E_{v_d}\sbra{ \Tcal(v_d) \vbm^{(d)}_S} 
=\E_{v_d}  \sbra{ \Tcal(v_d)\cdot \pbra{\sum_{S:|S|=\ell} a_S \cdot \vbm^{(d)}_S } }. 
\end{equation}

Thus, to bound $\sum_{S:|S|=\ell} |\widehat{\Tcal}(S)|$ it suffices 
to show that  $\abs{\sum_{S:|S|=\ell}  a_S \cdot \vbm^{(d)}_S}$ is bounded by $\tilde{O}(d^{\ell/2})$ in expectation.
Denote by $X^{(i)}:=\sum_{S:|S|=\ell}  a_S \cdot \vbm^{(i)}_S$ for $i=0,1, \ldots, d$.
We write $X^{(d)}$ as a telescoping sum $X^{(d)} = \sum_{i=1}^{d}\pbra{X^{(i)} - X^{(i-1)}}$. To analyze the difference sequence, observe that in the expression 
$$
X^{(i)} - X^{(i-1)} =
\sum_{S:|S|=\ell} a_S \cdot \pbra{\vbm^{(i)}_S - \vbm^{(i-1)}_S},
$$
if set $S$ contributes to the sum, then $S$ must include the bit queried at the $(i-1)$-th step of the path. Conditioning on $v_0, \ldots, v_{i-1}$, let $x_j$ be the variable queried in $v_{i-1}$, then we have 
 $$
 X^{(i)} - X^{(i-1)} =\sum_{S:|S|=\ell, j \in S} {a_S \cdot \vbm^{(i)}_S} \;=\; x_j \cdot \left( \sum_{S:|S|=\ell, j \in S}a_S \cdot \vbm^{(i-1)}_{S\setminus \{j\}}\right).
 $$
Furthermore, we observe that the sum $\sum_{S:|S|=\ell, j \in S}a_S \cdot \vbm^{(i-1)}_{S\setminus \{j\}}$ is determined by $v_{i-1}$; thus conditioning on $v_0, \ldots, v_{i-1}$ the value of $X^{(i)}-X^{(i-1)}$ is a random coin in $\binpm$ multiplied by some fixed integer. In other words, we get that $X^{(0)}, \ldots, X^{(d)}$ is a martingale with varying step sizes. 

Recall that Azuma's inequality provides concentration bounds for martingales with bounded step sizes, thus now we need to bound $\abs{\sum_{S:|S|=\ell, j \in S}a_S \cdot \vbm^{(i-1)}_{S\setminus \{j\}}}$, which is similar to our initial goal. Put differently, we wish to analyze the sum  
$$
\sum_{S' \subseteq [n]\setminus\{j\}: |S'|=\ell-1} a_{S' \cup \{j\}} \cdot \vbm^{(i-1)}_{S'},
$$ 
which calls for an inductive argument on $\ell$. In addition, since we eventually apply a union bound on all steps, we need to show that $\abs{\sum_{S'} a_{S' \cup \{j\}} \vbm^{(i-1)}_{S'}}$ is bounded with high probability (and not just in expectation).

More generally, to carry an inductive argument we define for any set $T \subseteq [n],|T|\le\ell$ and any $i\in \{0,\ldots, d\}$, the random variable  $$
X_{T}^{(i)}:=\sum_{S\supseteq T :|S|=\ell}  a_S \cdot \vbm^{(i)}_{S\setminus T} = 
\sum_{S' \subseteq \overline{T}: |S'|=\ell-|T|}  a_{S' \cup T} \cdot \vbm^{(i)}_{S'}.
$$
Note that our initial goal was to bound $\abs{X^{(d)}_{\emptyset}} = \abs{X^{(d)}}$, which is analyzed by (reverse) induction on $|T|$ going from larger sets to smaller sets as \Cref{lem:overviewlemma}.

\begin{lemma}\label{lem:overviewlemma}
For all $t\in\{0,\ldots,\ell\}$ and $\eps>0$, the probability that there exist $i\in \{0,\ldots,d\}$ and $T\subseteq[n]$ of size at least $t$ such that $\abs{X_T^{(i)}}\ge \tilde{O}_\eps\pbra{d^{(\ell-t)/2}}$ is at most $\eps\cdot (\ell-t)$.
\end{lemma}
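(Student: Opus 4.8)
The plan is to run a reverse induction on $t$, going from $t=\ell$ down to $t=0$. The base case $t=\ell$ is immediate: for $|T|=\ell$ the only set $S\supseteq T$ with $|S|=\ell$ is $S=T$ itself, so $X_T^{(i)}=a_T\cdot\vbm^{(i)}_{\emptyset}=a_T\in\{-1,0,1\}$, which is bounded by $1=d^{0/2}\le\tilde O_\eps(d^0)$ deterministically; hence the failure probability is $0=\eps\cdot(\ell-\ell)$. For the inductive step, assume the statement holds for $t+1$; I want to prove it for $t$. By a union bound it suffices to show that, conditioned on the event $\Ecal_{t+1}$ that the bound holds at level $t+1$ (which fails with probability at most $\eps\cdot(\ell-t-1)$), the probability that some $i\in\{0,\dots,d\}$ and some $T$ with $|T|\ge t$ witnesses $\abs{X_T^{(i)}}\ge\tilde O_\eps(d^{(\ell-t)/2})$ is at most $\eps$. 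Since sets $T$ with $|T|\ge t+1$ are already handled by $\Ecal_{t+1}$ (with room to spare, as $d^{(\ell-t-1)/2}\le d^{(\ell-t)/2}$ up to the polylog slack hidden in $\tilde O$), I only need to control the sets $T$ with $|T|$ exactly $t$.

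Fix such a $T$ with $|T|=t$. As explained in the overview, the sequence $X_T^{(0)},\dots,X_T^{(d)}$ is a martingale with respect to the filtration generated by $v_0,\dots,v_i$: if $x_j$ is the variable queried at node $v_{i-1}$, then $X_T^{(i)}-X_T^{(i-1)}$ is nonzero only when $j\notin T$ and the contributing $S$ all contain $j$, in which case
\[
X_T^{(i)}-X_T^{(i-1)} \;=\; x_j\cdot\!\!\sum_{\substack{S\supseteq T\cup\{j\}\\ |S|=\ell}}\!\! a_S\cdot\vbm^{(i-1)}_{S\setminus(T\cup\{j\})} \;=\; x_j\cdot X_{T\cup\{j\}}^{(i-1)},
\]
where the multiplier $X_{T\cup\{j\}}^{(i-1)}$ is $v_{i-1}$-measurable and, on the event $\Ecal_{t+1}$, has absolute value at most $\tilde O_\eps(d^{(\ell-t-1)/2})$. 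Thus, after conditioning on $\Ecal_{t+1}$, the step size at step $i$ is bounded (in absolute value) by $c_i:=\tilde O_\eps(d^{(\ell-t-1)/2})$, uniformly. Applying Azuma's inequality to this bounded-difference martingale of length $d$ gives
\[
\Pr\!\sbra{\exists i:\ \abs{X_T^{(i)}}\ge \lambda}\;\le\; 2(d+1)\cdot\exp\!\pbra{-\frac{\lambda^2}{2\,d\cdot c_i^2}},
\]
and choosing $\lambda=\tilde O_\eps(d^{(\ell-t)/2})$ with a polylog$(n^\ell/\eps)$ factor absorbed into the $\tilde O_\eps$ makes the right-hand side at most $\eps/n^{\ell}$ (say). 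A union bound over the at most $\binom{n}{t}\le n^\ell$ choices of $T$ with $|T|=t$ (and over the $d+1$ values of $i$, already included above) then gives total failure probability at most $\eps$ on top of $\Ecal_{t+1}$, completing the induction.

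The main obstacle is bookkeeping the polylogarithmic factors so the induction actually closes: each level of the induction multiplies the bound by a $\mathrm{polylog}(n^\ell/\eps)$ factor coming from the Azuma deviation needed to survive the union bound over $\binom{n}{t}$ sets and $d+1$ times, and one must check that after $\ell$ levels the accumulated factor is still of the form $(d\cdot\mathrm{polylog}(n^\ell/\eps))^{\ell/2}$, i.e. that the $\tilde O_\eps(\cdot)$ notation is genuinely closed under the recursion $\lambda_t = \sqrt{d}\cdot\mathrm{polylog}\cdot\lambda_{t+1}$ with $\lambda_\ell=1$. A secondary subtlety is that Azuma requires a uniform almost-sure bound on the increments, which is why the conditioning on $\Ecal_{t+1}$ (rather than merely an in-expectation bound) is essential, and one must be slightly careful that conditioning on the "good" event does not destroy the martingale property — this is handled by noting that $\Ecal_{t+1}$ is determined by the whole path but that the increment bound it provides at step $i$ depends only on $v_{i-1}$, so one can equivalently work with the stopped martingale that freezes as soon as some $X_{T'}^{(i)}$ with $|T'|=t+1$ exceeds its threshold, which leaves the first-exit probability for level $t$ unchanged.
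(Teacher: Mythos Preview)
Your proposal is correct and matches the paper's own argument essentially line by line: reverse induction on $t$, the martingale identity $X_T^{(i)}-X_T^{(i-1)}=x_j\cdot X_{T\cup\{j\}}^{(i-1)}$, Azuma plus a union bound over $T$ and $i$, and the truncation/stopping trick to avoid destroying the martingale property when conditioning on the good event. The paper's footnote flags exactly the same subtlety you raise at the end and resolves it in the same way.
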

The main observation for the proof is that $X^{(0)}_T,X^{(1)}_T,\ldots,X^{(d)}_T$ is a martingale whose difference sequence consists of terms of the form $X^{(i-1)}_{T'}$ where $T\subsetneq T'$. To see this, if we are querying $x_j$ at $v_{i-1}$, then
\[ 
X^{(i)}_T-X^{(i-1)}_T  =  
\begin{cases}
0 & j\in T,\\
x_j\cdot\pbra{ \underset{  j\notin S\subseteq\overline{T}}{\sum} a_{S\cup T\cup \{j\}}\cdot \vbm_{S}^{(i-1)}  }  = x_j \cdot X _{T\cup j}^{(i-1)} & j\notin T. 
\end{cases}
\]
Note that $X_{T\cup j}^{(i-1)}$ depends only on the history until $v_{i-1}$, and $x_j$ is a uniformly random bit independent of this history, thus $X_T^{(i)}$ is a martingale. The inductive hypothesis implies that with at least $1-\eps\cdot (\ell-t-1)$ probability, $\abs{ X_{T\cup j}^{(i-1)}}\le \tilde{O}_\eps\pbra{d^{(\ell-t-1)/2}}$ for all $T$ of size $t$ and $j\in[n]\setminus T$. Whenever this happens, Azuma's inequality implies that\footnote{Technically this is not true, since a martingale after conditioning may not still be a martingale. We handle this by truncating the martingale when a bad event happens instead of conditioning on the good event.} with probability at least $1-\eps/\pbra{d\cdot n^t}$, we have
\[   
\abs{X_T^{(i)}}\le 2\sqrt{ \log(d\cdot n^t/\eps)} \cdot \sqrt{\sum_{i=1}^d  \tilde{O}_\eps\pbra{d^{\ell-t-1}} } = \tilde{O}_\eps\pbra{d^{(\ell-t)/2}}. 
\]
This, along with a union bound over $T$ of size $t$ and $i\in\{0,\ldots,d\}$ completes the inductive step. The Fourier bound for noisy decision trees can be proved using a similar approach.

\paragraph*{Parity Decision Trees.} 
The basic approach is as before. Let $\Tcal$ be a parity decision tree. As in~\Cref{def:dt_1}, we use $v_i$ and $\vbm^{(i)}$ to denote the random walk and the partial assignments to the variables respectively. We say $v_i$ is \emph{$k$-clean} if
 \begin{equation}\label{eq:pdt_1}
 \forall S\subseteq [n],|S|\le k,\quad  
 \vbm^{(i)}_S=\begin{cases} 1 & \text{if $x_S$ is fixed to $1$ before reaching $v_i$,} \\ 
 -1 & \text{if $x_S$ is fixed to $-1$ before reaching $v_i$,} 
 \\ 0 &\text{otherwise.}
 \end{cases}  
 \end{equation}
For~\Cref{eq:dt_2} to be true, we need that at least $v_d$ is $\ell$-clean.
Note that this is not always true,\footnote{For example, let $S=\{1,2\}$ and consider the parity decision tree whose only query is $x_1x_2$. At any leaf, the value of $x_1x_2$ is fixed, however, the values of $x_1$ and $x_2$ are free, hence $S$ violates \Cref{eq:pdt_1}.} but it is useful as it simplifies the study of high-level Fourier coefficients. To address this issue, we define a \emph{cleanup} process for parity decision trees in which we make additional queries  to ensure that certain key nodes are $k$-clean. We do this by recursively cleaning nodes in a top-down fashion so that for every node $v$ in the original tree $\Tcal$, any node $v'$ in the new tree $\Tcal'$ obtained at the end of the cleanup step for $v$ is $k$-clean. 

The cleanup process is simple to describe: Let $v_1,\ldots,v_d$ be any root-to-leaf path in $\Tcal$. Assume we have completed the cleanup process for $v_1,\ldots,v_{i-1}$. We then query the parity at $v_i$. While there exists a (minimal) set $S$ violating \Cref{eq:pdt_1}, we pick and query an arbitrary coordinate in $S$. Once \Cref{eq:pdt_1} is satisfied, we proceed to the cleanup process for $v_{i+1}$. This process increases the depth by a factor of at most $k$. We set $k=\Theta(\ell\cdot \log(n))$ and work with the new tree $\Tcal'$ of depth $D \le k\cdot d$.

Let $v_0,\ldots,v_{D}$ be a random root-to-leaf path in $\Tcal'$ and $I_i,i\in[D]$ be the set of coordinates fixed due to the query at $v_{i-1}$. Note that this set might be of size larger than $1$.\footnote{For example, suppose we query $x_1 x_2,\, x_1 x_3,\, x_1 x_4$ and finally $x_1$. Then, the last query reveals $4$ coordinates.}
It follows from simple linear algebra that $\sum_{i=1}^{D} \abs{I_i}\le D$. Since $v_D$ is $k$-clean, \Cref{eq:dt_2} holds. Defining $X_T^{(i)}$ exactly as before, our goal is to prove \Cref{lem:overviewlemma} with $D$ instead of $d$. 
The proof is still by induction on $\ell-t$. It turns out that $X_T^{(0)},X_T^{(1)},\ldots,X_T^{(D)}$ is no longer a martingale; instead, $X_T^{(i)}-X_T^{(i-1)}=Y_i+  Z_i$  where 
\begin{equation}\label{eq:pdtdiffseq}
Y_i:=\underset{\substack{\emptyset\neq J\subseteq  I_i \cap  \overline{T} \\ |J|\text{ is even}}}{\sum}  x_J \cdot X_{J\cup T}^{(i-1)}
\quad \text{and} \quad   
Z_i:= \underset{\substack{\emptyset\neq J\subseteq I_i \cap \overline{T} \\ |J|\text{ is odd }}}{\sum} x_J \cdot X_{J\cup T}^{(i-1)}.
\end{equation}
and $Z_i$ (resp., $Y_i$) is an odd (resp., even) polynomial of degree at most $\ell$ over the newly fixed variables $\cbra{ x_j \mid j\in I_i}$. Conditioning on $v_{i-1}$, every pair of random bits $(x_j, x_{j'})$ from $\cbra{x_j \mid j \in I_i}$ is either identical $(x_{j} \equiv x_{j'})$ or opposite $(x_{j} \equiv -x_{j'})$, which means $Y_i$ is a constant and $Z_i$ can be written as $z_i \cdot |Z_i|$ where $|Z_i|$ is a constant and $z_i \sim \binpm$.

For now, let us ignore $Y_i$ and assume that we have a martingale $X_T^{(i)}$ such that $X_T^{(i)}-X_T^{(i-1)}=z_i\cdot \abs{Z_i}$, where $z_i\sim \binpm$ is a uniformly random bit independent of $z_0,\ldots,z_{i-1}$ and $\abs{Z_i}$ depends only on $v_{i-1}$. Combined with an adaptive version of Azuma's inequality, we only need to show the sum of squares of step sizes $\sum_{i=1}^D\abs{Z_i}^2$ is $ \tilde{O}_\eps\pbra{D^{\ell -t}}$ to prove $\abs{X_T^{(i)}}=\tilde{O}_\eps\pbra{D^{(\ell-t)/2}}$. 
By the induction hypothesis, with probability at least $1-\eps\cdot (\ell-t-1)$ the coefficients of $Z_i$ are bounded appropriately. Since $\sum_{i=1}^D|I_i|\le D$ and in particular $|I_i|\le D$, we have
\[ \abs{Z_i}\le \sum_{\text{odd }j\ge1} \binom{\abs{I_i}}{j} \cdot \max_{|T'|=j+t}\abs{X_{T'}^{(i-1)}}   \le \underset{  j\ge 1}{\sum^{\ell-t}} \binom{\abs{I_i}}{j}\cdot \tilde{O}_\eps\pbra{{D}^{(\ell-j-t)/2}} =\tilde{O}_\eps\pbra{ \abs{I_i}\cdot  D^{(\ell-t-1)/2}}  \]
and thus $\sum_{i=1}^D \abs{Z_i}^2 \le D^2\cdot \tilde{O}_\eps\pbra{D^{\ell-t-1}}$. This is too loose for our purpose. 

We instead try to bound the sum of squares of step sizes \emph{with high probability}. 
Imagine for now that $v_{i-1}$ is $2$-clean.\footnote{This assumption immediately implies that $\abs{I_i}\le 1$ and trivially proves our inequality, however, this type of reasoning doesn't generalize to the case when $v_{i-1}$ is not $2$-clean.} Then, the variables $\cbra{x_j\mid j\in I_i}$ are $2$-wise independent conditioning on $v_{i-1}$. This gives
\begin{align*}\begin{split}
\E\sbra{ \abs{Z_i}^2\mid v_{i-1}} &\le  \underset{\text{ odd }  j\ge 1}{\sum} \binom{\abs{I_i}}{j}\cdot \max_{|T'|=j+t}\abs{X_{T'}^{(i-1)}}^2  \le \underset{  j\ge 1}{\sum^{\ell-t}} \binom{\abs{I_i}}{j}\cdot \tilde{O}_\eps\pbra{D^{\ell-j-t}} =\tilde{O}_\eps\pbra{ \abs{I_i}\cdot  D^{\ell-t-1}}
\end{split}\end{align*} 
and thus $\E\sbra{\sum_{i=1}^D \abs{Z_i}^2} \le \tilde{O}_\eps\pbra{D^{\ell-t}}$. 
To show this bound holds with high probability, we use concentration properties of degree-$\ell$ polynomials under $k$-wise independent distributions for $k\gg \ell$. 

In the actual proof, we proceed by conditioning on $C(v_{i-1})$, the nearest ancestor of $v_{i-1}$ that is $k$-clean, instead of conditioning on $v_{i-1}$, which allows to remove the assumption that $v_{i-1}$ is $2$-clean. This is because the queries within a cleanup step are non-adaptive, thus $Z_i$ depends only on $C(v_{i-1})$ and not on $v_{i-1}$. 

Meanwhile, although $X_T^{(i)}$ is not quite a martingale sequence (due to $Y_i$) and the step sizes (i.e., $\abs{Z_i}$) are adaptive and not always bounded, we are nonetheless able to prove an adaptive version of Azuma's inequality of the form $\Pr\sbra{\max_{i\in[D]}\abs{X_T^{(i)}} \ge  \mu+ t\cdot \sigma } \le e^{-\Omega\pbra{t^2}} + \eps$ provided $\Pr\sbra{\pbra{\sum_{i=1}^D\abs{Y_i}\le \mu}\land\pbra{\sum_{i=1}^D\abs{Z_i}^2\le \sigma^2}}\ge1-\eps$. Then it suffices to bound $\sum_{i=1}^D\abs{Y_i}$ similarly to $\sum_{i=1}^D\abs{Z_i}^2$ above.

\subsection{Related Work}
We remark that our proof for level-$\ell$ Fourier growth (even when specialized to the case of standard decision trees) differs from the proofs appearing in \cite{DBLP:conf/focs/Tal20} and \cite{DBLP:journals/eccc/SherstovSW20}.
There, the results were based on decompositions of decision trees. We view our martingale approach as natural and intuitive. We wonder if one can obtain the tight results from \cite{DBLP:journals/eccc/SherstovSW20} using this approach. It seems that the main bottleneck is a union bound on events related to all sets $T\subseteq [n]$ of size at most $\ell$.

Our bounds for level-$1$ improve those obtained by \cite{DBLP:journals/corr/BlaisTW15}. 
They prove that $L_{1,1}(\Tcal) \le O(\sqrt{p\cdot d})$ when $p = \Pr_x[\Tcal(x)=1]$, whereas we obtain a bound of $$L_{1,1}(\Tcal) \le O\pbra{p \sqrt{d}\cdot \log(1/p)}.$$ 
In particular, our bound is almost quadratically better for small values of $p$.
It remains open whether the bound can be further improved to $O\pbra{p \sqrt{d\cdot \log(1/p)}}$, which is the optimal bound for standard decision trees.

We remark that our cleanup technique is inspired by \cite{DBLP:journals/corr/BlaisTW15}, which used cleanup to prove their level-$1$ bound. However, our proof strategies and the way we use the cleanup procedure is quite different than that of \cite{DBLP:journals/corr/BlaisTW15}. 

\paragraph*{Organization.} We make formal definitions in \Cref{sec:prelim}. We state and prove the necessary concentration inequalities in \Cref{sec:useful_ineqs}. We present the cleanup process in \Cref{sec:clean}. We present the Fourier bounds for parity decision trees in \Cref{sec:Fourier_bounds} and for noisy decision trees in \Cref{sec:noisy}.

\section{Preliminaries}\label{sec:prelim}

We use $\log(\cdot)$ to denote the logarithm with base $2$.
We use $[n]$ to denote $\cbra{1,2,\ldots,n}$; and $\binom{[n]}k$ (resp., $\binom{[n]}{\le k}$) to denote the set of all size-$k$ (resp., size-at-most-$k$) sets from $[n]$.
If $S$ is a set from universe $U$, then we write $\overline S$ for $U\setminus S$.
We use $\Ucal_n$ to denote the uniform distribution over $\binpm^n$. We use $\sgn(\mathsf{value})\in\cbra{-1,0,1}$ to denote the sign of $\mathsf{value}$, i.e., 
$\sgn(\mathsf{value})$ equals $-1$ if $\mathsf{value}<0$, $1$ if $\mathsf{value}>0$, and $0$ if $\mathsf{value}=0$.

We use $\Fbb_2=\bin$ to denote the binary field, $\Span\abra{\mathsf{vectors}}$ to denote the subspace spanned by $\mathsf{vectors}$ over $\Fbb_2$.
For a distribution $\Dcal$ we use $x\sim\Dcal$ to represent that $x$ is a random variable sampled from $\Dcal$.
For a finite set $\Xcal$ we use $x\sim\Xcal$ to denote that $x$ is a random variable sampled uniformly from $\Xcal$.
We use the standard notion of $k$-wise independent distribution over $\binpm^n$.
\begin{definition}[$k$-wise independence]
A distribution $\Dcal$ over $\binpm^n$ is $k$-wise independent if for $x\sim \Dcal$ and any $k$-indices $1\le i_1<i_2< \ldots <i_k\le n$, the random variables $(x_{i_1}, \ldots, x_{i_k})$ are uniformly distributed over $\binpm^k$.
\end{definition}

\subsection{Boolean Functions}\label{sec:Fourier}

Here we recall definitions in the analysis of Boolean functions (see \cite{DBLP:books/daglib/0033652} for a detailed introduction).
Let $f\colon\binpm^n\to\Rbb$ be any Boolean function.  
For any $p>0$, the $p$-norm of $f$ is defined as $\vabs{f}_p=\pbra{\E_{x\sim\Ucal_n}\sbra{\abs{f(x)}^p}}^{1/p}$.
For any subset $S\subseteq[n]$, $x_S$ denotes $\prod_{i\in S}x_i$ (in particular, $x_\emptyset=1$).
It is a well-known fact that we can uniquely represent $f$ as a linear combination of $\cbra{x_S}_{S\subseteq[n]}$:
$$
f(x)=\sum_{S\subseteq[n]}\hat f(S)x_S,
$$
where the coefficients $\cbra{\hat f(S)}_{S\subseteq[n]}$ are referred to as the \emph{Fourier coefficients} of $f$ and are given by $\hat f(S)=\E_{x\sim\Ucal_n}\sbra{f(x)x_S}$. 
The above representation expresses $f$ as a multilinear polynomial and is called the Fourier representation of $f$.
We say that $f$ is of degree at most $d$ if its Fourier representation is a polynomial of degree at most $d$, i.e., if $\widehat f(S)=0$ for all $S\subseteq[n],|S|>d$.

\subsection{Parity Decision Trees}

Here we formally define parity decision trees (with Boolean outputs).

\begin{definition}[Parity decision tree]
A \emph{parity decision tree} $\Tcal$ is a representation of a Boolean function $f\colon\binpm^n\to\bin$. 
It consists of a rooted binary tree in which each internal node $v$ is labeled by a non-empty set $Q_v\subseteq[n]$, the outgoing edges of each internal node are labeled by $+1$ and $-1$, and the leaves are labeled by $0$ and $1$. 

On input $x\in\binpm^n$, the tree $\Tcal$ constructs a \emph{computation path} $\Pcal$ from the root to a leaf. Specifically, when $\Pcal$ reaches an internal node $v$ we say that $\Tcal$ \emph{queries} $Q_v$; then $\Pcal$ follows the outgoing edge labeled by $\prod_{i\in Q_v}x_i$. We require that $Q_v$ is not implied by its ancestors' queries.
The output of $\Tcal$ (and hence $f$) on input $x$ is the label of the leaf reached by the computation path.
Conversely, we say $x$ is \emph{consistent with} the path $\Pcal$ if $\Pcal$ is the computation path (possibly ending before reaching a leaf) for $x$.
\end{definition}

We make a few more remarks on a parity decision tree $\Tcal\colon\binpm^n\to\bin$.
\begin{itemize}
\item A node $v$ in $\Tcal$ can be either an internal node or a leaf, and we use $\Tcal(v)\in\bin$ to denote the label on $v$ when $v$ is a leaf.
Meanwhile, we use $\Tcal_v$ to denote the sub parity decision tree starting with node $v$.
\item The \emph{depth} of a node is the number of its ancestors (e.g., the root has depth $0$) and the depth of $\Tcal$ is the maximum depth over all its leaves.
\item 
We say that two parity decision trees $\Tcal$ and $\Tcal'$ are \emph{equivalent} (denoted by $\Tcal\equiv\Tcal'$) if they compute the same function.
\end{itemize}


\subsection{Noisy Decision Trees}

\begin{definition}[Noisy oracle]
A noisy query to a bit $b\in\binpm$ with correlation $\gamma\in [-1,1]$ returns a bit $b'\in\binpm$ where 
\[ 
b'= \begin{cases} 
b & \text{ with probability } (1 + \gamma)/2, \\ 
-b & \text{ with probability } (1- \gamma)/2. \end{cases} 
\]
The cost of a noisy query with correlation $\gamma$ is defined to be $\gamma^2$. 
\end{definition}

\begin{definition}[Noisy decision tree]
A \emph{noisy decision tree $\Tcal$} is a rooted binary tree in which each internal node $v$ is labeled by an index $q_v\in [n]$ and a correlation $\gamma_v\in[-1,1]$. The outgoing edges are labeled by $+1$ and $-1$ and the leaves are labeled by 0 and 1. 

On input $x\in\binpm^n$, the tree $\Tcal$ constructs a \emph{computation path} $\Pcal$ from the root to leaf as follows. When $\Pcal$ reaches an internal node $v$, it makes a noisy query to $x_{q_v}$ with correlation $\gamma_v$ and follows the edge labeled by the outcome of this noisy query. The output of the tree is defined by sampling a root-to-leaf path and returning the label of the leaf. Since the computation path $\Pcal$ is probabilistic, this is an inherently randomized model of computation. We use $\Tcal(x)\in\bin$ to denote the (probabilistic) output of $\Tcal$ on input $x$. We also use $\Tcal(v)\in\bin$ to denote the label on $v$ when $v$ is a leaf. We do \emph{not} require that the indices $q_v$ queried along a path $\Pcal$ are distinct. The \emph{cost} of any path is the sum of costs of the noisy queries along that path; and the cost of $\Tcal$ is the maximum cost of any root-to-leaf path.
\end{definition}

We remark that for any noisy decision tree $\Tcal$, 
its Fourier coefficient $\widehat\Tcal(S)$ is given by $\E\sbra{\Tcal(x)x_S}$ where the expectation is over the randomness of both $x\sim \Ucal_n$ and $\Tcal$.

\section{Useful Concentration Inequalities}\label{sec:useful_ineqs}

We describe useful concentration inequalities in this section.

\subsection{Low Degree Polynomials}\label{sec:low_degree_polys}

We use the fact that low degree polynomials satisfy strong concentration properties under $k$-wise independent distributions.
We will find the following hypercontractive inequality useful.

\begin{theorem}[\cite{Bonami70}, see also
{\cite[$(2,q)$-hypercontractivity]{DBLP:books/daglib/0033652}}]\label{thm:hypercontractivity}
Let $f\colon\binpm^n\to\Rbb$ be a degree-$d$ polynomial. 
Then for any $q\ge 2$, we have $\vabs{f}_q\le(q-1)^{d/2}\vabs{f}_2$.
\end{theorem}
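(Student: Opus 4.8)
The plan is to deduce the inequality from the hypercontractivity of the attenuation (noise) operator $T_\rho h := \sum_{S\subseteq[n]}\rho^{|S|}\hat h(S)\,x_S$, specialized to $\rho=1/\sqrt{q-1}$. Concretely, I would first establish the operator bound
\[
\vabs{T_\rho h}_q\le\vabs{h}_2\qquad\text{for every }h\colon\binpm^n\to\Rbb,\ q\ge2,\ \rho=1/\sqrt{q-1},
\]
and then read off the theorem. Given $f$ of degree at most $d$, set $g:=\sum_{S:\hat f(S)\neq0}\rho^{-|S|}\hat f(S)\,x_S$, so that $T_\rho g=f$ and, since $0<\rho\le1$ and every such $S$ has $|S|\le d$,
\[
\vabs{g}_2^2=\sum_S\rho^{-2|S|}\hat f(S)^2\le\rho^{-2d}\sum_S\hat f(S)^2=\rho^{-2d}\vabs{f}_2^2.
\]
Combining these gives $\vabs{f}_q=\vabs{T_\rho g}_q\le\vabs{g}_2\le\rho^{-d}\vabs{f}_2=(q-1)^{d/2}\vabs{f}_2$, which is the claim. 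This last reduction is immediate, so essentially all the content lies in the operator bound.

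For the operator bound, I would follow the classical two-step route: reduce to a single coordinate by tensorization, then prove the one-variable case. For the first step, $T_\rho$ on $\binpm^n$ is the $n$-fold tensor power of the single-coordinate map $a+bx_1\mapsto a+\rho b x_1$, and the property $\vabs{\cdot}_{2\to q}\le1$ is preserved under tensor products when $q\ge2$: peeling off one coordinate at a time, one applies the triangle inequality (Minkowski) in $L^{q/2}$, which is legitimate precisely because $q/2\ge1$. For the second step, after rescaling (the case $b=0$ being trivial) one must show, for $0\le\rho\le1/\sqrt{q-1}$ and all real $a$, the elementary two-point inequality
\[
\tfrac12\abs{a+\rho}^q+\tfrac12\abs{a-\rho}^q\le(1+a^2)^{q/2}.
\]
Here it suffices to treat the extreme value $\rho=1/\sqrt{q-1}$, since the left-hand side is an even convex function of $\rho$, hence nondecreasing for $\rho\ge0$. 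For that value I would verify the inequality by a one-variable calculus argument: split on whether $\abs{\rho}\le\abs{a}$, and in each regime expand both sides as power series (in $\rho/a$, respectively in $a/\rho$) and compare coefficients, where the binding constraint $\rho^2(q-1)\le1$ is exactly what forces the difference to stay nonnegative (equivalently, $\binom{q}{2k}\le\binom{q/2}{k}(q-1)^k$, with equality at $k=0,1$).

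The step I expect to be the main obstacle is making the two-point inequality fully rigorous: for general real $q\ge2$ the generalized binomial coefficients $\binom{q}{j}$ can change sign, so the term-by-term comparison needs a genuine case analysis (or one bypasses it by invoking the sharper ``two-function'' two-point inequality and then this follows as a special case). A secondary point of care is spelling out the Minkowski step in the tensorization on the correct family of slices; this is standard but must be written precisely. Everything else — the passage from the operator inequality to the degree-$d$ statement and the bookkeeping with Fourier coefficients — is routine. As an alternative to the whole argument, one could instead prove the $q=4$ case by a direct second-moment computation and interpolate to general $q$, but the two-point-plus-tensorization route is the cleanest path to arbitrary $q$.
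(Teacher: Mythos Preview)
Your proposal is a correct outline of the standard Bonami--Beckner proof of hypercontractivity: reduce to the single-coordinate two-point inequality, verify that by elementary calculus, and then tensorize via Minkowski in $L^{q/2}$; the deduction of the degree-$d$ statement from the operator bound $\vabs{T_\rho h}_q\le\vabs{h}_2$ at $\rho=1/\sqrt{q-1}$ is exactly right.

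However, note that the paper does not actually prove this theorem. It is stated with citations to Bonami~\cite{Bonami70} and O'Donnell's book~\cite{DBLP:books/daglib/0033652} and then used as a black box in the proof of Lemma~3.2. So there is no ``paper's own proof'' to compare against --- you have supplied a proof where the paper simply invokes the literature. Your sketch matches the classical argument in those references.
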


\begin{lemma}\label{lem:low_degree_polys}
Let $f\colon\binpm^n\to\Rbb$ be a degree-$d$ polynomial. 
Let $\Dcal$ be a $2k$-wise independent distribution over $\binpm^n$, where $k\ge d$.
Let $\mu=\E_{x\sim\Dcal}\sbra{f(x)}$ and $\sigma^2=\E_{x\sim\Dcal}\sbra{(f(x)-\mu)^2}$. 
Then for any $\alpha>0$ and any integer $1\le\ell\le k/d$, we have
$$
\E_{x\sim\Dcal}\sbra{\pbra{f(x)-\mu}^{2\ell}}\le\sigma^{2\ell}\cdot\pbra{2\ell-1}^{d\cdot\ell}.
$$
In particular we have
$$
\Pr_{x\sim\Dcal}\sbra{\abs{f(x)-\mu}\ge \alpha\cdot\sigma}
\le \alpha^2\cdot\pbra{\frac{2k}{d\cdot \alpha^{2/d}}}^k.
$$
\end{lemma}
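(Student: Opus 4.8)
The plan is to first establish the moment bound and then derive the tail bound from it by Markov's inequality. For the moment bound, the key point is that $g := f - \mu$ is itself a degree-$d$ polynomial (subtracting a constant does not raise the degree), and that all Fourier coefficients of $g^{2\ell}$ are supported on sets of size at most $2\ell d \le 2k$. Hence, since $\Dcal$ is $2k$-wise independent, it agrees with the uniform distribution $\Ucal_n$ on all Fourier characters $x_S$ with $|S|\le 2k$, and therefore $\E_{x\sim\Dcal}[g(x)^{2\ell}] = \E_{x\sim\Ucal_n}[g(x)^{2\ell}] = \vabs{g}_{2\ell}^{2\ell}$ and likewise $\E_{x\sim\Dcal}[g(x)^2] = \vabs{g}_2^2 = \sigma^2$. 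Now I apply \Cref{thm:hypercontractivity} with $q = 2\ell$ to the degree-$d$ polynomial $g$: this gives $\vabs{g}_{2\ell} \le (2\ell-1)^{d/2}\vabs{g}_2$, i.e., $\vabs{g}_{2\ell}^{2\ell} \le (2\ell-1)^{d\ell}\sigma^{2\ell}$, which is exactly the claimed moment bound. (When $\ell=1$ the bound is trivial.)

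For the tail bound, I would apply Markov's inequality to the nonnegative random variable $(f(x)-\mu)^{2\ell}$:
\[
\Pr_{x\sim\Dcal}\sbra{\abs{f(x)-\mu}\ge \alpha\sigma} = \Pr_{x\sim\Dcal}\sbra{(f(x)-\mu)^{2\ell}\ge \alpha^{2\ell}\sigma^{2\ell}} \le \frac{\E_{x\sim\Dcal}\sbra{(f(x)-\mu)^{2\ell}}}{\alpha^{2\ell}\sigma^{2\ell}} \le \frac{(2\ell-1)^{d\ell}}{\alpha^{2\ell}}.
\]
It remains to choose $\ell$ to optimize $\pbra{(2\ell-1)^{d}/\alpha^{2}}^{\ell} \le \pbra{(2\ell)^{d}/\alpha^{2}}^{\ell}$. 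The natural choice is to take $\ell$ as large as allowed, namely $\ell = \lfloor k/d\rfloor$; assuming $k/d \ge 1$ this is a valid integer in $[1, k/d]$. With $2\ell \le 2k/d$, the bound becomes $\pbra{(2k/d)^{d}/\alpha^{2}}^{\ell}$. To reach the stated form $\alpha^2 \cdot \pbra{2k/(d\,\alpha^{2/d})}^{k}$, I would pull out one factor of $\alpha^{2}$ and use $\ell \ge k/d - 1$ together with $\ell \le k/d$, distributing the exponents so that the $\alpha$-power matches; the precise bookkeeping is routine and I will carry it out in the proof, keeping in mind the edge cases (when $2k/d \le \alpha^{2/d}$ the claimed bound exceeds $1$ and there is nothing to prove, so I may assume $2k/(d\alpha^{2/d}) > 1$, which makes raising the exponent from $\ell$ up to $k$ a valid weakening after extracting the $\alpha^2$ factor).

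The only mild subtlety — and the step I would be most careful about — is the exponent accounting in the last paragraph: matching $(2\ell-1)^{d\ell}/\alpha^{2\ell}$ against $\alpha^2 \cdot (2k/(d\alpha^{2/d}))^{k}$ requires using $\ell \le k/d$ in the numerator (to bound $(2\ell-1)^{d\ell} \le (2k/d)^{d\ell} \le (2k/d)^{k}$) and $\ell \ge k/d - 1$ in the $\alpha$-denominator (to get $\alpha^{-2\ell}$ down to roughly $\alpha^{-2k/d}\cdot\alpha^{2} = \alpha^2\cdot(\alpha^{-2/d})^{k}$), and checking that the direction of each inequality is consistent with the sign of $\log(2k/d)$ and $\log(1/\alpha)$. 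Everything else is an immediate consequence of $2k$-wise independence plus hypercontractivity.
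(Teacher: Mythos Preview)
Your approach is essentially identical to the paper's: transfer moments to the uniform distribution via $2k$-wise independence, apply hypercontractivity (\Cref{thm:hypercontractivity}) to the degree-$d$ polynomial $f-\mu$, then Markov with $\ell=\lfloor k/d\rfloor$.

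One small correction to your edge-case handling: it is \emph{not} true that the target bound $\alpha^2\pbra{\tfrac{2k}{d\,\alpha^{2/d}}}^k$ exceeds $1$ whenever $2k/(d\,\alpha^{2/d})\le 1$ (take $d=1$, $k=2$, $\alpha=10$: the bound is $100\cdot 0.04^2=0.16$). The correct trivial case, and the one the paper uses, is $\alpha<1$: then $\alpha^{2-2k/d}\ge 1$ (since $k\ge d$) and $(2k/d)^k\ge 1$, so the RHS is at least $1$. Once you assume $\alpha\ge 1$, the bookkeeping is clean: $(2\lfloor k/d\rfloor-1)^{\lfloor k/d\rfloor d}\le (2k/d)^k$ uses $2k/d\ge 1$ and $\lfloor k/d\rfloor d\le k$, while $\alpha^{-2\lfloor k/d\rfloor}\le \alpha^{-2(k/d-1)}$ uses $\alpha\ge 1$ and $\lfloor k/d\rfloor\ge k/d-1$.
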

\begin{proof}
Observe that $(f(x)-\mu)^{2\ell}$ is a polynomial of degree at most $2\ell\cdot d\le2k$. Thus its expectation under $\Dcal$ is the same as its expectation under the uniform distribution over $\binpm^n$. By \Cref{thm:hypercontractivity}, we have
$$
\vabs{f-\mu}_{2\ell}\le(2\ell-1)^{d/2}\vabs{f-\mu}_2=\sigma\cdot(2\ell-1)^{d/2}.
$$
Hence by Markov's inequality, we have
\begin{equation*}
\Pr_{x\sim\Dcal}\sbra{\abs{f(x)-\mu}\ge \alpha\cdot\sigma}
\le\frac{\E_{x\sim\Dcal}\sbra{(f(x)-\mu)^{2\ell}}}{(\alpha\cdot\sigma)^{2\ell}}
=\frac{\vabs{f-\mu}_{2\ell}^{2\ell}}{(\alpha\cdot\sigma)^{2\ell}}
\le\frac{(2\ell-1)^{\ell\cdot d}}{\alpha^{2\ell}}.
\end{equation*}

Now we derive the second bound. We only need to focus on the case $\alpha\ge1$ since otherwise the RHS is at least $1$. Then by setting $\ell=\lfloor k/d\rfloor$, we have
\begin{equation*}
\Pr_{x\sim\Dcal}\sbra{\abs{f(x)-\mu}\ge \alpha\cdot\sigma}
\le\frac{(2\lfloor k/d\rfloor-1)^{\lfloor k/d\rfloor\cdot d}}{\alpha^{2\lfloor k/d\rfloor}}
\le\frac{(2k/d)^k}{\alpha^{2(k/d-1)}}
=\alpha^2  \cdot\pbra{\frac{2k}{d\cdot \alpha^{2/d}}}^k.
\tag*{\qedhere}
\end{equation*}
\end{proof}

\subsection{Martingales}\label{sec:martingales}

We show an adaptive version of Azuma's inequality for martingales.
The proof is similar to the inductive proof of the standard Azuma's inequality and thus deferred to \Cref{app:lem:adaptive_azuma_new}. 

\begin{lemma}[Adaptive Azuma's inequality]\label{lem:adaptive_azuma_new}
Let $X^{(0)},\ldots,X^{(D)}$ be a martingale and $\Delta^{(1)},\ldots,\Delta^{(D)}$ be a sequence of magnitudes such that $X^{(0)}=0$ and $X^{(i)}=X^{(i-1)}+\Delta^{(i)}\cdot z^{(i)}$ for $i\in[D]$, where if conditioning on $z^{(1)},\ldots,z^{(i-1)}$,
\begin{itemize}
\item[(1)] $z^{(i)}$ is a mean-zero random variable and $\abs{z^{(i)}}\le1$ always holds;
\item[(2)] $\Delta^{(i)}$ is a fixed value.
\end{itemize}
If there exists some constant $U\ge0$ such that $\sum_{i=1}^D\abs{\Delta^{(i)}}^2\le U$ always holds, then for any $\beta\ge0$ we have
$$
\Pr\sbra{\max_{i=0,1,\ldots,D}\abs{X^{(i)}}\ge\beta\cdot\sqrt{2U}}\le 2\cdot e^{-\beta^2/2}.
$$
\end{lemma}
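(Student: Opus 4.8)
The plan is to mimic the standard inductive proof of Azuma's inequality via an exponential supermartingale, but carrying the ``magnitude'' sequence $\Delta^{(i)}$ along as data that is measurable with respect to the past. Fix $\lambda>0$ to be optimized later. I would first argue that $M^{(i)} := \exp\!\pbra{\lambda X^{(i)} - \tfrac{\lambda^2}{2}\sum_{j=1}^{i}\abs{\Delta^{(j)}}^2}$ is a supermartingale with respect to the filtration generated by $z^{(1)},\ldots,z^{(i)}$. Indeed, conditioning on $z^{(1)},\ldots,z^{(i-1)}$, both $\Delta^{(i)}$ and $\sum_{j\le i}\abs{\Delta^{(j)}}^2$ are fixed, so
\[
\E\sbra{M^{(i)}\mid z^{(1)},\ldots,z^{(i-1)}} = M^{(i-1)}\cdot e^{-\lambda^2\abs{\Delta^{(i)}}^2/2}\cdot\E\sbra{e^{\lambda\Delta^{(i)}z^{(i)}}\mid z^{(1)},\ldots,z^{(i-1)}}.
\]
Since $z^{(i)}$ is mean-zero and bounded in $[-1,1]$ conditionally, Hoeffding's lemma gives $\E\sbra{e^{\lambda\Delta^{(i)}z^{(i)}}\mid\cdots}\le e^{\lambda^2\abs{\Delta^{(i)}}^2/2}$, hence $\E\sbra{M^{(i)}\mid z^{(1)},\ldots,z^{(i-1)}}\le M^{(i-1)}$, and $\E\sbra{M^{(i)}}\le\E\sbra{M^{(0)}}=1$ for all $i$.

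Next I would pass from the supermartingale bound to a maximal inequality. Consider the one-sided event $\max_{i}X^{(i)}\ge\beta\sqrt{2U}$. Applying the optional stopping / maximal inequality for nonnegative supermartingales (Ville's inequality) to $M^{(i)}$: on the event that $X^{(i)}$ first exceeds $\beta\sqrt{2U}$ at some time $i\le D$, using $\sum_{j\le i}\abs{\Delta^{(j)}}^2\le\sum_{j\le D}\abs{\Delta^{(j)}}^2\le U$ we get $M^{(i)}\ge\exp\pbra{\lambda\beta\sqrt{2U}-\tfrac{\lambda^2}{2}U}$. Therefore
\[
\Pr\sbra{\max_{i=0,\ldots,D}X^{(i)}\ge\beta\sqrt{2U}}\le\exp\pbra{-\lambda\beta\sqrt{2U}+\tfrac{\lambda^2U}{2}}.
\]
Optimizing with $\lambda=\beta\sqrt{2/U}$ yields a bound of $e^{-\beta^2}$; for the crude constant in the statement it suffices to take $\lambda=\beta\sqrt{2/U}$ and get at most $e^{-\beta^2}\le e^{-\beta^2/2}$. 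Applying the identical argument to the martingale $-X^{(i)}$ (which satisfies the same hypotheses, with $z^{(i)}$ replaced by $-z^{(i)}$) controls $\max_i(-X^{(i)})$, and a union bound over the two one-sided events gives $\Pr\sbra{\max_i\abs{X^{(i)}}\ge\beta\sqrt{2U}}\le2e^{-\beta^2/2}$, as claimed. (One subtlety: if $U=0$ the statement is vacuous in the sense that $X^{(i)}\equiv0$, so I would assume $U>0$, or equivalently replace $U$ by $\max(U,\epsilon)$ and let $\epsilon\to0$.)

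The main obstacle is making the ``$\Delta^{(i)}$ is a fixed value after conditioning'' hypothesis interact cleanly with the stopping-time argument: one must be careful that the exponent $\sum_{j\le i}\abs{\Delta^{(j)}}^2$ is indeed measurable with respect to $z^{(1)},\ldots,z^{(i)}$ (it is, by hypothesis (2) applied at each step $j\le i$), so that $M^{(i)}$ is adapted and the supermartingale computation above is legitimate. Everything else is the textbook Hoeffding–Azuma argument; since the excerpt says this lemma's proof is deferred to an appendix and is ``similar to the inductive proof of the standard Azuma's inequality,'' I expect the actual write-up to run the induction $\E\sbra{e^{\lambda X^{(i)}}}\le\E\sbra{e^{\lambda X^{(i-1)}}}\cdot e^{\lambda^2\abs{\Delta^{(i)}}^2/2}$ directly rather than invoking Ville's inequality, but the two routes are interchangeable.
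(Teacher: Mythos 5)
Your proposal is correct and follows essentially the same route as the paper's appendix proof: both use the exponential-moment method together with a device for controlling the running maximum (you invoke Ville's inequality for the normalized supermartingale $e^{\lambda X^{(i)}-\frac{\lambda^2}{2}\sum_{j\le i}|\Delta^{(j)}|^2}$, while the paper explicitly truncates the martingale at the first exit time and bounds $\E[e^{t\hat X^{(D)}}]$ by a backward induction that peels off the remaining budget $U-\sum_{j<i}|\Delta^{(j)}|^2$), and both rely on the same two facts --- that $\Delta^{(i)}$ is predictable and that a conditionally bounded mean-zero increment has a sub-Gaussian moment generating function. The only substantive difference is that you use the sharp Hoeffding constant $e^{\lambda^2\Delta^2/2}$ and hence land on $e^{-\beta^2}$, whereas the paper uses the cruder bound $\E[e^{tx}]\le e^{t^2\Delta^2}$ and obtains exactly the stated $e^{-\beta^2/2}$; your handling of the degenerate case $U=0$ by perturbation is also fine.
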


Next, we generalize \Cref{lem:adaptive_azuma_new} as follows.
\begin{lemma}\label{lem:adaptive_azuma}
Let $m\ge1$ be an integer. 
For each $t\in[m]$, let $X_t^{(0)},\ldots,X_t^{(D)}$ be a sequence of random variables and $\Delta_t^{(1)},\ldots,\Delta_t^{(D)}$ be a sequence of magnitudes such that $X_t^{(0)}=0$ and $X_t^{(i)}=X_t^{(i-1)}+\Delta_t^{(i)}\cdot z_t^{(i)}+\mu_t^{(i)}$ for $i\in[D]$, where if conditioning on $z_t^{(1)},\ldots,z_t^{(i-1)}$,
\begin{itemize}
\item[(1)] $z_t^{(i)}$ is a mean-zero random variable and $\abs{z_t^{(i)}}\le1$ always holds;
\item[(2)] $\Delta_t^{(i)}$ is a fixed value and $\mu_t^{(i)}$ is a random variable.
\end{itemize}
If there exist some constants $U,V\ge0$ and $\eta\in[0,1]$ such that
$$
\Pr\sbra{\exists t\in[m],~\pbra{\sum_{i=1}^D\abs{\Delta_t^{(i)}}^2>U}\lor\pbra{\sum_{i=1}^D\abs{\mu_t^{(i)}}>V}}\le\eta,
$$
then for any $\beta\ge0$ we have
$$
\Pr\sbra{\exists t\in[m],\max_{i=0,1,\ldots,D}\abs{X_t^{(i)}}\ge V+\beta\cdot\sqrt{2U}}\le\eta+2m\cdot e^{-\beta^2/2}.
$$
\end{lemma}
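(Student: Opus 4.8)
The plan is to reduce the multi-index statement to $m$ applications of the single-index \Cref{lem:adaptive_azuma_new}, with two preprocessing steps to handle (a) the extra drift terms $\mu_t^{(i)}$ and (b) the fact that the hypotheses on $\sum\abs{\Delta_t^{(i)}}^2$ and $\sum\abs{\mu_t^{(i)}}$ only hold with probability $1-\eta$ rather than almost surely. First I would fix a single $t\in[m]$ and define a \emph{truncated} process. Let $\tau$ be the first index $i$ at which either the partial sum $\sum_{j\le i}\abs{\Delta_t^{(j)}}^2$ exceeds $U$ or $\sum_{j\le i}\abs{\mu_t^{(j)}}$ exceeds $V$ (set $\tau=D+1$ if this never happens); note $\tau$ is a stopping time with respect to the filtration generated by $z_t^{(1)},\ldots,z_t^{(i)}$, since $\Delta_t^{(i)}$ and $\mu_t^{(i)}$ are determined by the history up to step $i$. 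Define the stopped process $\widetilde X_t^{(i)} := X_t^{(i\wedge\tau)}$; this modification only changes the trajectory on the event $\{\tau\le D\}$, which by hypothesis has probability at most $\eta$ once we union over $t$.

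Next I would split off the drift. Write $\widetilde X_t^{(i)} = M_t^{(i)} + A_t^{(i)}$ where $A_t^{(i)} := \sum_{j=1}^{i\wedge\tau}\mu_t^{(j)}$ is the accumulated drift and $M_t^{(i)} := \sum_{j=1}^{i\wedge\tau}\Delta_t^{(j)} z_t^{(j)}$ is the martingale part. By construction of $\tau$, we have the deterministic bounds $\abs{A_t^{(i)}}\le V$ and $\sum_{j=1}^{i\wedge\tau}\abs{\Delta_t^{(j)}}^2\le U$ for \emph{all} $i$, not merely with high probability — because once the running sum would exceed the threshold the process is frozen, and the single step that crosses the threshold is already counted in $\tau$, so the sum just before $\tau$ is within budget and we only ever add one more term. (A minor care point: I should make sure the step that triggers $\tau$ doesn't blow the budget by an unbounded amount; if necessary I set $\tau$ to be the last index for which the partial sum is still $\le U$ and $\le V$, so the stopped sums are bounded by $U$ and $V$ exactly.) Now $M_t^{(0)},\ldots,M_t^{(D)}$ is exactly a martingale of the form required by \Cref{lem:adaptive_azuma_new}: its increments are $\Delta_t^{(j)} z_t^{(j)} \cdot \mathbbm{1}[j\le\tau]$, the multiplier $z_t^{(j)}\mathbbm{1}[j\le\tau]$ is mean-zero and bounded by $1$ conditioned on the past (since $\{j\le\tau\}$ is measurable with respect to the past), the magnitude $\abs{\Delta_t^{(j)}}\mathbbm{1}[j\le\tau]$ is a fixed value given the past, and the sum of squared magnitudes is at most $U$ always. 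Hence \Cref{lem:adaptive_azuma_new} gives $\Pr[\max_i\abs{M_t^{(i)}}\ge\beta\sqrt{2U}]\le 2e^{-\beta^2/2}$.

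Finally I would assemble the pieces. On the event that $\tau=D+1$ for a given $t$ (i.e., the budgets are never exceeded), $\widetilde X_t^{(i)} = X_t^{(i)}$ for all $i$, and $\abs{X_t^{(i)}}\le\abs{M_t^{(i)}}+\abs{A_t^{(i)}}\le \abs{M_t^{(i)}}+V$; so $\max_i\abs{X_t^{(i)}}\ge V+\beta\sqrt{2U}$ forces $\max_i\abs{M_t^{(i)}}\ge\beta\sqrt{2U}$. Therefore
\[
\Pr\sbra{\exists t\in[m],\ \max_{i}\abs{X_t^{(i)}}\ge V+\beta\sqrt{2U}}
\le \Pr\sbra{\exists t,\ \tau_t\le D} + \sum_{t=1}^m \Pr\sbra{\max_i\abs{M_t^{(i)}}\ge\beta\sqrt{2U}}
\le \eta + 2m e^{-\beta^2/2},
\]
using the hypothesis for the first term and \Cref{lem:adaptive_azuma_new} together with a union bound over $t$ for the second. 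I expect the main obstacle — really the only subtle point — to be setting up the truncation so that the stopped sums $\sum\abs{\Delta_t^{(j)}}^2$ and $\sum\abs{\mu_t^{(j)}}$ are bounded \emph{deterministically} by $U$ and $V$ (this is what lets \Cref{lem:adaptive_azuma_new} apply verbatim), while simultaneously ensuring the truncation is invisible on the high-probability good event; this is exactly the ``we handle this by truncating the martingale when a bad event happens instead of conditioning on the good event'' maneuver flagged in the technical overview, and the bookkeeping of which quantities are past-measurable must be done carefully.
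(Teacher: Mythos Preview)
Your plan is essentially the paper's proof: strip off the accumulated drift, truncate so the budget constraints hold deterministically, apply \Cref{lem:adaptive_azuma_new} for each $t$, and union bound. The paper organizes it as two cases --- first $\eta=0$, where one sets $\widehat{X}_t^{(i)}=X_t^{(i)}-\sum_{j\le i}\mu_t^{(j)}$ and invokes \Cref{lem:adaptive_azuma_new} directly, then general $\eta$ by truncation and reduction to the first case --- but the content is the same.

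The one place your implementation diverges, and where the ``past-measurability bookkeeping'' you flagged actually bites: you use a \emph{single} stopping time $\tau$ triggered by either budget, so the indicator $\indicator\sbra{j\le\tau}$ depends on $\mu_t^{(1)},\ldots,\mu_t^{(j-1)}$. But the lemma only says $\mu_t^{(i)}$ is ``a random variable'' given $z_t^{(1)},\ldots,z_t^{(i-1)}$, not that it is \emph{determined} by them; your assertion that $\mu_t^{(i)}$ is ``determined by the history up to step $i$'' is unwarranted, and without it $\{j\le\tau\}$ need not be measurable with respect to the $z$-filtration, so \Cref{lem:adaptive_azuma_new} does not apply to $M_t$ as written. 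The paper sidesteps this by truncating $\Delta$ and $\mu$ \emph{separately}: set $\widetilde{\Delta}_t^{(i)}=\Delta_t^{(i)}\cdot\indicator\sbra{\sum_{j\le i}\abs{\Delta_t^{(j)}}^2\le U}$ and $\widetilde{\mu}_t^{(i)}=\mu_t^{(i)}\cdot\indicator\sbra{\sum_{j\le i}\abs{\mu_t^{(j)}}\le V}$. Now $\widetilde{\Delta}_t^{(i)}$ depends only on past $\Delta$'s and hence only on past $z$'s, so the martingale part $\sum_j\widetilde{\Delta}_t^{(j)}z_t^{(j)}$ is adapted to the $z$-filtration and \Cref{lem:adaptive_azuma_new} applies cleanly; meanwhile $\widetilde{\mu}_t^{(i)}$ remains just ``a random variable'', which is all the $\eta=0$ case needs. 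This separate truncation also dispatches your ``minor care point'' about the triggering step, since each indicator tests the running sum \emph{including} index $i$.
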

\begin{proof}
We divide the proof into the following two cases.

\paragraph*{\fbox{Case $\eta=0$.}}
Let $\widehat{X}_t^{(i)}=X_t^{(i)}-\sum_{j=1}^i\mu_t^{(j)}$ for each $t$ and $i$. 
Then $\abs{X_t^{(i)}}=\abs{\widehat{X}_t^{(i)}+\sum_{j=1}^i\mu_t^{(j)}}\le V+\abs{\widehat{X}_t^{(i)}}$.
By a union bound, it suffices to show for any fixed $t$, we have 
$$
\Pr\sbra{\max_{i=0,1,\ldots,D}\abs{\widehat{X}_t^{(i)}}\ge \beta\cdot\sqrt{2U}}\le 2\cdot e^{-\beta^2/2},
$$
which follows from \Cref{lem:adaptive_azuma_new}.

\paragraph*{\fbox{Case $\eta\ge0$.}}
Consider $\widetilde{X}_t^{(0)},\ldots,\widetilde{X}_t^{(D)}$ defined by setting $\widetilde{X}_t^{(0)}=0$ and $\widetilde{X}_t^{(i)}=\widetilde{X}_t^{(i-1)}+\widetilde{\Delta}_t^{(i)}\cdot z_t^{(i)}+\widetilde{\mu}_t^{(i)}$, where
$$
\widetilde{\Delta}_t^{(i)}=\begin{cases}
\Delta_t^{(i)} & \sum_{j=1}^i\abs{\Delta_t^{(j)}}^2\le U,\\
0 & \text{otherwise},
\end{cases}
\quad\text{and}\quad
\widetilde{\mu}_t^{(i)}=\begin{cases}
\mu_t^{(i)} & \sum_{j=1}^i\abs{\mu_t^{(j)}}\le V,\\
0 & \text{otherwise}.
\end{cases}
$$
Then Item (1) and (2) hold for $\pbra{\widetilde{X}_t^{(i)}}_{t,i}$ and $\pbra{\widetilde{\Delta}_t^{(i)}}_{t,i},\pbra{\widetilde{\mu}_t^{(i)}}_{t,i}$.

Note that $\Pr\sbra{\exists t\in[m],i\in\cbra{0,1\ldots,D},\widetilde{X}_t^{(i)}\neq X_t^{(i)}}\le\eta$ and $\sum_{i=1}^D\abs{\widetilde{\Delta}_t^{(i)}}^2\le U,\sum_{i=1}^D\abs{\widetilde{\mu}_t^{(i)}}\le V$ always.
Hence from the previous case, we have
\begin{align*}
&\phantom{\le}\Pr\sbra{\exists t\in[m],\max_{i=0,1,\ldots,D}\abs{X_t^{(i)}}\ge V+\beta\cdot\sqrt{2U}}\\
&\le\Pr\sbra{\exists t\in[m],i\in\cbra{0,1\ldots,D},~\widetilde{X}_t^{(i)}\neq X_t^{(i)}}+\Pr\sbra{\exists t\in[m],\max_{i=0,1,\ldots,D}\abs{\widetilde{X}_t^{(i)}}\ge V+\beta\cdot\sqrt{2U}}\\
&\le\eta+2m\cdot e^{-\beta^2/2}.
\tag*{\qedhere}
\end{align*}
\end{proof}

\section{How to Clean Up Parity Decision Trees}\label{sec:clean}

In this section we show how to \emph{clean up} the given parity decision tree to make it easier to analyze. 

\subsection[k-cleanness]{$k$-cleanness}

It will be useful to identify $\Fbb_2^n$ with $\binpm^n$ by $\Enc\colon\pbra{x_1, \ldots, x_n} \mapsto \pbra{(-1)^{x_1}, \ldots, (-1)^{x_n}}$.
For a subset $X \subseteq \Fbb_2^n$ we will denote $\Enc(X) = \{\Enc(x): x\in X\}$.
Thus, we may think of Boolean functions also as $f\colon \Fbb_{2}^n \to \bin$.
We observe that under this representation of the input, a parity decision tree $\Tcal : \Fbb_2^n \to \bin$ indeed queries parity functions (i.e., linear functions over $\Fbb_2$) of the input bits $x \in \Fbb_2^n$ and decides whether to go left or right based on their outcome. Thus, the set of all possible inputs in $\Fbb_2^n$ that reach a given node in a parity decision tree is an affine subspace of $\Fbb_2^n$.

We introduce some notation. 

\begin{notation}
Let $\Tcal\colon\binpm^n\to\bin$ be a parity decision tree and let $v$ be a node in it. 
\begin{itemize}
\item We use $\Pcal_v \subseteq \binpm^n$ to denote the set of all points reaching node $v$. 
Note that $\Pcal_v = \Enc(H_v + a)$ where $H_v$ is a linear subspace of $\Fbb_2^n$ of dimension $n - \depth(v)$ and $a \in \Fbb_2^n$.
\item For any $S\subseteq[n]$, we define $\widehat{\Pcal_v}(S)=\E_{x\sim\Pcal_v}[x_S]$.
\item We use $\Scal_v$ to denote all fully correlated sets with $\Pcal_v$, i.e., $\Scal_v=\cbra{S\subseteq[n]\mid\widehat{\Pcal_v}(S)\in\binpm}$.
We observe that if $\Pcal_v = \Enc(H_v + a)$, then $\Scal_v = H_v^{\perp}$. Additionally, if the queries on the path from root to $v$ are $Q_{v_0}, \ldots, Q_{v_{i-1}}$, then $\Scal_v = \Span\langle{\{Q_{v_0}, \ldots, Q_{v_{i-1}}\}\rangle}$.
\item If $v$ is an internal node, then define $J(v)$ as the set of newly fixed coordinates after querying $Q_v$, i.e., $i\in J(v)$ iff $\cbra{i}\notin\Scal_v$ but $\cbra{i}\in\Span\abra{\Scal_v\cup\cbra{Q_v}}$.
\end{itemize}
\end{notation}

The following simple fact shows that there is no ``somewhat'' correlated set.
\begin{fact}\label{fct:coefficients_of_Pv}
For any parity decision tree $\Tcal$ and any node $v$ in $\Tcal$, $\widehat{\Pcal_v}(S)\in\cbra{+1,0,-1}$ holds for any set $S$.
\end{fact}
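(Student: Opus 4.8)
The plan is to unpack the definitions and reduce the claim to a statement about the affine subspace $\Pcal_v$, then use the identification with $\Fbb_2^n$. By the notation, $\Pcal_v = \Enc(H_v + a)$ for a linear subspace $H_v \le \Fbb_2^n$ and a shift $a \in \Fbb_2^n$, so for a fixed $S \subseteq [n]$ we want to compute $\widehat{\Pcal_v}(S) = \E_{x \sim \Pcal_v}[x_S] = \E_{y \sim H_v + a}\big[(-1)^{\sum_{i \in S} y_i}\big]$, where $y$ is uniform over the coset $H_v + a$. Writing $S$ also as the indicator vector $\mathbf{1}_S \in \Fbb_2^n$ and using the standard bilinear form $\langle \cdot, \cdot\rangle$ over $\Fbb_2$, this expectation is $(-1)^{\langle \mathbf{1}_S, a\rangle} \cdot \E_{h \sim H_v}\big[(-1)^{\langle \mathbf{1}_S, h\rangle}\big]$.

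The key step is the elementary character-sum fact: for a finite-dimensional $\Fbb_2$-subspace $H$ and a vector $w$, the average $\E_{h \sim H}[(-1)^{\langle w, h\rangle}]$ equals $1$ if $w \in H^\perp$ (i.e. $\langle w, h\rangle = 0$ for all $h \in H$) and equals $0$ otherwise, since in the latter case the linear functional $h \mapsto \langle w, h\rangle$ is a nonconstant, hence balanced, $\Fbb_2$-valued linear map on $H$. Plugging this in: if $\mathbf{1}_S \in H_v^\perp$ then $\widehat{\Pcal_v}(S) = (-1)^{\langle \mathbf{1}_S, a\rangle} \in \{+1, -1\}$; otherwise $\widehat{\Pcal_v}(S) = 0$. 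In particular $\widehat{\Pcal_v}(S) \in \{+1, 0, -1\}$ always, which is exactly the fact. I would also remark, in passing, that this reproves the identity $\Scal_v = H_v^\perp$ asserted in the notation, since $S \in \Scal_v \iff \widehat{\Pcal_v}(S) \in \{\pm 1\} \iff \mathbf{1}_S \in H_v^\perp$.

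There is essentially no obstacle here — the statement is a one-line consequence of orthogonality of characters of $(\Fbb_2^n, +)$ restricted to a coset, and the only thing to be careful about is bookkeeping the encoding $\Enc$ and the fact that $x_S = \prod_{i \in S} x_i$ corresponds to the character $(-1)^{\langle \mathbf{1}_S, \cdot \rangle}$ under $\Enc$. If one prefers to avoid Fourier-analytic language entirely, an equally short alternative is: the values $\{x_S : x \in \Pcal_v\}$ form the image of the coset $H_v + a$ under the homomorphism $x \mapsto x_S$ into $\binpm$; this image is a coset of a subgroup of $\binpm$, hence is either $\{+1\}$, $\{-1\}$, or all of $\binpm$, and in the first two cases the average is $\pm 1$ while in the last case the two preimages have equal size so the average is $0$.
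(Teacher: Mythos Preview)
Your proof is correct and follows essentially the same approach as the paper: both use that $\Pcal_v = \Enc(H_v + a)$ is (the encoding of) an affine subspace, so the character $\chi_S$ is either constant on it (giving $\pm 1$) or balanced (giving $0$). Your write-up is simply more explicit, carrying out the character-sum computation and noting the identification $\Scal_v = H_v^\perp$, whereas the paper states the three cases without further justification.
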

\begin{proof}
Since $\Pcal_v = \Enc(H_v+a)$ where $H_v + a$ is an affine subspace, $\Pcal_v$ falls into one of the following $3$ cases: (a) all points in $\Pcal_v$ satisfy $\chi_S(x) = 1$, (b) all points satisfy $\chi_S(x) = -1$, (c) exactly half of the points satisfy $\chi_S(x) = 1$.
\end{proof}

Let $\Scal\subseteq\Fbb_2^n$ be a subspace and $S\subseteq[n]$. For simplicity, we write $S\in\Scal$ iff the indicator vector of $S$ is contained in $\Scal$. 
Now we describe the desired property: \emph{$k$-clean}.

\begin{definition}[$k$-clean subspace and mess-witness]
Let $k$ be a positive integer.
A subspace $\Scal$ is \emph{$k$-clean} if for any set $S\in\Scal$ such that $|S|\le k$, we have that $\cbra{i}\in\Scal$ holds for any $i\in S$. 

Moreover, when $\Scal$ is not $k$-clean, we say $i$ is a \emph{mess-witness} if there exists some $S\ni i,|S|\le k$ such that $S\in\Scal$ but $\cbra{i}\notin\Scal$.
\end{definition}

\begin{definition}[$k$-clean parity decision tree]\label{def:k_clean_PDT}
A parity decision tree $\Tcal$ is \emph{$k$-clean} if the following holds:
\begin{itemize}
\item For any internal node $v$, either (a) $\Scal_v$ is $k$-clean, or (b) $Q_v = \{i\}$ where $i$ is a mess-witness for $\Scal_v$. Moreover, we say $v$ is \emph{$k$-clean} if (a) holds; and we say $v$ is \emph{cleaning} if (b) holds.
\item For any leaf $v$, $\Scal_v$ is $k$-clean (in such a case, we say that $v$ is \emph{$k$-clean}).
\item For any $k$-clean internal node $v$, $\Tcal_v$ starts with $\ell(v)$ non-adaptive queries\footnote{This means for any $i\in\cbra{0,1\ldots,\ell(v)-1}$, all nodes of depth $i$ in $\Tcal_v$ make the same query.} where $\ell(v)\ge1$. 
In addition, for any $i\in\cbra{1,\ldots,\ell(v)-1}$, any node of depth $i$ in $\Tcal_v$ is cleaning; and all node of depth $\ell(v)$ are $k$-clean.\footnote{This ``leveled adaptive'' condition is required just for convenience of proofs. In fact, one can show that the first few queries in $\Tcal_v$ can be rearranged to make sure they are non-adaptive until we reach a $k$-clean node. See \Cref{lem:clean_subspace}.}
\end{itemize}
\end{definition}

\begin{example}\label{exp:k_clean_PDT}
If $\Tcal$ is a decision tree (i.e., $|Q_v|\equiv1$ for any internal node $v$) then it is $k$-clean for any $k$, where each internal node is $k$-clean.

If $\Tcal$ is the depth-$1$ parity decision tree for $\Tcal(x)=x_1x_2x_3$ (i.e., $\Tcal$ only has a root $v_0$ querying $Q_{v_0}=\cbra{1,2,3}$), then it is $2$-clean but not $3$-clean, since for either leaf $v$ we have $\cbra{1,2,3}\in\Scal_v$ but $\cbra{1}\notin\Scal_v$.
\end{example}

The benefit of having a $k$-clean parity decision tree is that it makes the expression of Fourier coefficients simpler.

\begin{lemma}\label{lem:clean_to_Fourier}
Let $\Tcal\colon\binpm^n\to\bin$ be a $k$-clean parity decision tree and let $S$ be a set of size $\ell\le k$.
Let $v_0,\ldots,v_d$ be a random root-to-leaf path.
Define $\vbm^{(0)},\ldots,\vbm^{(d)}\in\cbra{-1,0,+1}^n$ by setting $\vbm^{(i)}_j=\widehat{\Pcal_{v_i}}(j)$ for each $i,j$. Recall that $\vbm^{(d)}_S = \prod_{j\in S} \vbm^{(d)}_j$.
Then we have
$$
\widehat\Tcal(S)
=\E_{v_0,\ldots,v_d}\sbra{\Tcal(v_d)\cdot\vbm^{(d)}_S}.
$$
\end{lemma}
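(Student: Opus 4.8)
The plan is to expand the right-hand side by conditioning on the leaf $v_d$ reached, compute the expectation of $\Tcal(v_d)\cdot\vbm^{(d)}_S$ using the fact that $\vbm^{(d)}_j=\widehat{\Pcal_{v_d}}(j)$, and match it against the Fourier formula $\widehat\Tcal(S)=\E_{x\sim\Ucal_n}[\Tcal(x)x_S]$. First I would recall that for a fixed leaf $v_d$, the set $\Pcal_{v_d}$ is an affine subspace of $\binpm^n$ (via $\Enc$), that the probability a uniformly random $x\sim\Ucal_n$ reaches $v_d$ is exactly $|\Pcal_{v_d}|/2^n=2^{-\depth(v_d)}$, and that the random-path distribution on leaves is precisely this distribution (each leaf $v_d$ has probability $2^{-\depth(v_d)}$). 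Hence
\[
\widehat\Tcal(S)=\E_{x\sim\Ucal_n}\sbra{\Tcal(x)x_S}=\sum_{\text{leaf }v_d}\Pr_x\sbra{x\text{ reaches }v_d}\cdot\Tcal(v_d)\cdot\E_{x\sim\Pcal_{v_d}}\sbra{x_S}=\E_{v_0,\ldots,v_d}\sbra{\Tcal(v_d)\cdot\widehat{\Pcal_{v_d}}(S)}.
\]
So it remains to show $\widehat{\Pcal_{v_d}}(S)=\vbm^{(d)}_S=\prod_{j\in S}\widehat{\Pcal_{v_d}}(j)$ whenever $|S|=\ell\le k$ and $v_d$ is a $k$-clean leaf.

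The key step is exactly this factorization, and it is where $k$-cleanness is used. By \Cref{fct:coefficients_of_Pv}, $\widehat{\Pcal_{v_d}}(S)\in\{-1,0,1\}$; equivalently, either $S\in\Scal_{v_d}$ (in which case $\widehat{\Pcal_{v_d}}(S)=\pm1$ is the fixed value of $x_S$ on $\Pcal_{v_d}$) or $S\notin\Scal_{v_d}$ (in which case $\widehat{\Pcal_{v_d}}(S)=0$). I split into these two cases. If $S\notin\Scal_{v_d}$: I claim some singleton $\{j\}$ with $j\in S$ also fails to lie in $\Scal_{v_d}$, so $\vbm^{(d)}_j=0$ and the product vanishes, matching $\widehat{\Pcal_{v_d}}(S)=0$. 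Indeed, if every $\{j\}$, $j\in S$, were in $\Scal_{v_d}$, then $S=\sum_{j\in S}\{j\}\in\Scal_{v_d}$ (subspaces are closed under $\Fbb_2$-sums), a contradiction; so this direction needs no cleanness at all. If $S\in\Scal_{v_d}$: since $v_d$ is $k$-clean and $|S|\le k$, the definition of $k$-clean subspace gives $\{j\}\in\Scal_{v_d}$ for every $j\in S$, so each $\vbm^{(d)}_j=\widehat{\Pcal_{v_d}}(j)\in\{-1,1\}$ is the fixed value of $x_j$ on $\Pcal_{v_d}$, and therefore $x_S=\prod_{j\in S}x_j$ is constantly equal to $\prod_{j\in S}\vbm^{(d)}_j$ on $\Pcal_{v_d}$; taking the expectation over $x\sim\Pcal_{v_d}$ gives $\widehat{\Pcal_{v_d}}(S)=\prod_{j\in S}\vbm^{(d)}_j=\vbm^{(d)}_S$, as desired.

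The only mild subtlety — and the part I would state carefully rather than wave at — is the bookkeeping connecting the random root-to-leaf path distribution with the uniform input distribution: I need that sampling $x\sim\Ucal_n$ and running it through $\Tcal$ induces exactly the uniform-like measure assigning mass $2^{-\depth(v_d)}$ to each leaf $v_d$, and that, conditioned on reaching $v_d$, the input is uniform on $\Pcal_{v_d}$. This is immediate from the definition of a parity decision tree's computation path (each query splits the current affine subspace in half) but should be spelled out since the lemma's statement phrases everything in terms of the abstract random-path distribution $\E_{v_0,\ldots,v_d}$. With that in hand the two-case argument above closes the proof; I do not anticipate any real obstacle, only the need to invoke \Cref{fct:coefficients_of_Pv} and the $k$-clean definition at the right moment.
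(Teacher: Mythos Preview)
Your proposal is correct and follows essentially the same approach as the paper's proof: reduce to showing $\widehat{\Pcal_{v_d}}(S)=\prod_{j\in S}\vbm^{(d)}_j$ for each leaf $v_d$, then use $k$-cleanness of the leaf together with \Cref{fct:coefficients_of_Pv} to verify this factorization. Your two-case argument simply spells out in more detail what the paper compresses into a single sentence.
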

\begin{proof}
Observe that for any $j\in J(v_i)\subseteq J$, the $j$-th coordinate is fixed after querying $Q_{v_i}$. 
Therefore we have
$$
\widehat\Tcal(S)
=\E_{y\sim\Ucal_n}\sbra{\Tcal(y)\cdot y_S}
=\E_{v_0,\ldots,v_d}\sbra{\Tcal(v_d)\cdot\E_{y\sim\Pcal_{v_d}}\sbra{y_S}}
=\E_{v_0,\ldots,v_d}\sbra{\Tcal(v_d)\cdot\widehat{\Pcal}_{v_d}(S)}
$$
By \Cref{fct:coefficients_of_Pv}, $\widehat{\Pcal}_{v_d}(S)\neq0$ iff $S\in\Scal_{v_d}$, which, due to $\ell \le k$ and $v_d$ being a $k$-clean leaf, is equivalent to all coordinates in $S$ being fixed along this path. Hence $\widehat{\Pcal}_{v_d}(S)=\prod_{j\in S}\vbm_j^{(d)}$.
\end{proof}

\subsection{Cleanup Process}

We first analyze the cleanup process for a subspace.\footnote{The $k=2$ case of \Cref{lem:clean_subspace} is essentially \cite[Proposition 3.5]{DBLP:journals/corr/BlaisTW15}. However there is a gap in their proof. For example, if the parity decision tree non-adaptively queries $x_1x_2x_3x_4, x_1x_5, x_2x_6$ in order, then their analysis fails.}

\begin{lemma}[Clean subspace]\label{lem:clean_subspace}
Let $k\ge2$ be an integer and $\Scal$ be a subspace of rank at most $d$. We construct a new subspace $\Scal'$ (initialized as $\Scal$) as follows: while $\Scal'$ is not $k$-clean, we continue to update $\Scal'\gets\Span\abra{\Scal'\cup\cbra{\cbra{i}}}$ with some mess-witness $i$. 
Then $\rank(\Scal')\le d\cdot k$ and any update choice of mess-witnesses will result in the same final subspace $\Scal'$.
\end{lemma}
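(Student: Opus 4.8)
The claim has two parts: (a) the rank bound $\rank(\Scal')\le d\cdot k$, and (b) confluence — any sequence of mess-witness updates terminates at the same subspace.

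For part (a), I would track a ``cleaning budget'' argument. Start with a basis of $\Scal$ given by the original queries $Q_1,\ldots,Q_r$ with $r\le d$. Each update step picks a mess-witness $i$ for the current $\Scal'$, meaning there is some $S\ni i$ with $|S|\le k$ and $S\in\Scal'$ but $\{i\}\notin\Scal'$; we then add $\{i\}$ to the span, increasing the rank by exactly one. The key is to charge each such added singleton $\{i\}$ to one of the original generators $Q_j$. Concretely, I would argue by induction that at every stage the current subspace is spanned by a set of vectors, each of which is either a singleton $\{i\}$ or one of the original $Q_j$, and in fact has the form: for each original $Q_j$ that has not been ``fully dissolved'', there is at most one surviving representative of weight $\le |Q_j|\le\ldots$, and the singletons that have been added all lie inside $\bigcup_j Q_j$ (the union of supports of the original queries). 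Since $|Q_j|$ could be large, the cleaner bookkeeping is: maintain that $\Scal'$ has a basis consisting of the added singletons $\{i_1\},\ldots,\{i_m\}$ together with (at most $r$) other vectors, and that every added singleton $i_t$ lies in the support of the witnessing set $S$ at the time it was added, which in turn was a combination of surviving generators; a rank/dimension count then shows that once $k$ singletons are ``associated'' to a single original generator (in the sense that $S\subseteq$ span of that generator plus already-fixed singletons), that generator is dissolved. This gives $m\le d\cdot(k-1)$ added singletons and hence $\rank(\Scal')\le d + d(k-1) = dk$; I would need to be careful to get exactly $dk$ rather than something slightly larger. The honest accounting is: a minimal witness $S$ of size $\le k$ in $\Scal'$ has, after removing its already-fixed coordinates, at most $k-1$ unfixed coordinates, one of which we fix; so each ``new'' generator (a witness not implied by previously-fixed singletons and the other $\le d$ generators) can be responsible for at most $k-1$ rank increases before it becomes a singleton or is absorbed. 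The main obstacle is making the notion of ``responsible for'' precise enough that the $d\cdot(k-1)$ count is rigorous — i.e., defining a potential function on $\Scal'$ (something like $\sum$ over a canonical basis of $(\text{weight}-1)$, or the number of non-singleton basis vectors weighted by their sizes) that strictly decreases with each update and starts at most $d\cdot(k-1)$.

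For part (b), the confluence statement, I would use the standard Newman/diamond-lemma style argument for abstract rewriting systems, exploiting that the update relation is \emph{locally confluent} and \emph{terminating} (termination follows from part (a), since rank strictly increases and is bounded). Local confluence: suppose $\Scal'$ is not $k$-clean and has two distinct mess-witnesses $i$ and $i'$; I claim that after adding $\{i\}$, the coordinate $i'$ is still a mess-witness (or already fixed, i.e., $\{i'\}$ is already in the span), and symmetrically — so $\Span\langle\Scal'\cup\{\{i\},\{i'\}\}\rangle$ is reachable from both one-step successors. This is because adding $\{i\}$ cannot remove the witnessing set $S'$ for $i'$ from the subspace (we only enlarge it), and it either fixes $i'$ outright or leaves $S'$ still witnessing that $i'$ is messy. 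Hence locally confluent; by Newman's lemma, terminating + locally confluent $\Rightarrow$ confluent, and since the terminal objects are exactly the $k$-clean subspaces containing $\Scal$, and confluence gives uniqueness of the normal form, every maximal update sequence ends at the same $\Scal'$. I expect part (b) to be routine once part (a) is in hand; the real work is the rank bound.

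\textbf{Expected main difficulty.} The crux is the quantitative rank bound in part (a). The naive bound ``each of the $\le d$ original queries of weight $\le w$ can spawn $\le w-1$ singletons'' gives something like $\sum(|Q_j|-1)$ which is useless since $|Q_j|$ can be $n$. The point of the hypothesis $|S|\le k$ in the definition of mess-witness is that we only ever clean using \emph{small} witnesses, so I must show that a large-weight original generator $Q_j$ simply never participates in a cleaning step until enough of its coordinates have independently been fixed — equivalently, that the sequence of minimal witnesses encountered stays within the ``$\le k$'' regime and each contributes $\le k-1$ new fixed coordinates against a bounded pool of $d$ ``slots''. Formalizing this — likely via a carefully chosen potential or via a direct argument that any $k$-clean superspace of $\Scal$ of minimal rank has rank $\le dk$, combined with showing the greedy process produces such a minimal superspace — is where I would spend the bulk of the effort.
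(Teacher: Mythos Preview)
Your approach to part (b) via Newman's lemma is correct and arguably cleaner than the paper's argument. The paper instead uses a direct exchange argument: given two mess-witness sequences that first diverge at step $t+1$, it locates where the witness $i_{t+1}$ from the first sequence is forced in the second, swaps it to position $t+1$, and repeats. Your local-confluence check is sound (adding $\{i\}$ only enlarges the span, so the witnessing set $S'\ni i'$ persists; either $\{i'\}$ is now in the span---in which case the two one-step successors already coincide as subspaces---or $i'$ remains a mess-witness and both branches join at $\Span\langle\Scal'\cup\{\{i\},\{i'\}\}\rangle$). One minor point: termination does not need part (a); rank is bounded by $n$ regardless.

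For part (a), your diagnosis of the difficulty is right, but you have not found the potential, and the charging-to-$Q_j$ scheme you sketch is hard to make rigorous precisely because large-weight generators can combine with already-added singletons to spawn new small witnesses. The paper's argument is short once you see it, and crucially it \emph{uses} part (b): since all cleanup orders yield the same $\Scal'$, it suffices to bound the rank for one specific strategy. That strategy batches: while $\Scal'$ is not $k$-clean, pick a \emph{minimal} $S=\{i_1,\ldots,i_s\}\in\Scal'$ with $2\le s\le k$ (minimality forces $\{i_j\}\notin\Scal'$ for every $j$, and in fact no proper nonempty subset of $S$ lies in $\Scal'$), and add $\{i_1\},\ldots,\{i_{s-1}\}$ in one batch. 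Each batch increases $\rank(\Scal')$ by exactly $s-1\le k-1$ and increases $\rank_1(\Scal')\coloneqq|\{i:\{i\}\in\Scal'\}|$ by at least $s$ (the $s-1$ added singletons, plus $\{i_s\}$ which is now forced). Hence the potential $\rank(\Scal')-\rank_1(\Scal')\ge 0$ drops by at least $1$ per batch; it starts at $\rank(\Scal)-\rank_1(\Scal)\le d$, so there are at most $d$ batches, giving $\rank(\Scal')\le d+d(k-1)=dk$. This basis-free potential is exactly the ``carefully chosen potential'' you were looking for.
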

\begin{proof}
Assume $\Scal$ is a subspace of $\Fbb_2^n$. Then first note that the number of updates is finite, since we can update for at most $n$ times.

Next we show that the number of updates and the final $\Scal'$ does not depend on the choice of mess-witnesses. We do so by an exchange argument.
Let $i_1,\ldots,i_r$ and $i_1',\ldots,i_{r'}'$ be two rounds of execution using different mess-witnesses.
Then there exists some $t<\min\cbra{r,r'}$ such that $i_j=i_j'$ for all $j\le t$, but $i_{t+1}\neq i_{t+1}'$. 
Let $\Scal_t=\Span\abra{\Scal\cup\cbra{\cbra{i_1},\ldots,\cbra{i_t}}}$. Then there exist $S\ni i_{t+1}$ and $S'\ni i_{t+1}'$ (possibly $S=S'$) such that $S,S'\in\Scal_t$ but $\cbra{i_{t+1}},\cbra{i_{t+1}'}\notin\Scal_t$. Since the final subspace is $k$-clean, we know there exists some $T\ge t$ such that
$$
\cbra{i_{t+1}}\notin\Span\abra{\Scal\cup\cbra{\cbra{i_1'},\ldots,\cbra{i_T'}}}
\quad\text{but}\quad
\cbra{i_{t+1}}\in\Span\abra{\Scal\cup\cbra{\cbra{i_1'},\ldots,\cbra{i_{T+1}'}}},
$$
which means $\cbra{i_{T+1}',i_{t+1}}\in\Span\abra{\Scal\cup\cbra{\cbra{i_1'},\ldots,\cbra{i_T'}}}$. Hence we can safely replace $i_{T+1}'$ with $i_{t+1}$, and then swap $i_{t+1}$ with $i_{t+1}'$. We can perform this process as long as $(i_1,\ldots,i_r)\neq(i_1',\ldots,i_{r'}')$, which means $r=r'$ and the final $\Scal'$ is always the same.

For any subspace $\Hcal$, we define $\rank_1(\Hcal)=\abs{\cbra{i\mid\cbra{i}\in\Hcal}}$ and thus $\rank(\Hcal)-\rank_1(\Hcal)\ge0$.
Now we analyze the following particular way to construct $\Scal'$: We initialize $\Scal'$ as $\Scal$. While $\Scal'$ is not $k$-clean, we find a minimal $S=\cbra{i_1,\ldots,i_s}\in\Scal'$ such that $i_1$ is a mess-witness; then we update $\Scal'\gets\Span\abra{\Scal'\cup\cbra{\cbra{i_1},\ldots,\cbra{i_{s-1}}}}$.
Note that before the update, $1<s\le k$ and $\cbra{i_j}\notin\Scal'$ holds for each $j\in[s]$, since $S$ is minimal and $\Scal'$ is not $k$-clean. Thus after the update, $\rank(\Scal')$ grows by $s-1\le k-1$ and $\rank_1(\Scal')$ grows by $s$, which means $\rank(\Scal')-\rank_1(\Scal')$ shrinks by $1$. Hence we have at most $\rank(\Scal)-\rank_1(\Scal)\le d$ updates before $\Scal'$ is $k$-clean; and the final $\Scal'$ has rank at most $\rank(\Scal)+(k-1)\cdot d\le d\cdot k$.
\end{proof}

We now show how to convert an arbitrary parity decision tree into a $k$-clean parity decision tree which still has a small depth and fixes a small number of variables along each path. The latter quantity is in fact bounded by the depth as shown in \Cref{fct:depth_to_singleton}.

\begin{fact}\label{fct:depth_to_singleton}
Let $\Tcal$ be a depth-$d$ parity decision tree.
Let $v_0,\ldots,v_{d'}$ be any root-to-leaf path.
Then we have $\sum_{i=0}^{d'-1}|J(v_i)|\le d'$.
\end{fact}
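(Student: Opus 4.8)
The plan is to track the sequence of subspaces $\Scal_{v_0} \subsetneq \Scal_{v_1} \subsetneq \cdots \subsetneq \Scal_{v_{d'}}$ associated with the path, together with the ``number of fixed singletons'' statistic $\rank_1(\Scal_{v_i}) = |\{j : \{j\} \in \Scal_{v_i}\}|$ already introduced in the proof of \Cref{lem:clean_subspace}. The key identity to observe is that $J(v_i)$ is \emph{exactly} the set of new singletons created by the query $Q_{v_i}$: by definition $j \in J(v_i)$ iff $\{j\} \notin \Scal_{v_i}$ but $\{j\} \in \Span\langle \Scal_{v_i} \cup \{Q_{v_i}\}\rangle = \Scal_{v_{i+1}}$. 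Hence $|J(v_i)| = \rank_1(\Scal_{v_{i+1}}) - \rank_1(\Scal_{v_i})$, and this is a genuine telescoping difference since each query only adds to the span (singletons are never removed).

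First I would unwind this telescoping sum: $\sum_{i=0}^{d'-1} |J(v_i)| = \rank_1(\Scal_{v_{d'}}) - \rank_1(\Scal_{v_0}) \le \rank_1(\Scal_{v_{d'}})$. Then I would bound $\rank_1(\Scal_{v_{d'}}) \le \rank(\Scal_{v_{d'}})$, since the singleton vectors $\{j\}$ with $\{j\}\in\Scal_{v_{d'}}$ are linearly independent over $\Fbb_2$ and lie in $\Scal_{v_{d'}}$. Finally, since the path has length $d'$ and each internal query is required not to be implied by its ancestors' queries, we have $\rank(\Scal_{v_{d'}}) = \dim \Span\langle Q_{v_0}, \ldots, Q_{v_{d'-1}}\rangle = d'$ (each query increases the dimension by exactly one). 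Combining these gives $\sum_{i=0}^{d'-1}|J(v_i)| \le d'$, as claimed.

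There is essentially no hard part here — the statement is a bookkeeping consequence of the definitions, and the only thing to be slightly careful about is the identification $|J(v_i)| = \rank_1(\Scal_{v_{i+1}}) - \rank_1(\Scal_{v_i})$, which must be justified by noting that querying $Q_{v_i}$ can never destroy an existing singleton (the span only grows) so the difference is nonnegative and counts precisely $J(v_i)$. One could alternatively phrase the whole argument as: each coordinate $j$ is fixed along the path at most once (namely at the unique step where $\{j\}$ first enters the span), so $\sum_i |J(v_i)| \le |\{j : \{j\}\in\Scal_{v_{d'}}\}| \le \dim\Scal_{v_{d'}} = d'$; I would present whichever phrasing is shortest.
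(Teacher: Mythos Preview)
Your proposal is correct and takes essentially the same approach as the paper; in fact, your alternative phrasing at the end (``each coordinate $j$ is fixed along the path at most once, so $\sum_i |J(v_i)| = |\{j : \{j\}\in\Scal_{v_{d'}}\}| \le \dim\Scal_{v_{d'}} = d'$'') is exactly the one-line proof the paper gives. Your telescoping version via $\rank_1$ is just a more verbose unpacking of the same identity.
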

\begin{proof}
Observe that $\sum_{i=0}^{d'-1}|J(v_i)|=\abs{\cbra{i\mid \cbra{i}\in\Span\abra{Q_{v_0},\ldots,Q_{v_{d'-1}}}}}\le d'$.
\end{proof}

\begin{corollary}\label{cor:depth_to_singleton}
Let $\Tcal$ be a depth-$D$ $k$-clean parity decision tree.
Let $v_0,\ldots,v_{D'}$ be any root-to-leaf path where at most $d$ of the nodes $v_0, \ldots, v_{D'-1}$ are $k$-clean.
Then $\sum_{i:|J(v_{i-1})|>1} \abs{J(v_i)} \le 2d$. 
\end{corollary}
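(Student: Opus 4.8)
\textbf{Proof proposal for \Cref{cor:depth_to_singleton}.}

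The plan is to reduce the statement to \Cref{fct:depth_to_singleton} applied not to the whole path but to the restricted subsequence of nodes that lie strictly between consecutive $k$-clean nodes (together with the last $k$-clean node on the path). Concretely, let $0 = i_0 < i_1 < \cdots < i_r$ be the depths of the $k$-clean nodes among $v_0,\ldots,v_{D'-1}$, where by hypothesis $r \le d$; set $i_{r+1} = D'$ for bookkeeping. By the structure of a $k$-clean parity decision tree (\Cref{def:k_clean_PDT}), each $k$-clean internal node $v_{i_a}$ begins a block of $\ell(v_{i_a}) \ge 1$ non-adaptive queries, with the first query at $v_{i_a}$ itself, the queries at $v_{i_a+1},\ldots,v_{i_a+\ell(v_{i_a})-1}$ all being \emph{cleaning} (hence of the form $Q = \{i\}$, so $|J| \le 1$ at those nodes), and the node at depth $i_a + \ell(v_{i_a})$ being again $k$-clean, i.e. equal to $v_{i_{a+1}}$. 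Thus within the block $v_{i_a}, v_{i_a+1}, \ldots, v_{i_{a+1}-1}$ the only node that can possibly have $|J(v)| > 1$ is the block's initial $k$-clean node $v_{i_a}$.

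The key step is then to bound $\sum_{a=0}^{r} |J(v_{i_a})|$. For this I would apply \Cref{fct:depth_to_singleton}-style reasoning to a single block: inside the affine subspace reached at $v_{i_a}$, the queries at $v_{i_a}, v_{i_a+1}, \ldots, v_{i_{a+1}-1}$ are all part of the cleanup of $v_{i_a}$'s query, and by \Cref{lem:clean_subspace} the total number of coordinates singled out by the entire block (i.e. $\sum_{i=i_a}^{i_{a+1}-1} |J(v_i)|$) is at most twice the number of fresh directions added, which for one original query is at most $2$; more simply, $\sum_{i=i_a}^{i_{a+1}-1}|J(v_i)|$ equals the number of singletons that become spanned during the block, and the analysis in the proof of \Cref{lem:clean_subspace} (each cleaning step increases $\rank_1$ by at least as much as it increases $\rank$, and the block adds at most one new rank beyond the cleaning singletons) gives $\sum_{i=i_a}^{i_{a+1}-1}|J(v_i)| \le 2$. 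Since the cleaning nodes each contribute $|J| \le 1$, we get $|J(v_{i_a})| \le 2$ as well, and more to the point the \emph{only} index $i$ in the block with $|J(v_i)| > 1$ is $i = i_a$, with $|J(v_i)| \le 2$ there. Summing over the $r+1 \le d+1$ blocks — and absorbing the off-by-one using that the final block ending at $v_{D'}$ contributes at most once more — yields $\sum_{i : |J(v_{i-1})| > 1} |J(v_i)| \le 2d$.

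The step I expect to be the main obstacle is pinning down the per-block bound $\sum_{i \in \text{block}} |J(v_i)| \le 2$ cleanly, since it requires carefully re-reading the "leveled adaptive" cleanup structure in \Cref{def:k_clean_PDT} and matching it with the $\rank$-versus-$\rank_1$ potential argument from \Cref{lem:clean_subspace}: the subtlety is that a single original query $Q_{v_{i_a}}$ may span several new singletons, but each such singleton is "paid for" either by the rank increase of $Q_{v_{i_a}}$ itself (at most one) or by a subsequent cleaning query (each of which has $|J| \le 1$ and increases $\rank_1 - \rank$), so the block's total $\sum |J(v_i)|$ telescopes to at most $2$. Once that local bound is in hand, the global summation over the $\le d$ clean nodes is routine, and one should double-check the index shift between "$v_{i-1}$" (the node being queried) and "$v_i$" in the statement versus the block decomposition, which just costs the factor $2$ rather than anything larger.
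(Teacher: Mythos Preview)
Your block decomposition is a reasonable instinct, but both of the structural claims you rely on are false.

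First, the assertion that a cleaning node $v$ has $|J(v)|\le 1$ because $Q_v=\{i\}$ is a singleton is wrong. The size of $J(v)$ is not the size of $Q_v$; it is the number of coordinates $j$ with $\{j\}\notin\Scal_v$ but $\{j\}\in\Span\langle\Scal_v\cup\{Q_v\}\rangle$. Since $v$ is cleaning, $\Scal_v$ is \emph{not} $k$-clean, so it may well contain a set $\{i,j\}$ with $\{j\}\notin\Scal_v$; then querying $\{i\}$ also fixes $j$, giving $|J(v)|\ge 2$. Concretely, take $\Scal_{v_{i_a}}=\{\emptyset\}$, $Q_{v_{i_a}}=\{1,2,3\}$, and $k\ge 3$: after the original query no singleton is fixed ($|J(v_{i_a})|=0$), the first cleaning query $\{1\}$ fixes only coordinate $1$ ($|J|=1$), and the second cleaning query $\{2\}$ fixes both $2$ and $3$ ($|J|=2$). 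So a cleaning node has $|J|>1$, the block total $\sum|J|$ equals $3$, and your per-block bound $\sum_{i\in\text{block}}|J(v_i)|\le 2$ fails. The $\rank$-versus-$\rank_1$ potential from \Cref{lem:clean_subspace} does not give a constant per block; it gives $\sum|J|\le 1+m$ where $m$ is the number of cleaning steps, which can be as large as $k-1$.

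The paper's argument goes in the opposite direction and is much shorter. The correct observation is that every cleaning node has $|J(v)|\ge 1$ (its mess-witness is newly fixed), so any node with $J(v_i)=\emptyset$ must be $k$-clean, and there are at most $d$ of those. From \Cref{fct:depth_to_singleton} one has $\sum_i(|J(v_i)|-1)\le 0$; splitting this sum over $|J|=0$, $|J|=1$, and $|J|>1$ gives
\[
\sum_{i:\,|J(v_i)|>1}\bigl(|J(v_i)|-1\bigr)\;\le\;\bigl|\{i:J(v_i)=\emptyset\}\bigr|\;\le\;d,
\]
and since $|J|-1\ge|J|/2$ whenever $|J|\ge 2$, the claimed bound $\sum_{|J(v_i)|>1}|J(v_i)|\le 2d$ follows immediately. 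No block decomposition is needed.
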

\begin{proof}
By \Cref{fct:depth_to_singleton} we have $\sum_{i=0}^{D'-1}{|J(v_i)|-1} \le 0$.
Since any $v_i$ with $J(v_i)=\emptyset$ is not cleaning and therefore must be $k$-clean.
Thus
$$
\sum_{i: |J(v_i)|>1} |J(v_i)|-1 \le \abs{\{i: J(v_i)=\emptyset\}}
\le d.
$$
For $|J(v_i)|>1$, we have $|J(v_i)|-1\ge|J(v_i)|/2$ and thus
$\sum_{i: |J(v_i)|>1} |J(v_i)| \le 2d$.
\end{proof}

\begin{lemma}[Clean parity decision tree]\label{lem:clean_PDT}
Let $k\ge2$ be an integer. 
Let $\Tcal$ be an arbitrary depth-$d$ parity decision tree. Then there exists a $k$-clean parity decision tree $\Tcal'$ of depth at most $d\cdot k$ equivalent to $\Tcal$. Moreover, any root-to-leaf path in $\Tcal'$ has at most $d$ nodes that are $k$-clean.
\end{lemma}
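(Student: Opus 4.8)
The plan is to build $\Tcal'$ from $\Tcal$ by a recursive top-down \emph{cleanup} procedure, processing the nodes of $\Tcal$ in BFS/DFS order and, at each node, inserting a non-adaptive block of ``cleaning'' queries (each querying a single coordinate that is a mess-witness) until the current subspace becomes $k$-clean. Concretely, I would maintain the invariant that whenever we are about to simulate an original query $Q_v$ of $\Tcal$, the node we sit at in $\Tcal'$ is already $k$-clean. To restore this invariant after querying $Q_v$, I apply \Cref{lem:clean_subspace} to the subspace $\Scal' = \Span\langle \Scal_v \cup \{Q_v\}\rangle$: it tells us that repeatedly adjoining singletons $\{i\}$ of mess-witnesses terminates in a canonical $k$-clean subspace $\Scal''$ of rank at most $k$ times the rank of $\Scal'$, and crucially that \emph{the resulting subspace does not depend on the order} in which mess-witnesses are chosen. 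This order-independence is exactly what lets me install these extra queries \emph{non-adaptively}: the sequence of coordinates $\{i_1\}, \ldots, \{i_{s-1}\}$ that clean up $\Scal_v$ (after $Q_v$) can be fixed once for the node $v$ and used identically along every branch passing through $v$, so all nodes at a given depth in the relevant subtree query the same thing, matching the ``leveled adaptive'' condition of \Cref{def:k_clean_PDT}.

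The key steps, in order, are: (i) formalize the recursion — process each original node $v$ of $\Tcal$; having already guaranteed that the root of the block corresponding to $v$ is $k$-clean, first query $Q_v$ (this is the block's first query, so $\ell(v)\ge 1$ as required), then append the cleaning queries $\{i_1\},\dots,\{i_{s-1}\}$ supplied by \Cref{lem:clean_subspace}, in a fixed order, obtaining $k$-clean nodes at the bottom of the block; then recurse into the two children of $v$. (ii) Check the three bullets of \Cref{def:k_clean_PDT}: non-clean internal nodes are precisely the cleaning nodes $\{i_j\}$, which by construction query a single mess-witness coordinate; every leaf inherits $k$-cleanness because we clean after the last original query; and the non-adaptive/leveled structure of each block holds by order-independence. (iii) Verify $\Tcal'\equiv\Tcal$: adding queries that are linear combinations of existing ones (the $Q_v$'s determine the path already, and every $\{i_j\}$ we add lies in the span of earlier queries on that path by definition of mess-witness) does not change the computed function. (iv) Bound the depth and the number of $k$-clean nodes per path: along any root-to-leaf path in $\Tcal'$, the original queries $Q_{v_0},\dots,Q_{v_{d'-1}}$ ($d'\le d$) are interleaved with cleaning queries, and \Cref{lem:clean_subspace} gives that the total number of cleaning queries inserted across the whole path is at most $(k-1)$ times the number of original queries, so the path length is at most $d + (k-1)d = dk$; moreover exactly one $k$-clean node sits at the end of each block, giving at most $d$ $k$-clean nodes along the path (one per original query, plus the leaf).

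The main obstacle I anticipate is making the ``non-adaptive block'' construction precise and consistent across branches. Naively, different branches below $v$ have different subspaces $\Scal_v$ (they differ by which side of each edge was taken), so it is not \emph{a priori} obvious that one fixed sequence of cleaning coordinates works for all of them. The resolution is that $\Scal_v$ depends only on the \emph{queries} $Q_{v_0},\dots,Q_{v_{i-1}}$ on the path to $v$, not on the branch taken — this is the remark in the excerpt that $\Scal_v = \Span\langle\{Q_{v_0},\dots,Q_{v_{i-1}}\}\rangle$ — so the set of mess-witnesses and hence the canonical cleaning sequence is genuinely a function of $v$ alone. Combined with the order-independence of \Cref{lem:clean_subspace}, this makes the block well-defined; carefully threading this observation through the recursion, and keeping the bookkeeping of depths straight when $J(v)$-sets have size larger than one (as in the footnote example $x_1x_2,x_1x_3,x_1x_4,x_1$), is the delicate part. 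The depth bound $dk$ and the $d$-many-$k$-clean-nodes bound then follow cleanly from \Cref{lem:clean_subspace} together with \Cref{fct:depth_to_singleton}.
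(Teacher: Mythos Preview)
Your approach is essentially the paper's: build $\Tcal'$ top-down, inserting after each original query $Q_v$ a non-adaptive block of singleton ``cleaning'' queries supplied by \Cref{lem:clean_subspace}, using order-independence to make the block well-defined. Two points need correction, however.

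First, your equivalence argument in step (iii) is backwards. You write that ``every $\{i_j\}$ we add lies in the span of earlier queries on that path by definition of mess-witness,'' but a mess-witness is precisely a coordinate $i$ with $\{i\}\notin\Scal$ (while some $S\ni i$, $|S|\le k$, satisfies $S\in\Scal$). So the cleaning queries are \emph{not} redundant linear combinations; they genuinely add information. Equivalence holds for a different reason: the output label at each leaf of $\Tcal'$ is inherited from the corresponding leaf of $\Tcal$, and the path through the original queries $Q_{v_0},\ldots,Q_{v_{d'-1}}$ is unchanged, so $\Tcal'$ merely refines $\Tcal$.

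Second, and more substantively, your construction as stated may produce an invalid tree. After cleaning, the subspace at the current node of $\Tcal'$ strictly contains $\Scal_v$, so it can happen that the \emph{next} original query $Q_{v_{i}}$ is already determined. For instance, with $k=2$, $Q_{v_0}=\{1,2\}$ and $Q_{v_1}=\{1\}$: cleaning after $Q_{v_0}$ adds the singleton $\{1\}$, and now $Q_{v_1}$ is implied by its ancestors, violating the definition of a parity decision tree. The paper's algorithm handles this explicitly (checking whether $\widehat{\Pcal_{v'}}(Q_v)=\pm1$ and, if so, skipping the query and descending directly to the appropriate child of $v$). You need to add this case; once you do, the depth bound still goes through because every query actually made in $\Tcal'$ is linearly independent of its predecessors, so the depth equals the rank of the final subspace, which is contained in the one-shot cleanup of $\Span\langle Q_{v_0},\ldots,Q_{v_{d'-1}}\rangle$ and hence has rank at most $d\cdot k$ by \Cref{lem:clean_subspace}.
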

\begin{proof}
We build $\Tcal'$ by the following recursive algorithm. An example of the algorithm is provided in \Cref{fig:clean_PDT}

\begin{algorithm}[ht]
\caption{Clean parity decision tree: build $\Tcal'$ from $\Tcal$}\label{alg:clean_PDT}
\DontPrintSemicolon
\LinesNumbered
\SetKwProg{proc}{Procedure}{}{}
\SetKwFunction{Build}{Build}
\KwIn{an arbitrary depth-$d$ parity decision tree $\Tcal$}
\KwOut{a parity decision tree $\Tcal'$ with desired properties}
$r\gets$ root of $\Tcal$\;
Initialize the root of $\Tcal'$ as $r'$\;
\Build{$r,r',1$}\;
\proc{\Build{$v,v',\ell$}}{
\tcc{$(v,v')$ are the current nodes on $(\Tcal,\Tcal')$; $\ell$ is the recursion depth.}
\lIf{$v$ is a leaf}{Label ${v'}$ with the label of $v$}
\Else{
$(v_-,v_+)\gets$ the left and right child of $v$\; 
\lIf{$\widehat{\Pcal_{v'}}(Q_v)=-1$}{\Build{$v_-,v',\ell+1$}}
\lElseIf{$\widehat{\Pcal_{v'}}(Q_v)=+1$}{\Build{$v_+,v',\ell+1$}}
\Else(\tcc*[f]{$\widehat{\Pcal_{v'}}(Q_v)=0$ due to \Cref{fct:coefficients_of_Pv}}){
$Q_{v'}\gets Q_v$\;
$(v'_-,v'_+)\gets$ the left and right child of $v'$\;
Initialize $O\gets\emptyset$\;
\While{$\Span\abra{\Scal_{v'}\cup\cbra{Q_{v'}}\cup O}$ is not $k$-clean}{
Update $O\gets O\cup\cbra{\cbra{i}}$, where $i$ is a \emph{mess-witness}
}
$\Tcal'$ non-adaptively queries every set (which is a singleton) in $O$ under $v'$ in arbitrary order\;
\lForEach{leaf $\hat v$ under $v'_-$}{\Build{$v_-,\hat{v},\ell+1$}\;
\lForEach{leaf $\hat v$ under $v'_+$}{\Build{$v_+,\hat{v},\ell+1$}}
}
}
}
}
\end{algorithm}

\begin{figure}[ht]
\centering
\scalebox{1.3}{\begin{tikzpicture}[
emptyC/.style={draw,circle,inner sep=2pt}, 
RedC/.style={circle,inner sep=2pt,fill=red},
BlueC/.style={circle,inner sep=2pt,fill=blue},
outE/.style={->,>=stealth,thick},
buildE/.style={-,dashed,very thin},
node distance=1cm]

\clip (-2,-3.8) rectangle (7,0.5);

\begin{scope}
\node[emptyC, label={above:$x_1x_2$}] (v0) at (0,0) {};
\node[emptyC, label={[xshift=-1mm]above:$x_2$}] (v1) [below left=of v0] {};
\node[emptyC, label={[xshift=1mm]above:$x_4$}] (v2) [below right=of v0] {};
\node[emptyC, label={[xshift=-2mm]above:$x_3$}] (v3) [below left=of v1,xshift=5mm] {};

\node (q1) [below right=of v1,xshift=-5mm] {\!$1$};
\node (q2) [below left=of v2,xshift=5mm] {$1$\!};
\node (q3) [below right=of v2,xshift=-5mm] {\!$0$};
\node (q4) [below left=of v3,xshift=7mm] {$0$\!};
\node (q5) [below right=of v3,xshift=-7mm] {\!$1$};

\foreach \startnode/\endnode in {v0/v1, v0/v2, v1/v3, v1/q1, v2/q2, v2/q3, v3/q4, v3/q5} 
\draw[outE] (\startnode) -- (\endnode);
\end{scope}
    
\begin{scope}[xshift=5cm]
\node[RedC, label={above:$x_1x_2$}] (u0) at (0,0) {};
\node[BlueC, label={[xshift=-1mm]above:$x_1$}] (u1) [below left=of u0] {};
\node[BlueC, label={[xshift=1mm]above:$x_1$}] (u2) [below right=of u0] {};
\node[RedC, label={[xshift=2mm]above:$x_3$}] (u3) [below right=of u1,xshift=-5mm] {};
\node[RedC, label={[xshift=-2mm]above:$x_4$}] (u4) [below left=of u2,xshift=5mm] {};
\node[RedC, label={[xshift=2mm]above:$x_4$}] (u5) [below right=of u2,xshift=-5mm] {};

\node (r1) [below left=of u1,xshift=5mm] {$1$\!};
\node (r2) [below left=of u3,xshift=7mm] {$0$\!};
\node (r3) [below right=of u3,xshift=-7mm] {\!$1$};
\node (r4) [below left=of u4,xshift=7mm] {$1$\!};
\node (r5) [below right=of u4,xshift=-7mm] {\!$0$};
\node (r6) [below left=of u5,xshift=7mm] {$1$\!};
\node (r7) [below right=of u5,xshift=-7mm] {\!$0$};

\foreach \startnode/\endnode in {u0/u1, u0/u2, u1/u3, u2/u4, u2/u5, u1/r1, u3/r2, u3/r3, u4/r4, u4/r5, u5/r6, u5/r7}
\draw[outE] (\startnode) -- (\endnode);
\end{scope}

\draw[buildE] (v0) -- (u0);
\draw[buildE] (v1) -- (r1);
\draw[buildE] (v2) edge [bend left=28] (u4);
\draw[buildE] (v2) edge [bend left=40] (u5);
\draw[buildE] (v3) edge [bend left=25] (u3);
\draw[buildE] (q1) edge [bend left=30] (r1);
\draw[buildE] (q2) edge [out=-25, in=-130] (r4);
\draw[buildE] (q2) edge [out=-15, in=-110] (r6);
\draw[buildE] (q3) edge [out=-20, in=140] (r5);
\draw[buildE] (q3) edge [out=-20, in=140] (r7);
\draw[buildE] (q4) edge [out=-25, in=-160] (r2);
\draw[buildE] (q5) edge [out=-25, in=150] (r3);
\end{tikzpicture}}
\caption{An example of the cleanup process with $k=2$ where the LHS is $\Tcal$ and the RHS is $\Tcal'$. All the left (resp., right) outgoing edges are labeled with $-1$ (resp., $+1$). {\color{red} Red nodes} and leaves are $k$-clean, and {\color{blue} blue nodes} are cleaning (i.e., non-adaptive queries). Nodes connected with dashed curves are invoked by \textsf{Build}.}\label{fig:clean_PDT}
\end{figure}
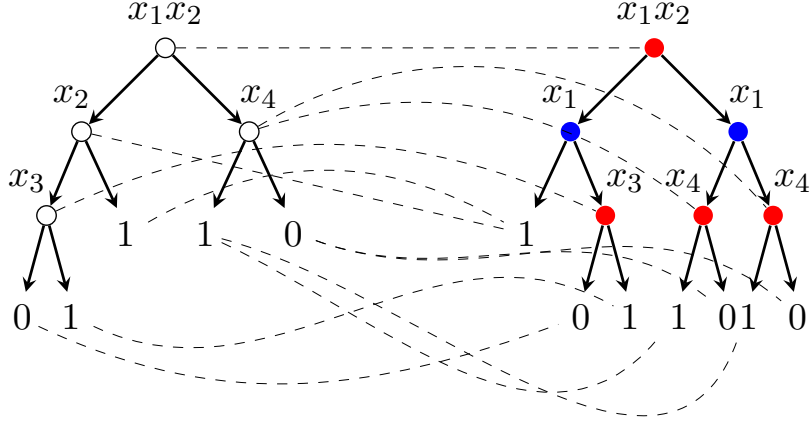

We now prove the correctness of \Cref{alg:clean_PDT}, which is guaranteed by the following claims.
\begin{itemize}
\item \fbox{For any internal node $v'\in\Tcal'$, $Q_{v'}$ is not implied by its ancestors' queries.} 
By \Cref{fct:coefficients_of_Pv}, this is equivalent to $Q_{v'}\notin\Scal_{v'}$, which follows from the conditions in \textsf{Line 8/9/13}.
\item \fbox{The depth of $\Tcal'$ is at most $d\cdot k$.}
Let $v_0,\ldots,v_{d'}$ be any root-to-leaf path of $\Tcal$ and let $\Pcal'$ be its corresponding path in $\Tcal'$. Then the construction process of $\Pcal'$ corresponds to the cleanup process for $\Span\abra{Q_{v_0},\ldots,Q_{v_{d'-1}}}$ in \Cref{lem:clean_subspace}; hence the depth of $\Tcal'$ equals $\rank(\Scal')\le d'\cdot k\le d\cdot k$ where $\Scal'$ is the $k$-clean subspace produced by applying \Cref{lem:clean_subspace}.
\item \fbox{$\Tcal\equiv\Tcal'$ and any root-to-leaf path in $\Tcal'$ has at most $d$ $k$-clean nodes.}
This is evident from the algorithm, as $\Tcal'$ only refines $\Tcal$ by inserting cleaning nodes.
\item \fbox{Whenever we call \Build{$\cdot,v',\cdot$}, $v'$ is $k$-clean.}
We prove by induction on $\ell$. The base case \textsf{Line 3} is obvious. For \textsf{Line 8/9}, we recurse on the same $v'$, which is $k$-clean by induction. 
For \textsf{Line 17/18}, note that $\Scal_{\hat{v}}=\Span\abra{\Scal_{v'}\cup\cbra{Q_{v'}}\cup O}$; hence from the condition in \textsf{Line 13}, it is $k$-clean.
\item \fbox{Nodes created in \textsf{Line 16} are cleaning.}
Let $o=|O|$ and let $i_1,i_2,\ldots,i_o$ be the query order. 
For any $j\in[o]$, let $v'_j$ be any one of the nodes created for $i_j$, then 
$$\Scal_{v'_j}=\Span\abra{\Scal_{v'}\cup\cbra{Q_{v'}}\cup\cbra{\cbra{i_1},\ldots,\cbra{i_{j-1}}}},$$
which is not $k$-clean by \textsf{Line 13}; hence $v'_j$ is cleaning by the condition in \textsf{Line 13}.
\qedhere
\end{itemize}
\end{proof}

\section{Fourier Bounds for Parity Decision Trees}\label{sec:Fourier_bounds}

Our goal in this section is to prove \Cref{thm:informal} with detailed bounds provided.

\subsection[Level-1 Bound]{Level-$1$ Bound}\label{sec:level_1}

We first prove the concentration result for level-$1$.
We start with the following simple bound for general parity decision trees.
\begin{lemma}\label{lem:level_1_D}
Let $\Tcal\colon\binpm^n\to\bin$ be a depth-$D$ parity decision tree.
Let $v_0,\ldots,v_{D'}$ be any root-to-leaf path.
Define $\vbm^{(0)},\ldots,\vbm^{(D')}\in\cbra{-1,0,+1}^n$ by setting $\vbm^{(i)}_j=\widehat{\Pcal_{v_i}}(j)$ for each $0\le i\le D'$ and $j\in[n]$.
Then for any $a_1,\ldots,a_n\in\cbra{-1,0,1}$, we have $\abs{\sum_{j=1}^na_j\cdot\vbm_j^{(D')}}\le D' \le D$.
\end{lemma}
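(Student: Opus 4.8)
The statement asserts that the partial assignment vector $\vbm^{(D')}$ at the end of any root-to-leaf path, restricted to the $n$ singleton coordinates, is $\pm1/0$-valued and supported on at most $D'$ coordinates; then any $\pm1/0$-combination of these entries has absolute value at most $D'$. The key structural fact is that $\vbm^{(i)}_j = \widehat{\Pcal_{v_i}}(\{j\})$, which by \Cref{fct:coefficients_of_Pv} lies in $\{-1,0,+1\}$, and which is nonzero precisely when the singleton $\{j\}$ lies in $\Scal_{v_i} = \Span\langle Q_{v_0},\ldots,Q_{v_{i-1}}\rangle$. So the plan is to count the number of coordinates $j$ for which $\vbm^{(D')}_j \neq 0$.

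\textbf{Key steps.} First I would observe that $\{j : \vbm^{(D')}_j \neq 0\} = \{j : \{j\} \in \Span\langle Q_{v_0},\ldots,Q_{v_{D'-1}}\rangle\}$. Second, since the span of $D'$ vectors over $\Fbb_2$ has dimension at most $D'$, it contains at most $D'$ linearly independent vectors; but distinct singletons $\{j\}$ (i.e., distinct standard basis vectors $e_j$) are always linearly independent over $\Fbb_2$, so at most $D'$ of them can lie in a subspace of dimension $\le D'$. This is exactly the counting already done in \Cref{fct:depth_to_singleton}: $\sum_{i=0}^{D'-1} |J(v_i)| = |\{j : \{j\} \in \Span\langle Q_{v_0},\ldots,Q_{v_{D'-1}}\rangle\}| \le D'$. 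Third, I would conclude
\[
\abs{\sum_{j=1}^n a_j \cdot \vbm_j^{(D')}} \le \sum_{j=1}^n |a_j| \cdot |\vbm_j^{(D')}| \le \abs{\{j : \vbm_j^{(D')} \neq 0\}} \le D' \le D,
\]
using $|a_j| \le 1$ and $|\vbm^{(D')}_j| \le 1$ for the first inequality, the support bound for the second, and $D' \le D$ (since the path has length at most the depth) for the last.

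\textbf{Main obstacle.} There is essentially no obstacle here — the only thing to be careful about is justifying the identification of the support of $\vbm^{(D')}$ with the set of singletons in the query span, which follows immediately from \Cref{fct:coefficients_of_Pv} together with the characterization of $\Scal_{v_{D'}}$ as $\Span\langle Q_{v_0},\ldots,Q_{v_{D'-1}}\rangle$. After that the bound is a one-line triangle inequality plus the linear-algebra counting fact already recorded as \Cref{fct:depth_to_singleton}. In the actual write-up I would simply cite \Cref{fct:coefficients_of_Pv} and \Cref{fct:depth_to_singleton} and present the displayed chain of inequalities above.
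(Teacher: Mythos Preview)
Your proposal is correct and follows essentially the same approach as the paper: identify the support of $\vbm^{(D')}$ with the set of singletons in $\Scal_{v_{D'}}$, bound its size by $D'$ via \Cref{fct:depth_to_singleton}, and conclude with the triangle inequality. The paper's write-up is just a slightly terser version of your displayed chain of inequalities.
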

\begin{proof}
Note that the set of non-zero coordinates in $\vbm^{(D')}$ is exactly $\bigcup_{i=0}^{D'-1}J(v_i)$. Hence by \Cref{fct:depth_to_singleton}, we have
\begin{equation*}
\abs{\sum_{j=1}^na_j\cdot\vbm_j^{(D')}}
\le\sum_{j=1}^n\abs{\vbm_j^{(D')}}
=\sum_{i=0}^{D'-1}|J(v_i)|
\le D' \le D.
\tag*{\qedhere}
\end{equation*}
\end{proof}

Now we give an improved bound for $k$-clean parity decision trees.
To do so, we need one more notation which will be crucial in our analysis.
\begin{notation}
Let $\Tcal$ be a $k$-clean parity decision tree.
For any node $v$, we define $C(v)$ as the nearest ancestor of $v$ (including itself) that is $k$-clean.
\end{notation}
\begin{lemma}\label{lem:level_1_sqrtD}
There exists a universal constant $\kappa\ge1$ such that the following holds.
Let $\Tcal\colon\binpm^n\!\to\bin$ be a depth-$D$ $2k$-clean parity decision tree where $k\ge1$ and any root-to-leaf path has at most $d$ nodes that are $2k$-clean.

Let $v_0,\ldots,v_{D'}$ be a random root-to-leaf path.
Define $\vbm^{(0)},\ldots,\vbm^{(D')}\in\cbra{-1,0,+1}^n$ by setting $\vbm^{(i)}_j=\widehat{\Pcal_{v_i}}(j)$ for each $0\le i\le D'$ and $j\in[n]$.
Then for any $a_1,\ldots,a_n\in\cbra{-1,0,1}$ and any $\eps\le1/2$, we have
$\Pr\sbra{\abs{\sum_{j=1}^na_j\cdot\vbm_j^{(D')}}\ge R(D,d,k,\eps)}\le\eps,$ where
$$
R(D,d,k,\eps)=\kappa\cdot\sqrt{\pbra{D+dk\pbra{\frac{1}\eps}^{\frac1k}}\log\pbra{\frac1\eps}}.
$$
\end{lemma}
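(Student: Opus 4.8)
The plan is to prove this level-$1$ bound by viewing the random walk $X^{(i)} := \sum_{j=1}^n a_j \cdot \vbm^{(i)}_j$ as a martingale-like process and applying the adaptive Azuma inequality (\Cref{lem:adaptive_azuma}). The key structural feature, as in the technical overview, is to split the increment $X^{(i)} - X^{(i-1)}$ according to the node $v_{i-1}$ being $2k$-clean (a ``fresh'' step) or cleaning. Conditioned on the reached node, the newly fixed coordinates $J(v_{i-1})$ all become $\pm 1$ times a single fresh uniform bit $z^{(i)} \sim \binpm$ (plus possibly a deterministic shift coming from pairs of correlated coordinates), so $X^{(i)} - X^{(i-1)} = \Delta^{(i)} z^{(i)} + \mu^{(i)}$ where $\Delta^{(i)} = \sum_{j \in J(v_{i-1})} a_j \widehat{\Pcal_{v_{i-1}}}(j) \cdot (\pm 1)$ and $\mu^{(i)}$ is determined by $C(v_{i-1})$. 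Since at level $1$ the ``induction hypothesis'' is trivial ($|X^{(i-1)}_{T'}| \le 1$ for singletons $T'$), there is no genuine induction here — this lemma is the base case.

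\textbf{Bounding the step sizes.} The heart of the matter is to control $\sum_i |\Delta^{(i)}|^2$ with high probability. For a cleaning step ($|J(v_{i-1})| = 1$) we simply have $|\Delta^{(i)}| \le 1$, and by \Cref{fct:depth_to_singleton} the total number of such steps along any path is at most $D'$, contributing at most $D' \le D$ to the sum of squares. For a fresh step at a $2k$-clean node $v_{i-1}$, the variables $\{x_j : j \in J(v_{i-1})\}$ are actually revealed by a later query, but conditioned on $C(v_{i-1})$ the queries within the cleanup block are non-adaptive, so $\Delta^{(i)}$ depends only on $C(v_{i-1})$ and $z^{(i)}$ is a fresh uniform sign. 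I would bound $|\Delta^{(i)}| \le |J(v_i')|$ where $v_i'$ is the child, use $2k$-wise independence of the coordinates fixed in the cleanup block together with \Cref{lem:low_degree_polys} (with $d = 1$) to argue $\E[|\Delta^{(i)}|^2 \mid C(v_{i-1})] = O(|J \text{ revealed}|)$ and that this concentrates; then sum over the at most $d$ fresh steps and invoke \Cref{cor:depth_to_singleton} to see the total revealed count is $O(d)$. The $dk(1/\eps)^{1/k}$ term arises from the tail bound in \Cref{lem:low_degree_polys}: to make the failure probability $\le \eps$ over all $d$ cleanup blocks we pay $\alpha = (d/\eps)^{1/(2k)}$-type deviations, i.e. $\alpha^2 = (d/\eps)^{1/k}$ extra in each block's contribution.

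\textbf{Assembling the pieces.} Once I have shown $\Pr[\sum_i |\Delta^{(i)}|^2 > U] \le \eps/2$ with $U = O\big((D + dk(1/\eps)^{1/k})\big)$ and $\Pr[\sum_i |\mu^{(i)}| > V] \le \eps/2$ with a similar $V$ (handled the same way, since $\mu^{(i)}$ is also a low-degree polynomial in the block variables whose coefficients are bounded by $1$), \Cref{lem:adaptive_azuma} with $m = 1$, $\beta = \Theta(\sqrt{\log(1/\eps)})$ gives $\Pr[\max_i |X^{(i)}| \ge V + \beta\sqrt{2U}] \le \eps + 2 e^{-\beta^2/2} \le 2\eps$; rescaling $\eps$ and absorbing constants into $\kappa$ yields exactly $R(D,d,k,\eps) = \kappa \sqrt{(D + dk(1/\eps)^{1/k})\log(1/\eps)}$.

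\textbf{Main obstacle.} The delicate point I expect to be the main source of friction is making the conditioning on $C(v_{i-1})$ rather than $v_{i-1}$ rigorous while keeping $X^{(i)}$ a legitimate adaptive martingale — one must check that $\Delta^{(i)}$ really is a fixed value given $z^{(1)}, \dots, z^{(i-1)}$, which requires identifying the fresh bit $z^{(i)}$ correctly (it is the bit determining which child of the last query in the block one follows, shared across all of $J$ up to signs) and verifying that the non-adaptivity inside cleanup blocks means $\Delta^{(i)}$ does not secretly depend on intermediate branchings. The bookkeeping that matches ``sum of $|J|$ over fresh steps'' to $O(d)$ via \Cref{cor:depth_to_singleton}, and the simultaneous control of both $\sum |\Delta^{(i)}|^2$ and $\sum |\mu^{(i)}|$ under one $2k$-wise-independence/hypercontractivity argument per block, is the technically heaviest part but should go through routinely given \Cref{lem:low_degree_polys}.
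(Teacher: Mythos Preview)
Your high-level plan (martingale + adaptive Azuma + $2k$-wise independence via \Cref{lem:low_degree_polys}) matches the paper's, but several details are off and one key idea is missing.

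First, at level $1$ there is \emph{no} drift term: for each $j\in J(v_{i-1})$ one has $\vbm_j^{(i)}=c_j\cdot z^{(i)}$ with $c_j\in\binpm$ determined by $v_{i-1}$, so $\delta^{(i)}=X^{(i)}-X^{(i-1)}=z^{(i)}\cdot\sum_{j\in J(v_{i-1})}a_j c_j$ is a pure martingale increment. Your ``deterministic shift coming from pairs of correlated coordinates'' only appears at levels $\ell\ge2$ (even-cardinality $U$ in \Cref{eq:pdtdiffseq}); introducing $\mu^{(i)}$ here is harmless but unnecessary. Relatedly, $\Delta^{(i)}=|\delta^{(i)}|$ depends on $v_{i-1}$, not only on $C(v_{i-1})$: what depends only on $C(v_{i-1})$ is the set $J(v_{i-1})$, because of non-adaptivity within the cleanup block. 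Also, your identification ``cleaning step $\Leftrightarrow|J(v_{i-1})|=1$'' is false (a singleton cleaning query can reveal several coordinates at once); the paper splits steps by $|J(v_{i-1})|\le1$ versus $|J(v_{i-1})|>1$, which is what actually makes the trivial bound $|\delta^{(i)}|\le1$ work on one side and yields the $D$ term.

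Second, the step you flag as ``technically heaviest'' --- controlling $\sum_i(\delta^{(i)})^2$ --- is where the paper does something you are missing. Your plan is a per-block concentration plus union bound; the paper instead computes the $2k$-th moment $\E\!\big[(\delta^{(i)})^{2k}\mid C(v_{i-1})\big]\le(2k-1)^k|J(v_{i-1})|^k$ and then uses the convexity trick of \Cref{fct:CS}: since $\sum_{i:|J|>1}|J(v_{i-1})|/(2d)\le1$ by \Cref{cor:depth_to_singleton}, the event $\sum_{i:|J|>1}(\delta^{(i)})^2>2\alpha^2 d$ forces $\sum_i(\delta^{(i)})^{2k}/|J(v_{i-1})|^{k-1}>2d\alpha^{2k}$, after which a \emph{single} Markov inequality on this aggregated sum gives $\Pr\le(2k-1)^k/\alpha^{2k}$. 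This is what produces exactly the $dk(1/\eps)^{1/k}$ term without accumulating $d^{1/k}$ or $\log D$ factors from a union bound. Your per-block route can likely be made to work, but it would be messier and would not directly yield the stated $R(D,d,k,\eps)$.
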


In the proof of \Cref{lem:level_1_sqrtD} we will use the following simple claim.
\begin{fact}\label{fct:CS}
Let $p_1, \ldots, p_n$ be a sub-probability distribution, i.e., $p_i\ge 0$ and $\sum_{i=1}^{n} p_i \le 1$.
Let $a_1, \ldots, a_n \in \mathbb{R}$.
Then for any $k\in\Nbb$, we have $\sum_{i=1}^{n} p_i a_i^{2k} \ge \pbra{\sum_{i=1}^{n} p_i a_i^2}^k$.
\end{fact}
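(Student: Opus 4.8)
\textbf{Proof proposal for \Cref{fct:CS}.}

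The plan is to prove this by induction on $k$, using the Cauchy--Schwarz inequality (or equivalently Jensen's inequality) at each step. The base cases $k=0$ and $k=1$ are immediate: for $k=0$ both sides equal a quantity that is trivially $\ge$ the right side (indeed $\sum p_i \le 1 \le$ well, one has to be slightly careful, but $k=0$ gives $\sum p_i \cdot 1 \ge 1$? No — so I would really start the induction at $k=1$, where both sides are literally equal). So the genuine content is the inductive step.

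For the inductive step, assume $\sum_i p_i a_i^{2k} \ge \left(\sum_i p_i a_i^2\right)^k$ and aim to show $\sum_i p_i a_i^{2(k+1)} \ge \left(\sum_i p_i a_i^2\right)^{k+1}$. The key move is Cauchy--Schwarz in the form
$$
\left(\sum_i p_i a_i^{2} \cdot a_i^{2k}\right)\left(\sum_i p_i\right) \ge \left(\sum_i p_i a_i^{k+1}\right)^2,
$$
but this does not obviously line up. A cleaner route: apply Cauchy--Schwarz as $\left(\sum_i p_i a_i^{2(k+1)}\right)\left(\sum_i p_i a_i^{2(k-1)}\right) \ge \left(\sum_i p_i a_i^{2k}\right)^2$, so that $\sum_i p_i a_i^{2(k+1)} \ge \left(\sum_i p_i a_i^{2k}\right)^2 / \left(\sum_i p_i a_i^{2(k-1)}\right)$, and then induct on the ratio. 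Alternatively, and most simply, I would use the power-mean / Jensen inequality directly: writing $q_i = p_i / \left(\sum_j p_j\right)$ as an honest probability distribution, the map $t \mapsto t^{k}$ is convex on $[0,\infty)$ for $k \ge 1$, so $\sum_i q_i (a_i^2)^{k} \ge \left(\sum_i q_i a_i^2\right)^{k}$, i.e. $\frac{1}{P}\sum_i p_i a_i^{2k} \ge \left(\frac{1}{P}\sum_i p_i a_i^2\right)^k$ where $P = \sum_i p_i \le 1$. Multiplying through by $P$ gives $\sum_i p_i a_i^{2k} \ge P^{1-k}\left(\sum_i p_i a_i^2\right)^k \ge \left(\sum_i p_i a_i^2\right)^k$, where the last step uses $P \le 1$ and $1 - k \le 0$.

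The only mild subtlety — and the step I expect needs the most care — is the handling of the sub-probability normalization: the inequality $P^{1-k} \ge 1$ requires $P \le 1$ and $k \ge 1$, and one must also address the degenerate case $P = 0$ (where both sides are $0$ and the claim holds trivially) so that dividing by $P$ is legitimate. Everything else is a one-line invocation of Jensen's inequality for the convex function $t \mapsto t^k$. I would write the proof in the Jensen form since it avoids the iterated-Cauchy--Schwarz bookkeeping entirely.
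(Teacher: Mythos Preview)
Your proposal is correct and takes essentially the same approach as the paper: both reduce to Jensen's inequality $\E[X^k]\ge\E[X]^k$ for the nonnegative random variable $X=a_i^2$. The only cosmetic difference is that the paper handles the sub-probability by appending a dummy atom $p_{n+1}=1-\sum_i p_i$ with $a_{n+1}=0$ (so no division is needed), whereas you normalize by $P=\sum_i p_i$ and use $P^{1-k}\ge 1$; your observation that the $k=0$ case (and $P=0$ case) needs separate treatment is apt, and the paper's application only uses $k\ge 1$ anyway.
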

\begin{proof}
We add $p_{n+1} = 1-\pbra{\sum_{i=1}^{n}p_i}$ and $a_{n+1} = 0$ so $p$ is a probability distribution. Then the claim follows from $\E[X^k] \ge \E[X]^k$, where random variable $X$ gets value $a_i^2$ with probability $p_i$. 
\end{proof}

\begin{proof}[Proof of~\Cref{lem:level_1_sqrtD}]
Extend $\vbm^{(D'+1)}=\cdots=\vbm^{(D)}$ to equal $\vbm^{(D')}$.
For each $0\le i\le D$, let $X^{(i)}=\sum_{j=1}^na_j\cdot\vbm_j^{(i)}$. 
We define $\delta^{(i)}=0$ for $D'<i\le D$. 
For $1\le i\le D'$, we let
$$
\delta^{(i)}=X^{(i)}-X^{(i-1)}
=\sum_{j=1}^na_j\cdot\pbra{\vbm_j^{(i)}-\vbm_j^{(i-1)}}
=\sum_{j\in J(v_{i-1})}a_j\cdot\vbm_j^{(i)},
$$
where $J(v_{i-1})$ depends only on $C(v_{i-1})$ since $\Tcal_{C(v_{i-1})}$ performs non-adaptive queries before (and possibly even after) reaching  $v_i$.
Note that for the two possible outcomes of querying $Q_{v_i}$, $\vbm_j^{(i)}$ is fixed to $\pm1$ respectively for each $j\in J(v_{i-1})$.
Thus $\delta^{(i)}=\Delta^{(i)}\cdot z^{(i)}$ where $\Delta^{(i)}$ is a fixed value given $z^{(1)},\ldots,z^{(i-1)}$ and $z^{(1)},\ldots,z^{(D')}$ are independent unbiased coins in $\binpm$. 

Since $C(v_{i-1})$ is $2k$-clean, the collection of random variables $\cbra{\vbm_j^{(i)}\mid j\in J(v_{i-1})}$ is $2k$-wise independent conditioning on $C(v_{i-1})$.
Note that $\delta_i$ is a linear function and
$$
\E\sbra{\delta^{(i)}\mid C(v_{i-1})}=0
\quad\text{and}\quad
\E\sbra{\pbra{\delta^{(i)}}^2\mid C(v_{i-1})}=\sum_{j\in J(v_{i-1})}a_j^2\le\abs{J(v_{i-1})}.
$$
By the first bound in \Cref{lem:low_degree_polys}, we have
\begin{equation}\label{eq:level_1_sqrtD_1}
\E\sbra{\pbra{\delta^{(i)}}^{2k}\mid C(v_{i-1})} 
\le(2k-1)^k \cdot\abs{J(v_{i-1})}^k.
\end{equation}
Meanwhile, $\abs{\delta^{(i)}}\le\abs{J(v_{i-1})}$ always.
Our first goal is to bound $\Pr\sbra{\sum_{i=1}^D\pbra{\delta^{(i)}}^2>D+2\alpha^2d}$.
Observe that whenever the event $\sum_{i=1}^D\pbra{\delta^{(i)}}^2>D+2\alpha^2d$ happens, it must be the case that $\sum_{i: |J(v_{i-1})|>1} \pbra{\delta^{(i)}}^2  > 2\alpha^2 d$.
Thus,
\begin{align*}\Pr\sbra{\sum_{i=1}^D\pbra{\delta^{(i)}}^2>D+2\alpha^2d} 
&\le 
\Pr\sbra{\sum_{i: |J(v_{i-1})|>1} \pbra{\delta^{(i)}}^2  > 2\alpha^2 d}\\
&=
\Pr\sbra{\sum_{i: |J(v_{i-1})|>1} \frac{|J(v_{i-1})|}{2d}\cdot \frac{\pbra{\delta^{(i)}}^2}{|J(v_{i-1})|}  > \alpha^2}\\
&\le 
\Pr\sbra{\sum_{i: |J(v_{i-1})|>1} \frac{|J(v_{i-1})|}{2d}\cdot \frac{\pbra{\delta^{(i)}}^{2k}}{|J(v_{i-1})|^k}  > \alpha^{2k}}
\tag{by \Cref{fct:CS} and \Cref{cor:depth_to_singleton}}\\
&=\Pr\sbra{\sum_{i: |J(v_{i-1})|>1} \frac{\pbra{\delta^{(i)}}^{2k}}{|J(v_{i-1})|^{k-1}}  > 2d \cdot \alpha^{2k}}\\
&\le \E\sbra{\sum_{i: |J(v_{i-1})|>1} \frac{\pbra{\delta^{(i)}}^{2k}}{|J(v_{i-1})|^{k-1}}} \cdot \frac{1}{2d \cdot \alpha^{2k}}.
\tag{by Markov's inequality}
\end{align*}
On the other hand,
\begin{align*}
\E\sbra{\sum_{i: |J(v_{i-1})|>1} \frac{\pbra{\delta^{(i)}}^{2k}}{|J(v_{i-1})|^{k-1}}}
&= \sum_{i=1}^{D} \E_{C(v_{i-1})}\sbra{\frac{\indicator_{|J(v_{i-1})|>1}}{|J(v_{i-1})|^{k-1}} \cdot \E\sbra{\pbra{\delta^{(i)}}^{2k}\mid C(v_{i-1})}}\\
&\le \sum_{i=1}^{D} \E_{C(v_{i-1})}\sbra{\indicator_{|J(v_{i-1})|>1}\cdot(2k-1)^k \cdot |J(v_{i-1})|}
\tag{by \Cref{eq:level_1_sqrtD_1}}\\
&=  (2k-1)^k  \cdot \E\sbra{\sum_{i:|J(v_{i-1}|>1} |J(v_{i-1})|}\\
&\le (2k-1)^k \cdot 2d.
\tag{by \Cref{cor:depth_to_singleton}}
\end{align*}
Overall, we have
$$
\Pr\sbra{\sum_{i=1}^D\pbra{\delta^{(i)}}^2>D+2\alpha^2d} \le \frac{(2k-1)^k}{\alpha^{2k}}.
$$

Then by \Cref{lem:adaptive_azuma} with $m=1$, we have 
$$
\Pr\sbra{\abs{X^{(D)}}=\abs{\sum_{j=1}^na_j\cdot\vbm_j^{(D)}}\ge\beta\sqrt{2\cdot\pbra{D+2\alpha^2d}}}
\le2\cdot e^{-\beta^2/2}+\frac{(2k-1)^k}{\alpha^{2k}}.
$$
The desired bound follows from setting 
\begin{equation*}
\alpha=\pbra{\frac{2}\eps}^{\frac1{2k}}\sqrt{2k-1},
\quad\text{and}\quad
\beta=\Theta\pbra{\sqrt{\log\pbra{\frac1\eps}}}.
\tag*{\qedhere}
\end{equation*}
\end{proof}

Now we prove the complete level-$1$ bound for parity decision trees.
\begin{theorem}\label{thm:level_1}
Let $\Tcal\colon\binpm^n\to\bin$ be a depth-$d$ parity decision tree. Let $p=\Pr\sbra{\Tcal(x)=1}\in\sbra{2^{-d},1/2}$.\footnote{If $p<2^{-d}$, then $p=0$ and $\Tcal\equiv0$. If $p>1/2$, we can consider $\tilde\Tcal=1-\Tcal$ by symmetry.}  Then we have 
$$
\sum_{j=1}^n\abs{\widehat\Tcal(j)}\le  
p\cdot \min\cbra{d,
O\pbra{\sqrt{d}\cdot \log\pbra{\frac1p}}}
=O\pbra{\sqrt {d}}.
$$
\end{theorem}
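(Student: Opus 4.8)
The plan is to reduce to the $2k$-clean setting and then invoke \Cref{lem:level_1_sqrtD}. First I would apply \Cref{lem:clean_PDT} with an appropriate choice of $k = \Theta(\log(1/p))$ to replace $\Tcal$ by an equivalent $2k$-clean parity decision tree $\Tcal'$ of depth $D \le 2kd$ in which every root-to-leaf path contains at most $d$ nodes that are $2k$-clean. Since $\Tcal'\equiv\Tcal$, we have $\widehat{\Tcal}(\{j\}) = \widehat{\Tcal'}(\{j\})$ for all $j$, so it suffices to bound $\sum_{j} |\widehat{\Tcal'}(\{j\})|$.

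Next, following the telescoping-martingale identity sketched for standard decision trees (\Cref{eq:dt_2}), I would write $a_j = \sgn(\widehat{\Tcal'}(\{j\}))$, note that $\{j\}$ has size $\ell = 1 \le 2k$, and use \Cref{lem:clean_to_Fourier} to get
$$
\sum_{j=1}^n |\widehat{\Tcal'}(\{j\})| = \sum_{j=1}^n a_j\,\widehat{\Tcal'}(\{j\}) = \E_{v_0,\ldots,v_{D'}}\!\sbra{\Tcal'(v_{D'})\cdot \sum_{j=1}^n a_j\,\vbm_j^{(D')}}.
$$
Because $\Tcal'(v_{D'})\in\bin$, only paths ending in a leaf labeled $1$ contribute, and these occur with probability exactly $p$. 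Conditioned on reaching such a leaf, the quantity $\abs{\sum_j a_j\,\vbm_j^{(D')}}$ is governed by \Cref{lem:level_1_sqrtD} (with $\eps$ a free parameter to be optimized), which bounds it by $R(D,d,2k,\eps)$ except with probability $\eps$; since it is also trivially bounded by $D'\le D$ via \Cref{lem:level_1_D}, I would split the expectation as $p\cdot\big(R(D,d,2k,\eps) + \eps\cdot D\big)$ — actually more carefully, the contribution is at most $p\cdot R + (\text{prob. of bad event among accepting paths})\cdot D$, and the bad-event probability among all paths is $\le\eps$, so the whole expectation is $\le p\cdot R(D,d,2k,\eps) + \eps\cdot D$.

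It remains to choose parameters. With $D\le 2kd$ and $R(D,d,2k,\eps)=\kappa\sqrt{(D + 2kd(1/\eps)^{1/(2k)})\log(1/\eps)}$, I would pick $\eps = p$ (so $\eps\cdot D \le 2kdp = O(dp\log(1/p))$, which is dominated by the trivial bound $pd$ only when $\log(1/p)=O(1)$ — so instead I should set $\eps$ smaller, say $\eps = p/d$ or $\eps = 2^{-d}$, making $\eps\cdot D$ negligible) and $k = \Theta(\log(1/p))$ so that $(1/\eps)^{1/(2k)} = O(1)$; then $D + 2kd\cdot O(1) = O(d\log(1/p))$ and $\log(1/\eps) = O(\log(1/p) + \log d)$. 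The main obstacle — and the place requiring genuine care — is juggling these parameters so that the final bound comes out as $O(p\sqrt{d}\,\log(1/p))$ rather than picking up extra logarithmic factors: one must check that $k=\Theta(\log(1/p))$ simultaneously kills the $(1/\eps)^{1/(2k)}$ blowup, keeps $D=O(kd)$ from overwhelming the $\sqrt{d}$, and keeps $\log(1/\eps)$ at the scale $\log^2(1/p)$ so the square root contributes one factor of $\log(1/p)$. Finally, I would combine with the trivial bound $\sum_j|\widehat{\Tcal}(j)| \le pd$ (which follows either from \Cref{lem:level_1_D} applied to all accepting paths, or directly since $L_1$ of a depth-$d$ decision tree restricted to level $1$ is at most $d$ times the acceptance probability) to get the stated minimum, and observe that $\min\{d,\ O(\sqrt d\log(1/p))\}=O(\sqrt d)$ since the two bounds cross near $\log(1/p)\approx\sqrt d$.
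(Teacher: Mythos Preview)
Your overall strategy---clean up $\Tcal$ via \Cref{lem:clean_PDT} with $k=\Theta(\log(1/p))$, express the level-$1$ sum through \Cref{lem:clean_to_Fourier}, then control $\abs{\sum_j a_j \ubm_j^{(D')}}$ by \Cref{lem:level_1_sqrtD}---matches the paper exactly, as does your derivation of the first bound $p\cdot d$ from \Cref{lem:level_1_D}.

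The gap is in the tail analysis for the second bound. Your single-threshold split $p\cdot R(\eps)+\eps\cdot D$ cannot recover $O(p\sqrt d\,\log(1/p))$: to keep $(1/\eps)^{1/k}=O(1)$ with $k=\Theta(\log(1/p))$ you need $\eps\ge p^{O(1)}$, but to make $\eps D=\eps\cdot\Theta(d\log(1/p))$ at most $p\sqrt d\,\log(1/p)$ you need $\eps=O(p/\sqrt d)$; these constraints are incompatible once $\log d\gg\log(1/p)$ (e.g.\ $p=1/2$, large $d$). The best a single threshold yields is $O(p\sqrt d\,\log(d/p))$, hence only $O(\sqrt d\,\log d)$ overall rather than $O(\sqrt d)$. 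The paper never falls back on the crude $|U|\le D$: instead it applies \Cref{lem:level_1_sqrtD} at \emph{every} scale $\eps=p/2^i$ and integrates the tail, writing
\[
\E\sbra{\Tcal'(u_{D'})\cdot|U|}\le\sum_{i\ge 0}R\!\pbra{\frac{p}{2^{i+1}}}\cdot\frac{p}{2^i}.
\]
With $k=\Theta(\log(1/p))$ each summand is $O\bigl(p\cdot 2^{-i}\sqrt{dk\cdot(2^{i+1}/p)^{1/k}\cdot(\log(1/p)+i+1)}\bigr)$, and the geometric decay in $i$ dominates the mild growth of the other factors, giving $O(p\sqrt{dk\log(1/p)})=O(p\sqrt d\,\log(1/p))$.
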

\begin{proof}
For any $i\in[n]$, let $a_i=\sgn\pbra{\widehat\Tcal(i)}$. 
Now we prove the two bounds separately.

\paragraph*{\fbox{First Bound.}}
Let $v_0,\ldots,v_{d'}$ be a random root-to-leaf path in $\Tcal$.
Define $\vbm^{(0)},\ldots,\vbm^{(d')}\in\cbra{-1,0,+1}^n$ by setting $\vbm^{(i)}_j=\widehat{\Pcal_{v_i}}(j)$ for each $0\le i\le d'$ and $j\in[n]$.
Since $\Tcal$ is $1$-clean in itself, by \Cref{lem:clean_to_Fourier} we have
\begin{equation}\label{eq:level_1_1}
\sum_{j=1}^n\abs{\widehat\Tcal(j)}
=\sum_{j=1}^na_i\cdot\widehat\Tcal(j)
=\E_{v_0,\ldots,v_{d'}}\sbra{\Tcal(v_{d'})\cdot\sum_{j=1}^na_j\cdot\vbm_j^{(d')}}
\le\E_{v_0,\ldots,v_{d'}}\sbra{\Tcal(v_{d'})\cdot|V|},
\end{equation}
where $V=\sum_{j=1}^na_j\cdot\vbm_j^{(d')}$.
Hence by \Cref{lem:level_1_D}, we have $\Cref{eq:level_1_1}\le d\cdot\E\sbra{\Tcal(v_{d'})}=p\cdot d$.

\paragraph*{\fbox{Second Bound.}}
By \Cref{lem:clean_PDT}, we construct a $2k$-clean parity decision tree $\Tcal'$ of depth $D\le2d\cdot k$ equivalent to $\Tcal$, where $k=\Theta(\log(1/p))$. Let $U=\sum_{j=1}^na_j\cdot\ubm_j^{(D')}$. Then we have 
\begin{equation}\label{eq:level_1_3}
\sum_{j=1}^n\abs{\widehat\Tcal(j)}
=\sum_{j=1}^n\abs{\widehat{\Tcal'}(j)}
=\E_{u_0,\ldots,u_{D'}}\sbra{\Tcal'(u_{D'})\cdot\sum_{j=1}^na_j\cdot\ubm_j^{(D')}}
\le\E_{u_0,\ldots,u_{D'}}\sbra{\Tcal'(u_{D'})\cdot|U|}.  
\end{equation}
\Cref{lem:level_1_sqrtD} implies that for all $\eps>0$, $\Pr\sbra{|U|\ge R(\eps)}\le\eps$ where 
\[ 
R(\eps) = R(D,d,k,\eps)  = O\pbra{\sqrt{dk\cdot\pbra{\frac{1}\eps}^{\frac1k} \cdot \log\pbra{\frac 1\eps}}}.
\]
For integer $i\ge1$, let $I_{i}=\sbra{R\pbra{p/2^i},R\pbra{p/2^{i+1}}}$ and $I_0=\sbra{0,R(p/2)}$ be intervals. Then for each $i\ge1$, $\Pr\sbra{\abs{U}\in I_i}\le p/2^i$. We also know that $\E_{u_0,\ldots,u_{D'}}\sbra{\Tcal'(u_{D'})}\le p$. Thus,
\begin{align*}\Cref{eq:level_1_3}
&=\E_{u_0,\ldots,u_{D'}}\sbra{\Tcal'(u_{D'})\cdot|U|\cdot \sum_{i=0}^{+\infty} \indicator_{\abs{U}\in I_i}}\\
&\le R\pbra{\frac{p}{2}}\cdot  \E_{u_0,\ldots,u_{D'}}\sbra{\Tcal'(u_{D'})}+ \sum_{i=1}^{+\infty} R\pbra{\frac{p}{2^{i+1}}}\cdot  \E_{u_0,\ldots,u_{D'}}\sbra{\indicator_{\abs{U}\in I_i}}\\
&\le\sum_{i=0}^{+\infty} R\pbra{\frac{p}{2^{i+1}}}\cdot \frac{p}{2^i}\\
&= \sum_{i=0}^{+\infty} O\pbra{ p\cdot \sqrt{ dk \cdot \pbra{\frac{2^{i+1}}{p}}^{\frac1k} \cdot \pbra{\log\pbra{\frac1p}+i+1}}} \cdot \frac{1}{2^i} \\
&= O\pbra{p\cdot\sqrt{dk\cdot \log\pbra{\frac 1p}}} = O\pbra{ p \cdot \sqrt{ d} \cdot \log\pbra{\frac1p}}.\qedhere
\end{align*}
\end{proof}

\subsection[Level-l Bound]{Level-$\ell$ Bound}\label{sec:level_l}

Now we turn to the general levels. 

\begin{lemma}\label{lem:level_l}
There exists a universal constant $\tau\ge1$ such that the following holds.
Let $\ell\ge1$ be an integer.
Let $\Tcal\colon\binpm^n\to\bin$ be a depth-$D$ $2k$-clean parity decision tree where $k\ge4\cdot\ell$ and $n\ge\max\cbra{\tau,k,D}$ and any root-to-leaf path has at most $d$ nodes that are $2k$-clean.

Let $v_0,\ldots,v_{D'}$ be a random root-to-leaf path.
Define $\vbm^{(0)},\ldots,\vbm^{(D')}\in\cbra{-1,0,+1}^n$ by setting $\vbm^{(i)}_j=\widehat{\Pcal_{v_i}}(j)$ for each $0\le i\le D'$ and $j\in[n]$.
Extend $\vbm^{(D'+1)}=\cdots=\vbm^{(D)}$ to equal $\vbm^{(D')}$.
Then for any sequence $a_S\in\cbra{-1,0,1},S\in\binom{[n]}\ell$, any $\eps\le1/2$ and $t\in\{0,\ldots,\ell\}$, we have
$$
\Pr\sbra{\exists t'\in\{0,\ldots,t\},\exists T\in\binom{[n]}{\ell-t'},\exists i\in[D],~
\abs{\sum_{S\subseteq\overline T,|S|=t'}a_{S\cup T}\cdot\vbm_S^{(i)}}\ge M(D,d,k,\ell,t',\eps)}
\le\eps\cdot t,
$$
where we recall that $\vbm_S^{(i)} =\prod_{j\in S} \vbm^{(i)}_j$ and where
$$
M(D,d,k,\ell,t',\eps)=\pbra{\tau\cdot (D+dk)\cdot\pbra{\frac{n^\ell}\eps}^{\frac6k}\log\pbra{\frac{n^\ell}\eps}}^{t'/2}.
$$ 
\end{lemma}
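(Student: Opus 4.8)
The plan is to prove \Cref{lem:level_l} by reverse induction on $t$, exactly mirroring the structure of \Cref{lem:overviewlemma} in the technical overview, with \Cref{lem:level_1_sqrtD} essentially serving as the base case $t=1$ (or rather the analysis that drives the single-step estimate for general $t'$). For the base case $t=0$ there is nothing to prove since $M(\cdots)=1$ (an empty product), $\abs{\sum_{S:|S|=0}a_T\cdot \vbm_S^{(i)}}=\abs{a_T}\le 1$, and the failure probability bound is $\eps\cdot 0=0$. For the inductive step, assume the statement holds for $t-1$; we must establish it for $t$, which amounts to adding control over all $T\in\binom{[n]}{\ell-t}$ while keeping the previously-established bounds for $t'\le t-1$.

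The key construction: fix $T$ with $|T|=\ell-t$ and define, along the random path, the variable $X_T^{(i)}:=\sum_{S\subseteq\overline T,|S|=t}a_{S\cup T}\cdot\vbm_S^{(i)}$, with $X_T^{(0)}=0$ (since $\vbm^{(0)}\equiv 0$). As derived in the overview, the increment decomposes as $X_T^{(i)}-X_T^{(i-1)}=Y_i+Z_i$ where $Y_i$ sums over even-size nonempty $J\subseteq J(v_{i-1})\cap\overline T$ the terms $x_J\cdot X_{J\cup T}^{(i-1)}$ and $Z_i$ does the same over odd-size $J$. Conditioning on $C(v_{i-1})$ — the nearest $2k$-clean ancestor — the newly-fixed coordinates $\{x_j:j\in J(v_{i-1})\}$ are pairwise either equal or opposite, so $Y_i$ becomes a fixed value $\mu^{(i)}$ and $Z_i=\Delta^{(i)}\cdot z^{(i)}$ with $z^{(i)}\sim\binpm$ a fresh unbiased coin and $\Delta^{(i)}=\abs{Z_i}$ determined by $C(v_{i-1})$ (using that queries inside a cleanup block are non-adaptive, so $J(v_{i-1})$ and all the $X_{J\cup T}^{(i-1)}$ for $J\subseteq J(v_{i-1})$ depend only on $C(v_{i-1})$). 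This puts us in position to apply \Cref{lem:adaptive_azuma} with $m=\binom n{\ell-t}\le n^\ell$ (union over all $T$), provided we supply the concentration bounds $\sum_i\abs{\Delta^{(i)}}^2\le U$ and $\sum_i\abs{\mu^{(i)}}\le V$ with high probability, for $U\approx\tau(D+dk)(n^\ell/\eps)^{6/k}\log(n^\ell/\eps)\cdot M(\cdots,t-1,\cdot)^2/(\text{something})$ type quantities — precisely so that $V+\beta\sqrt{2U}\le M(D,d,k,\ell,t,\eps)$ for $\beta=\Theta(\sqrt{\log(n^\ell/\eps)})$.

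The heart of the argument, and the main obstacle, is bounding $\sum_{i=1}^D\abs{Z_i}^2$ and $\sum_{i=1}^D\abs{Y_i}$ with failure probability at most $O(\eps)$ over and above the inductive failure probability $\eps(t-1)$. Conditioned on the inductive hypothesis holding (so that $\abs{X_{T'}^{(i-1)}}\le M(\cdots,j+\text{stuff},\cdot)$ for all the relevant $T'$ with $\abs{T'}=\abs{T}+j$), each $\abs{Z_i}$ and $\abs{Y_i}$ is a polynomial of degree $\le t$ in the $2k$-wise independent (given $C(v_{i-1})$) variables $\{x_j:j\in J(v_{i-1})\}$, with coefficients controlled by the inductive bounds: $\abs{Z_i}\lesssim\sum_{j\ge 1}\binom{\abs{J(v_{i-1})}}{j}\cdot M(\cdots,t-j,\cdot)$ in the worst case, but only $\E[\abs{Z_i}^2\mid C(v_{i-1})]\lesssim\sum_j\binom{\abs{J(v_{i-1})}}{j}M(\cdots,t-j,\cdot)^2$ in expectation — the gap between these is what forces the high-probability (not just expectation) analysis. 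The strategy, as in the proof of \Cref{lem:level_1_sqrtD}, is: (i) split the sum according to whether $\abs{J(v_{i-1})}>1$ or $=1$; for the singleton-query steps the bound is immediate (there's effectively one coin and $\abs{Z_i}\le M(\cdots,t-1,\cdot)$, contributing $\le D\cdot M(\cdots,t-1,\cdot)^2$); (ii) for the multi-query steps, use \Cref{cor:depth_to_singleton} ($\sum_{i:\abs{J(v_{i-1})}>1}\abs{J(v_{i-1})}\le 2d$) together with \Cref{fct:CS} and the hypercontractive moment bound \Cref{lem:low_degree_polys} applied to the degree-$\le t$ polynomial $Z_i$ (legitimate since $2t\cdot t\le 2k$ by $k\ge 4\ell\ge 4t$) to convert the second-moment estimate into a $(2k)$-th moment estimate, then apply Markov; this is exactly the $(n^\ell/\eps)^{6/k}$-type loss in $M$. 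One must also carefully account for which inductive bounds $M(\cdots,t-j,\cdot)$ are invoked — note $M$ is multiplicative in $D+dk$ to the power $(t-j)/2$, so $\binom{\abs{J}}{j}M(\cdots,t-j)$ summed against the constraint $\sum\abs{J}\le 2d$ telescopes to roughly $\sqrt{d}\cdot\sqrt{D+dk}^{t-1}\cdot(\ldots)\le M(\cdots,t,\cdot)$ after absorbing polylog and $n^{O(1/k)}$ factors into the generous slack in $M$'s definition. Bounding $\sum\abs{Y_i}$ is analogous and slightly easier since it's a first-moment-of-absolute-value rather than sum-of-squares, but uses the same split and the same hypercontractivity trick to get high-probability control; the contribution is the additive term $V$ in \Cref{lem:adaptive_azuma}. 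Finally, a union bound over the $\le n^\ell$ choices of $T$ and the inductive failure event gives total failure probability $\le\eps(t-1)+\eps=\eps t$, completing the induction; the claimed statement is the case $t=\ell$ read off with $t'$ ranging over $\{0,\ldots,\ell\}$.
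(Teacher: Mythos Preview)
Your overall architecture matches the paper's: induction on $t$, decomposing $X_T^{(i)}-X_T^{(i-1)}$ into an even drift $\mu_T^{(i)}$ and an odd martingale increment $\Delta_T^{(i)}\cdot z_T^{(i)}$, then invoking \Cref{lem:adaptive_azuma} once one has high-probability control of $\sum_i\abs{\mu_T^{(i)}}$ and $\sum_i\abs{\Delta_T^{(i)}}^2$ via the inductive hypothesis and hypercontractivity (\Cref{lem:low_degree_polys}).

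There is, however, a genuine gap in your step-size concentration argument. You assert that, conditioning on $C(v_{i-1})$, the coefficients $X_{J\cup T}^{(i-1)}$ of $Z_i$ (viewed as a polynomial in $\{x_j:j\in J(v_{i-1})\}$) are fixed. This is false: $X_{J\cup T}^{(i-1)}=\sum_{S}a_{S\cup J\cup T}\,\vbm_S^{(i-1)}$ depends on $\vbm^{(i-1)}$, which in turn depends on the \emph{values} of the coordinates in $L(v_{i-1})$ --- those fixed between $C(v_{i-1})$ and $v_{i-1}$. Non-adaptivity of the queries within a cleanup block only guarantees that the \emph{set} $L(v_{i-1})$ (and $J(v_{i-1})$) is determined by $C(v_{i-1})$ and $i$, not the outcomes of those queries. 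Conversely, if you condition on $v_{i-1}$ (which does fix those coefficients and is what you need for the martingale structure), the $J(v_{i-1})$-variables collapse to a single $\pm1$ coin and are no longer $2k$-wise independent, so \Cref{lem:low_degree_polys} cannot be applied.

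The paper resolves this via a second decomposition you are missing: it expands each $A(T,r,i)$ according to how many of the remaining $S$-coordinates lie in $L(v_{i-1})$, writing $A(T,r,i)=\sum_{r'=0}^{t-r}\Gamma_T^{(i)}(r,r')$, where $\Gamma_T^{(i)}(r,r')$ is a degree-$(r+r')$ polynomial in the \emph{combined} variable set $\{\vbm_j^{(i)}:j\in J(v_{i-1})\}\cup\{\vbm_j^{(i-1)}:j\in L(v_{i-1})\}$ with coefficients $X_{T\cup U\cup W}^{(c(i-1))}$ that \emph{are} determined by $C(v_{i-1})$. Since this combined set of variables is $2k$-wise independent given $C(v_{i-1})$ and $r+r'\le t\le\ell\le k/4$, \Cref{lem:low_degree_polys} now legitimately applies to each $\Gamma_T^{(i)}(r,r')$. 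This double decomposition is also what produces the $(D+dk)$, rather than just $dk$, in the definition of $M$: the crude bound $\abs{L(v_{i-1})}\le D$ enters through the variance estimate for $\Gamma_T^{(i)}(r,r')$, and without it your telescoping against $\sum\abs{J(v_{i-1})}\le 2d$ alone would not recover the stated bound.
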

\begin{proof}
We prove the bound by induction on $t=0,1,\ldots,\ell$ and show $\tau=10^4$ suffices.
The base case $t=0$ is trivial, since for any fixed $T$ and $i$, we always have $\abs{a_T\cdot\vbm_\emptyset^{(i)}}\le1=M(D,d,k,\ell,0,\eps)$.

Now we focus on the case where $1\le t\le\ell$. 
For each $0\le i\le D$ and $T\in\binom{[n]}{\ell-t}$, let $$X_T^{(i)}=\sum_{S\subseteq\overline T,|S|=t}a_{S\cup T}\cdot\vbm_S^{(i)}.$$
For $1\le i\le D'$, we have
\begin{align*}
X_T^{(i)}-X_T^{(i-1)}
&=\sum_{S\subseteq\overline T,|S|=t,S\cap J(v_{i-1})\neq\emptyset}a_{S\cup T}\cdot\vbm_S^{(i)}\\
&=\sum_{r=1}^t\sum_{\substack{U\subseteq J(v_{i-1})\cap\overline{T},\\|U|=r}}\vbm_U^{(i)}\sum_{\substack{V\subseteq\overline{T\cup J(v_{i-1})},\\|U|+|V|=t}}a_{T\cup U\cup V}\cdot\vbm_V^{(i)}\\
&=\sum_{r=1}^t\sum_{\substack{U\subseteq J(v_{i-1})\cap\overline{T},\\|U|=r}}\vbm_U^{(i)}\sum_{\substack{V\subseteq\overline{T\cup J(v_{i-1})},\\|U|+|V|=t}}a_{T\cup U\cup V}\cdot\vbm_V^{(i-1)}
\tag{since $\vbm_j^{(i)}=\vbm_j^{(i-1)}$ for all $j\notin J(v_{i-1})$}\\
&=\sum_{r=1}^t
\underbrace{\sum_{\substack{U\subseteq J(v_{i-1})\cap\overline{T},\\|U|=r}}\vbm_U^{(i)}\sum_{\substack{V\subseteq\overline{T\cup U},\\|U|+|V|=t}}a_{T\cup U\cup V}\cdot\vbm_V^{(i-1)}}_{A(T,r,i)}.
\tag{since $\vbm_j^{(i-1)}=0$ for all $j\in J(v_{i-1})$}
\end{align*}
Observe that conditioning on $v_{i-1}$,
\begin{itemize}
\item if $r$ is an even number, then $A(T,r,i)$ is a fixed value independent of $\vbm^{(i)}$; 
\item if $r$ is an odd number, then $A(T,r,i)$ is an unbiased coin with magnitude independent of $\vbm^{(i)}$.
\end{itemize}
Therefore, trying to apply \Cref{lem:adaptive_azuma}, we write $X_T^{(i)}-X_T^{(i-1)}=\mu_T^{(i)}+\Delta_T^{(i)}\cdot z_T^{(i)}$, where $z_T^{(1)},\ldots,z_T^{(D)}$ are independent unbiased coins in $\binpm$ and $\mu_T^{(i)}=\Delta_T^{(i)}=0$ for $D'<i\le D$ and
\begin{equation}\label{eq:def_mu_delta}
\mu_T^{(i)}=\sum_{\substack{r=2,\\\text{even}}}^tA(T,r,i)
\quad\text{and}\quad
\Delta_T^{(i)}=\abs{\sum_{\substack{r=1,\\\text{odd}}}^tA(T,r,i)}
\quad\text{for }1\le i\le D'.
\end{equation}

\paragraph*{\fbox{First Bound on $A(T,r,i)$.}}
Let $\Ecal_1$ be the following event:
\begin{align*}
\Ecal_1=\text{`` }\exists \hat t\in\cbra{0,\ldots,t-1},\exists T'\in\binom{[n]}{\ell-\hat t},\exists i'\in[D],~\abs{X_{T'}^{(i')}}\ge M\pbra{D,k,\ell,\hat t,\eps}\text{''}.
\end{align*}
By the induction hypothesis, we have 
\begin{equation}\label{eq:level_l_gamma}
\Pr\sbra{\Ecal_1}\le (t-1)\cdot \eps.
\end{equation}
We first derive a simple bound, that will be effective for small values of $|J(v_{i-1})|$.
\begin{claim}\label{clm:level_l_first}
When $\Ecal_1$ does not happen, $\abs{A(T,r,i)}\le\abs{J(v_{i-1})}^r\cdot M(D,d,k,\ell,t-r,\eps)$ holds for all $r\in[t],i\in[D],T\in\binom{[n]}{\ell-t}$.
\end{claim}
\begin{proof}
Since $\Ecal_1$ does not happen, by union bound we have
\begin{align*}
\abs{A(T,r,i)}
&=\abs{\sum_{\substack{U\subseteq J(v_{i-1})\cap\overline{T},\\|U|=r}}\vbm_U^{(i)}\sum_{\substack{V\subseteq\overline{T\cup U},\\|U|+|V|=t}}a_{T\cup U\cup V}\cdot\vbm_V^{(i-1)}}
\le\abs{J(v_{i-1})}^r\max_{U\subseteq\overline T,|U|=r}\abs{X_{T\cup U}^{(i-1)}}\\
&\le\abs{J(v_{i-1})}^r\cdot M(D,d,k,\ell,t-r,\eps).
\tag*{\qedhere}
\end{align*}
\end{proof}

\paragraph*{\fbox{Second Bound on $A(T,r,i)$.}}
The second bound requires a more refined decomposition on $A(T,r,i)$.

Assume that $c(i-1)$ is the index of $C(v_{i-1})$ in $v_0,\ldots,v_{D'}$, i.e., $v_{c(i-1)}=C(v_{i-1})$. This means that $v_{c{(i-1)}}$ is the closest ancestor to $v_{i-1}$ that is $2k$-clean.
Then define 
$$
L(v_{i-1})=\bigcup_{c(i-1)\le i'<i-1}J(v_{i'}).
$$
The elements of $L(v_{i-1})$ are precisely the coordinates fixed by the queries from $Q_{v_{c(i-1)}}$ to $Q_{v_{i-1}}$, excluding the latter. Since $\Tcal_{C(v_{i-1})}$ makes non-adaptive queries before (and possibly even after) reaching $v_i$, $L(v_{i-1})$ and $J(v_{i-1})$ depend only on $C(v_{i-1})$ and $i$.
We now expand $A(T,r,i)$ by also grouping terms based on the number of coordinates in $L(v_{i-1})$ as follows:
\begin{align*}
A(T,r,i)
&=\sum_{\substack{U\subseteq J(v_{i-1})\cap\overline{T},\\|U|=r}}\vbm_U^{(i)}\sum_{\substack{V\subseteq\overline{T\cup U},\\|U|+|V|=t}}a_{T\cup U\cup V}\cdot\vbm_V^{(i-1)}\\
&=\sum_{r'=0}^{t-r}\sum_{\substack{U\subseteq J(v_{i-1})\cap\overline{T},\\|U|=r}}\vbm_U^{(i)}
\sum_{\substack{W\subseteq L(v_{i-1})\cap\overline T,\\|W|=r'}}\vbm_W^{(i-1)}\sum_{\substack{W'\subseteq\overline{T\cup U\cup L(v_{i-1})}\\|W'|=t-r-r'}}a_{T\cup U\cup W\cup W'}\cdot\vbm_{W'}^{(i-1)}\\
&=\sum_{r'=0}^{t-r}\sum_{\substack{U\subseteq J(v_{i-1})\cap\overline{T},\\|U|=r}}\vbm_U^{(i)}
\sum_{\substack{W\subseteq L(v_{i-1})\cap\overline T,\\|W|=r'}}\vbm_W^{(i-1)}\sum_{\substack{W'\subseteq\overline{T\cup U\cup L(v_{i-1})}\\|W'|=t-r-r'}}a_{T\cup U\cup W\cup W'}\cdot\vbm_{W'}^{c(i-1)}
\tag{since $\vbm_j^{(i-1)}=\vbm_j^{c(i-1)}$ for all $j\notin L(v_{i-1})$}\\
&=\sum_{r'=0}^{t-r}\sum_{\substack{U\subseteq J(v_{i-1})\cap\overline{T},\\|U|=r}}\vbm_U^{(i)}
\sum_{\substack{W\subseteq L(v_{i-1})\cap\overline T,\\|W|=r'}}\vbm_W^{(i-1)}\sum_{\substack{W'\subseteq\overline{T\cup U\cup W}\\|W'|=t-r-r'}}a_{T\cup U\cup W\cup W'}\cdot\vbm_{W'}^{c(i-1)}
\tag{since $\vbm_j^{c(i-1)}=0$ for all $j\in L(v_{i-1})$}\\
&=\sum_{r'=0}^{t-r}\underbrace{
\sum_{\substack{U\subseteq J(v_{i-1})\cap\overline T,\\|U|=r}}\vbm_U^{(i)}
\sum_{\substack{W\subseteq L(v_{i-1})\cap\overline T,\\|W|=r'}}\vbm_W^{(i-1)}
\cdot X_{T\cup U\cup W}^{c(i-1)}}_{\Gamma_T^{(i)}(r,r')}.
\end{align*}

Since $C(v_{i-1})$ is $2k$-clean, by \Cref{fct:coefficients_of_Pv}, the collection of random variables
$$
\cbra{\vbm_j^{(i)}\mid j\in J(v_{i-1})}\cup\cbra{\vbm_j^{(i-1)}\mid j\in L(v_{i-1})}
$$ 
is $2k$-wise independent conditioning on $C(v_{i-1})$. Note that $\Gamma_T^{(i)}(r,r')$ is a polynomial of degree at most $r+r'\le\ell<k$, that $\E\sbra{\Gamma_T^{(i)}(r,r')\mid C(v_{i-1})}=0$, and
\begin{align*}
\sigma^2_T(r,r',C(v_{i-1}),i)
&:=\E\sbra{\pbra{\Gamma_T^{(i)}(r,r')}^2\mid C(v_{i-1})}
=\sum_{\substack{U\subseteq J(v_{i-1})\cap\overline T,\\|U|=r}}
\sum_{\substack{W\subseteq L(v_{i-1})\cap\overline T,\\|W|=r'}}
\pbra{X_{T\cup U\cup W}^{c(i-1)}}^2\\
&\le\pbra{|J(v_{i-1})|}^r\pbra{|L(v_{i-1})|}^{r'}\pbra{\max_{|T'|=r+r'+\ell-t,i'\in[D]}\abs{X_{T'}^{(i')}}}^2\\
&\le\pbra{|J(v_{i-1})|}^rD^{r'}\pbra{\max_{|T'|=r+r'+\ell-t,i'\in[D]}\abs{X_{T'}^{(i')}}}^2.
\tag{since $|L(v_{i-1})|\le D$ by \Cref{fct:depth_to_singleton}}
\end{align*}
We also have the following claim, the proof of which follows from \Cref{lem:low_degree_polys} applied to the low degree polynomial $\Gamma_T^{(i)}$. The proof is deferred to \Cref{app:clm:level_l_Gamma}. 
\begin{claim}\label{clm:level_l_Gamma}
$\Pr\sbra{\Ecal_2}\le\eps/3$, where $\Ecal_2$ is the following event:
$$
\text{`` }\exists T\in\binom{[n]}{\ell-t},i,r,r',\abs{\Gamma_T^{(i)}(r,r')}\ge\pbra{100\min\cbra{k,\log\pbra{\tfrac{n^\ell}\eps}}\cdot\pbra{\tfrac{n^\ell}\eps}^{\frac6k}}^{\frac{r+r'}2}\cdot\sigma_T(r,r',C(v_{i-1}),i)\text{''}.
$$
\end{claim}

On the other hand, when $\Ecal_1\lor\Ecal_2$ does not happen, the following calculation holds for all $T\in\binom{[n]}{\ell-t}$, $i\in[D']$, $r\in[t]$, $0\le r'\le t-r$:
\begin{align*}
\abs{\Gamma_T^{(i)}(r,r')}
&\le M\pbra{D,k,\ell,t-r-r',\eps }\cdot\sqrt{\pbra{100\min\cbra{k,\log\pbra{\tfrac{n^\ell}\eps}}\cdot\pbra{\tfrac{n^\ell}\eps}^{\frac6k}}^{r+r'}\pbra{|J(v_{i-1})|}^r\cdot D^{r'}}\\
&\le M\pbra{D,k,\ell,t-r-r',\eps }\cdot\sqrt{\pbra{100\cdot\pbra{\tfrac{n^\ell}\eps}^{\frac6k}}^{r+r'}\pbra{|J(v_{i-1})|\cdot k}^r\cdot \pbra{D\cdot\log\pbra{\tfrac{n^\ell}\eps}}^{r'}}\\
&=\sqrt{\pbra{\tau (D+dk)\pbra{\tfrac{n^\ell}\eps}^{\frac6k}\log\pbra{\tfrac{n^\ell}\eps}}^{t-r-r'}\hspace{-8pt}\pbra{100\pbra{\tfrac{n^\ell}\eps}^{\frac6k}}^{r+r'}\hspace{-8pt}\pbra{|J(v_{i-1})|\cdot k}^r\pbra{D\cdot\log\pbra{\tfrac{n^\ell}\eps}}^{r'}}\\
&\le\sqrt{\pbra{\tau (D+dk)\pbra{\tfrac{n^\ell}\eps}^{\frac6k}\log\pbra{\tfrac{n^\ell}\eps}}^t
\pbra{\tfrac{100}\tau}^{r+r'}\pbra{\tfrac{|J(v_{i-1})|}{d\cdot\log\pbra{n^\ell/\eps}}}^r}\\
&\le\sqrt{\pbra{\tau (D+dk)\pbra{\tfrac{n^\ell}\eps}^{\frac6k}\log\pbra{\tfrac{n^\ell}\eps}}^t
\pbra{\tfrac{200}\tau}^{r+r'}\pbra{\tfrac{|J(v_{i-1})|}{2d}}^r\tfrac1{\log\pbra{n^\ell/\eps}}}\\
&=M(D,d,k,\ell,t,\eps)\cdot\sqrt{\pbra{\tfrac{200}\tau}^{r+r'}\pbra{\tfrac{|J(v_{i-1})|}{2d}}^r\tfrac1{\log\pbra{n^\ell/\eps}}}.
\end{align*}
Hence we have a second bound on $A(T,r,i)$.
\begin{claim}\label{clm:level_l_second}
When $\Ecal_1\lor\Ecal_2$ does not happen, the following holds for all $r\in[t],i\in[D],T\in\binom{[n]}{\ell-t}$:
$$
\abs{A(T,r,i)}\le\frac{M(D,d,k,\ell,t,\eps)}{\sqrt{\log\pbra{n^\ell/\eps}}}\cdot\sqrt{\pbra{\frac{800}\tau}^r\pbra{\frac{|J(v_{i-1})|}{2d}}^r}.
$$
\end{claim}
\begin{proof}
Since $\Ecal_1\lor\Ecal_2$ does not happen, by union bound and noticing $\tau\ge 800$ we have
\begin{align*}
\abs{A(T,r,i)}
&\le\sum_{r'=0}^{t-r}\abs{\Gamma_T^{(i)}(r,r')}
\le \frac{M(D,d,k,\ell,t,\eps)}{\sqrt{\log\pbra{n^\ell/\eps}}} \cdot\sqrt{\pbra{\frac{200}\tau}^r\pbra{\frac{|J(v_{i-1})|}{2d}}^r} \cdot\sum_{r'=0}^{+\infty}\pbra{\frac{200}\tau}^{r'/2}\\
&\le \frac{M(D,d,k,\ell,t,\eps)}{\sqrt{\log\pbra{n^\ell/\eps}}} \cdot\sqrt{\pbra{\frac{800}\tau}^r\pbra{\frac{|J(v_{i-1})|}{2d}}^r}.
\tag*{\qedhere}
\end{align*}
\end{proof}

\paragraph*{\fbox{Final Bound on $\mu_T^{(i)}$ and $\delta_T^{(i)}$.}}
Combining \Cref{clm:level_l_first} and \Cref{clm:level_l_second}, if $\Ecal_1\lor\Ecal_2$ does not happen we have
\begin{equation}\label{eq:final_bound_on_A}
\abs{A(T,r,i)}\le M(D,d,k,\ell,t-r,\eps)+\frac{M(D,d,k,\ell,t,\eps)}{\sqrt{\log\pbra{n^\ell/\eps}}}\cdot\sqrt{\pbra{\frac{800}\tau}^r\pbra{\frac{\abs{J(v_{i-1})}}{2d}}^r} \cdot \indicator_{|J(v_{i-1})|>1}
\end{equation}
To see this, if $\abs{J(v_{i-1})}\le 1$, we use the bound from \Cref{clm:level_l_first} as the first term in \Cref{eq:final_bound_on_A}. Otherwise $\abs{J(v_{i-1})}> 1$, in which case we use the bound from \Cref{clm:level_l_second} as the second term in \Cref{eq:final_bound_on_A}.

By \Cref{cor:depth_to_singleton}, we can now bound $\sum_{i=1}^D\abs{\mu_T^{(i)}}$ and $\sum_{i=1}^D\abs{\Delta_T^{(i)}}^2$ as \Cref{clm:level_l_final}. Its proof is deferred in \Cref{app:clm:level_l_final}.
\begin{claim}\label{clm:level_l_final}
When $\Ecal_1\lor\Ecal_2$ does not happen, $\sum_{i=1}^D\abs{\mu_T^{(i)}}\le R$ and $\sum_{i=1}^D\abs{\Delta_T^{(i)}}^2\le R^2$ hold for all $T\in\binom{[n]}{\ell-t}$, where 
\begin{equation}\label{eq:def_R}
R=\frac{M(D,d,k,\ell,t,\eps)}{5\cdot\sqrt{\log\pbra{n^\ell/\eps}}}.
\end{equation}
\end{claim}

\paragraph*{\fbox{Complete Induction.}}
Let $\beta=\sqrt{2\cdot\log\pbra{n^\ell/\eps}}\ge1$ and observe that
\begin{align*}
R+\beta\cdot\sqrt2\cdot R
&\le\beta\cdot2\sqrt2\cdot R
\tag{due to $\beta\ge1$}\\
&=\frac{2\sqrt2\cdot\sqrt{2\cdot\log\pbra{n^\ell/\eps}}}{5\cdot\sqrt{\log\pbra{n^\ell/\eps}}}
\cdot M(D,d,k,\ell,t,\eps)
\tag{due to \Cref{eq:def_R}}\\
&\le M(D,d,k,\ell,t,\eps).
\end{align*}

Then we have 
\begin{align*}
&\Pr\sbra{\exists t'\in\cbra{0,\ldots,t},\exists T'\in\binom{[n]}{\ell-t'},\exists i\in[D],~\abs{X_T^{(i)}}\ge M\pbra{D,d,k,\ell,t',\eps}}\\
&=\Pr\sbra{\Ecal_1\bigvee\pbra{\exists T\in\binom{[n]}{\ell-t},\exists i\in[D],~\abs{X_T^{(i)}}\ge M\pbra{D,d,k,\ell,t,\eps}}}\\
&\le\Pr\sbra{\pbra{\Ecal_1\lor\Ecal_2}\bigvee\pbra{\exists T\in\binom{[n]}{\ell-t},\exists i\in[D],~\abs{X_T^{(i)}}\ge R+\beta\cdot\sqrt2\cdot R}}\\
&\le (t-1)\cdot \eps + \frac{\eps}3+2n^{\ell-t}\cdot e^{-\beta^2/2}
\tag{due to \Cref{eq:level_l_gamma}, \Cref{clm:level_l_Gamma}, \Cref{lem:adaptive_azuma}, and \Cref{clm:level_l_final}}\\
&\le (t-1)\cdot \eps  +\frac{\eps}3+\frac13\cdot n^\ell\cdot e^{-\beta^2/2}\\
&\le t\cdot\eps.
\tag*{\qedhere}
\end{align*}
\end{proof}

Before we prove the complete level-$\ell$ bound for parity decision trees, we first prove a simple bound for the number of vectors with a given weight in a subspace.
\begin{lemma}\label{lem:weight_l_vectors_in_dim_d}
Let $\ell\ge1$ be an integer and $\Scal$ be a subspace of rank at most $d$. 
Let $U=\cbra{S\mid |S|=\ell,S\in\Scal}$, then $\abs{U}\le\min\cbra{\binom{d\cdot\ell}\ell,2^d-1}$.
\end{lemma}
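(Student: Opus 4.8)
The plan is to prove the two bounds separately. The bound $|U|\le 2^d-1$ is immediate: the subspace $\Scal$ has at most $2^d$ elements, and since $\ell\ge1$ the all-zero vector (the unique weight-$0$ element of $\Fbb_2^n$) does not lie in $U$, so $|U|\le|\Scal|-1\le 2^d-1$.

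For the bound $|U|\le\binom{d\ell}\ell$, the strategy is to show that the union of all supports $W:=\bigcup_{S\in U}S$ has size at most $d\ell$; once this is established, $U$ is a family of $\ell$-subsets of $W$, so $|U|\le\binom{|W|}\ell\le\binom{d\ell}\ell$ because $\binom{m}{\ell}$ is non-decreasing in $m$ (with the usual convention $\binom{m}\ell=0$ for $m<\ell$, which is consistent with the degenerate case $U=\emptyset$). To bound $|W|$, I would run a greedy selection: pick $S_1\in U$ arbitrarily, and having chosen $S_1,\ldots,S_i$, if some $S\in U$ satisfies $S\not\subseteq S_1\cup\cdots\cup S_i$ then let $S_{i+1}$ be such a set; otherwise stop. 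When the process stops after selecting $S_1,\ldots,S_m$, every $S\in U$ is contained in $S_1\cup\cdots\cup S_m$, hence $W=S_1\cup\cdots\cup S_m$ and $|W|\le\sum_{j=1}^m|S_j|=m\ell$.

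The crux is controlling $m$, which I would do via the observation that $S_1,\ldots,S_m$ are linearly independent over $\Fbb_2$ (identifying sets with indicator vectors): at each step $S_{i+1}$ contains a coordinate outside $S_1\cup\cdots\cup S_i$, whereas every $\Fbb_2$-linear combination of $S_1,\ldots,S_i$ has support inside $S_1\cup\cdots\cup S_i$; thus $S_{i+1}\notin\Span\langle S_1,\ldots,S_i\rangle$, and by induction the selected sets remain independent. Since $S_1,\ldots,S_m\in\Scal$ and $\rank(\Scal)\le d$, this forces $m\le d$, giving $|W|\le d\ell$ and completing the argument. There is no serious obstacle here; the only points requiring a little care are the vacuous case $U=\emptyset$ and making sure the support-containment implication for linear combinations is stated cleanly.
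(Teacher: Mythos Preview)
Your proposal is correct and follows essentially the same approach as the paper. The paper takes a maximal linearly independent subset $\{S_1,\ldots,S_{d'}\}$ of $U$ and then observes that every $S\in U$, being in $\Span\abra{S_1,\ldots,S_{d'}}$, has support contained in $S_1\cup\cdots\cup S_{d'}$; your greedy support-extending selection is a slightly different way of producing the same independent family, and the conclusion $|W|\le d\ell$ follows identically.
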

\begin{proof}
Let $\cbra{S_1,\ldots,S_{d'}}$ be a maximal set of independent vectors in $U$. Then $d'\le d$ and $|S_i|=\ell$ holds for all $i\in[d']$.
Since $U\subseteq\Span\abra{S_1,\ldots,S_{d'}}$ and $\emptyset\notin U$, we have 
$$
|U|\le\abs{\Span\abra{S_1,\ldots,S_{d'}}}-1=2^{d'}-1\le2^d-1.
$$
On the other hand, observe that $U\subseteq\binom{S_1\cup\cdots\cup S_{d'}}\ell$, hence we also have
\begin{equation*}
|U|\le\abs{\binom{S_1\cup\cdots\cup S_{d'}}\ell}\le\binom{d'\cdot\ell}\ell\le\binom{d\cdot\ell}\ell.
\tag*{\qedhere}
\end{equation*}
\end{proof}

We remark that in \Cref{lem:weight_l_vectors_in_dim_d}, it is conjectured the bound should be $\binom{d+1}\ell$ when $d\ge2\cdot\ell$ \cite{DBLP:journals/combinatorics/Kramer10,briggs2018extremal}.

\begin{theorem}\label{thm:level_l}
Let $\ell\ge1$ be an integer.
Let $\Tcal\colon\binpm^n\to\bin$ be a depth-$d$ parity decision tree where $n\ge\max\cbra{d,\ell}$. Let $p=\Pr\sbra{\Tcal(x)=1}\ge2^{-d}$.\footnote{If $p<2^{-d}$, then $p=0$ and $\Tcal\equiv0$.} Then we have
$$
\sum_{S\subseteq[n]:|S|=\ell}\abs{\widehat\Tcal(S)}
\le p\cdot\min\cbra{\binom{d\cdot\ell}\ell,2^d-1,\;O\!\pbra{\sqrt d\cdot\log\pbra{\tfrac{n^\ell}p}}^\ell}=O\!\pbra{\sqrt d\cdot\ell\cdot\log(n)}^\ell.
$$
\end{theorem}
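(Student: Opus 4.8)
The plan is to combine the three previously established bounds into the minimum expression and then derive the clean $O(\sqrt{d}\cdot\ell\cdot\log n)^\ell$ form. First I would dispose of the two combinatorial bounds. Since $\Tcal$ is $1$-clean as itself, \Cref{lem:clean_to_Fourier} gives $\widehat\Tcal(S)=\E_{v_0,\ldots,v_{d'}}[\Tcal(v_{d'})\cdot\vbm^{(d')}_S]$ for any $|S|=\ell$, so writing $a_S=\sgn(\widehat\Tcal(S))$ we get, exactly as in \Cref{eq:dt_2},
\[
\sum_{|S|=\ell}\abs{\widehat\Tcal(S)}=\E_{v_0,\ldots,v_{d'}}\sbra{\Tcal(v_{d'})\cdot\sum_{|S|=\ell}a_S\cdot\vbm^{(d')}_S}.
\]
Now for each fixed leaf, the only sets $S$ with $\vbm^{(d')}_S\neq0$ are those contained in $\Scal_{v_{d'}}=\Span\abra{Q_{v_0},\ldots,Q_{v_{d'-1}}}$, a subspace of rank at most $d$; by \Cref{lem:weight_l_vectors_in_dim_d} there are at most $\min\{\binom{d\ell}{\ell},2^d-1\}$ of them, each contributing a term of absolute value $1$. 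Hence $\abs{\sum_{|S|=\ell}a_S\vbm^{(d')}_S}\le\min\{\binom{d\ell}{\ell},2^d-1\}$ always, and since $\E[\Tcal(v_{d'})]=p$ this yields the first two terms of the minimum.

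Next I would handle the main term $O(\sqrt d\cdot\log(n^\ell/p))^\ell$ via the cleanup machinery. Set $k=\Theta(\ell+\log(1/p))$ and apply \Cref{lem:clean_PDT} to obtain an equivalent $2k$-clean parity decision tree $\Tcal'$ of depth $D\le 2dk$ in which every root-to-leaf path has at most $d$ nodes that are $2k$-clean. Since $\ell\le k$ and $\Tcal'$ is $2k$-clean (hence $k$-clean), \Cref{lem:clean_to_Fourier} applies to $\Tcal'$, so with $\ubm^{(i)}$ the partial-assignment vectors along a random path of $\Tcal'$,
\[
\sum_{|S|=\ell}\abs{\widehat\Tcal(S)}=\sum_{|S|=\ell}\abs{\widehat{\Tcal'}(S)}=\E\sbra{\Tcal'(\ubm^{(D')})\cdot U}\le\E\sbra{\Tcal'(\ubm^{(D')})\cdot\abs{U}},\qquad U:=\sum_{|S|=\ell}a_S\cdot\ubm^{(D')}_S.
\]
Applying \Cref{lem:level_l} with $t=\ell$ (so $T=\emptyset$, $t'=\ell$, $i=D$) gives $\Pr[\abs{U}\ge M(\eps)]\le\ell\eps$ for every $\eps\le1/2$, where $M(\eps)=(\tau(D+dk)(n^\ell/\eps)^{6/k}\log(n^\ell/\eps))^{\ell/2}$; with our choice of $k$ and $\eps$ in the range $\lesssim p$, the factor $(n^\ell/\eps)^{6/k}$ is $O(1)$ and $D+dk=O(dk)=\tilde O(d)$, so $M(\eps)=O(\sqrt d\cdot\log(n^\ell/\eps))^\ell$. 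Then I would run the same dyadic-interval truncation argument as in the second bound of \Cref{thm:level_1}: partition according to $\abs{U}\in I_i:=[M(p/2^i),M(p/2^{i+1})]$ (with $I_0=[0,M(p/2)]$), use $\Pr[\abs{U}\in I_i]\le\ell\cdot p/2^i$ for $i\ge1$ and $\E[\Tcal'(\ubm^{(D')})]\le p$ for $I_0$, and sum the geometric-type series $\sum_i M(p/2^{i+1})\cdot\ell p/2^i$, which converges to $p\cdot O(\sqrt d\cdot\log(n^\ell/p))^\ell$ (the polynomial-in-$i$ and $2^{i\cdot 6/k}$ growth of $M(p/2^{i+1})$ is killed by the $2^{-i}$). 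Since $p\le 1/2$ and $n\ge\max\{d,\ell\}$, we have $\log(n^\ell/p)=O(\ell\log n+\log(1/p))$, and absorbing $\log(1/p)$ into the $p$-prefactor (using $p\log^\ell(1/p)=O(\ell\log n)^\ell$ for $p\le1/2$, say by $p(\log(1/p))^\ell\le\ell^\ell$) collapses everything to $O(\sqrt d\cdot\ell\cdot\log n)^\ell$.

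The last thing to check is the case $p>1/2$: by symmetry replace $\Tcal$ by $\tilde\Tcal=1-\Tcal$, which has the same depth and the same level-$\ell$ Fourier coefficients up to sign for $\ell\ge1$, and apply the above to $\tilde\Tcal$ with acceptance probability $1-p<1/2$; the bound $O(\sqrt d\cdot\ell\cdot\log n)^\ell$ is unchanged (the $p$-prefactor in the final form is dropped anyway). I expect the main obstacle to be the bookkeeping in the dyadic summation: one must verify that for the chosen $k=\Theta(\ell+\log(1/p))$ the term $(n^\ell\cdot 2^{i+1}/p)^{6/k}$ stays bounded uniformly in $i$ over the relevant range and that the resulting series genuinely converges with the claimed value, which is exactly the computation already carried out for level $1$ in \Cref{thm:level_1} but now with the level-$\ell$ function $M$; this is routine but is where the constants and the precise dependence on $\ell$ must be tracked carefully. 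A minor secondary point is confirming that \Cref{lem:level_l}'s hypotheses ($k\ge4\ell$, $n\ge\max\{\tau,k,D\}$) are met — the first by choosing the constant in $k=\Theta(\ell+\log(1/p))$ large enough, and $n\ge D$ either holds in the regime of interest or can be arranged by a preliminary reduction, since otherwise $d\ge n/\polylog$ and the combinatorial bound $\binom{d\ell}{\ell}$ already suffices.
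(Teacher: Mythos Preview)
There are two genuine gaps in your argument.

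\textbf{First}, for the combinatorial bounds you invoke \Cref{lem:clean_to_Fourier} with $\Tcal$ being $1$-clean to obtain $\widehat\Tcal(S)=\E[\Tcal(v_{d'})\cdot\vbm^{(d')}_S]$. But that lemma requires $|S|=\ell\le k$, so for $\ell\ge2$ this is illegitimate, and indeed the identity fails: if $\Tcal$ has a single query $x_1x_2$, then $\widehat\Tcal(\{1,2\})\neq0$ while $\vbm^{(1)}_{\{1,2\}}=\vbm^{(1)}_1\vbm^{(1)}_2=0$. The paper avoids this by working with $\widehat{\Pcal_{v_{d'}}}(S)$ directly (which always satisfies $\widehat\Tcal(S)=\E[\Tcal(v_{d'})\widehat{\Pcal_{v_{d'}}}(S)]$) and noting that $\widehat{\Pcal_{v_{d'}}}(S)\neq0$ iff $S\in\Scal_{v_{d'}}$; the count via \Cref{lem:weight_l_vectors_in_dim_d} then goes through. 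Your fix is easy but the step as written is wrong.

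\textbf{Second}, your choice $k=\Theta(\ell+\log(1/p))$ does not make $(n^\ell/\eps)^{6/k}=O(1)$: with $\eps\approx p$ this exponent is $\Theta\bigl((\ell\log n+\log(1/p))/(\ell+\log(1/p))\bigr)$, which is $\Theta(\log n)$ when, say, $p=1/2$ and $\ell=1$, giving a polynomial-in-$n$ blowup. The paper takes $k=\Theta(\log(n^\ell/p))=\Theta(\ell\log n+\log(1/p))$, which is exactly what is needed to kill this factor. With the corrected $k$ your dyadic truncation would work, but note that the paper instead uses a single cutoff $M=M(D,d,k,\ell,\ell,\eps)$ with $\eps=\Theta(p/d^{\ell/2})$ and bounds the tail $\{|U|\ge M\}$ crudely by $\ell\eps\cdot\binom{d\ell}{\ell}$ using the already-proved combinatorial bound; this is simpler than the dyadic sum and sidesteps tracking the $2^{6i/k}$ growth you flag. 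The paper also handles the hypothesis $n\ge\max\{\tau,k,D\}$ of \Cref{lem:level_l} by padding with dummy variables rather than by a side argument.
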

\begin{proof}
For any $S\in\binom{[n]}\ell$, let $a_S=\sgn\pbra{\widehat\Tcal(S)}$. Now we prove the bounds separately.

\paragraph*{\fbox{First Two Bounds.}}
Let $v_0,\ldots,v_{d'}$ be a random root-to-leaf path.
Then by the definition of $\widehat{\Pcal_v}$ and $\Scal_v$ and \Cref{fct:coefficients_of_Pv}, we have
\begin{align}
\sum_S\abs{\widehat\Tcal(S)}
&=\sum_Sa_S\cdot\widehat\Tcal(S)
=\E_{v_0,\ldots,v_{d'}}\sbra{\Tcal(v_{d'})\cdot\sum_Sa_S\cdot\widehat{\Pcal_{v_{d'}}}(S)}\notag\\
&\le\E_{v_0,\ldots,v_{d'}}\sbra{\Tcal(v_{d'})\cdot\sum_S\abs{\widehat{\Pcal_{v_{d'}}}(S)}}
=\E_{v_0,\ldots,v_{d'}}\sbra{\Tcal(v_{d'})\cdot|V|},\label{eq:level_l_thm_1}
\end{align}
where $a_S = \sgn\pbra{\widehat\Tcal(S)}$ and $V=\cbra{S\in\binom{[n]}\ell\mid S\in\Scal_{v_{d'}}}$.
Note that 
$$
\rank\pbra{\Scal_{v_{d'}}}=\rank\pbra{\Span\abra{Q_{v_0},\ldots,Q_{v_{d'-1}}}}\le d'\le d.
$$
Hence by \Cref{lem:weight_l_vectors_in_dim_d}, we have $\Cref{eq:level_l_thm_1}\le\min\cbra{\binom{d\cdot\ell}\ell,2^d-1}\cdot\E\sbra{\Tcal(v_{d'})}=p\cdot\min\cbra{\binom{d\cdot\ell}\ell,2^d-1}$.

\paragraph*{\fbox{Third Bound.}}
By \Cref{lem:clean_PDT}, we construct a $2k$-clean parity decision tree $\Tcal'$ of depth $D\le2d\cdot k$ equivalent to $\Tcal$, where $k=\Theta\pbra{\log\pbra{n^\ell/p}}\ge4\cdot\ell$. We also add dummy variables to make sure $n'=\max\cbra{\tau,k,6D,n}$, where $\Tcal'$ has $n'$ inputs and $\tau$ is the universal constant in \Cref{lem:level_l}.

Let $u_0,\ldots,u_{D'}$ be a random root-to-leaf path in $\Tcal'$.
Define $\ubm^{(0)},\ldots,\ubm^{(D')}\in\cbra{-1,0,+1}^n$ by setting $\ubm^{(i)}_j=\widehat{\Pcal_{u_i}}(j)$ for each $0\le i\le D'$ and $j\in[n]$.
Then extend $\ubm^{(D'+1)}=\ubm^{(D'+2)}=\cdots=\ubm^{(D)}$ to equal $\ubm^{(D')}$.
By \Cref{lem:clean_to_Fourier}, we have
\begin{equation}\label{eq:level_l_thm_2}
\sum_S\abs{\widehat\Tcal(S)}
=\sum_S\abs{\widehat{\Tcal'}(S)}
=\E_{u_0,\ldots,u_{D'}}\sbra{\Tcal(u_{D'})\cdot\sum_Sa_S\cdot\ubm_S^{(D)}}
\le\E_{u_0,\ldots,u_{D'}}\sbra{\Tcal(u_{D'})\cdot|U|},
\end{equation}
where $U=\sum_Sa_S\cdot\ubm_S^{(D)}$.

Now we apply \Cref{lem:level_l} with $t=\ell,\eps=\Theta\pbra{p/d^{\ell/2}}\le1/2$ to obtain the following bound\footnote{Since $n\ge\max\cbra{\ell,d}$, we know $k=\Theta\pbra{\log\pbra{n^\ell/p}}=O(n^2)$ and $D\le 2d\cdot k=O(n^3)$. Hence $n'=\max\cbra{\tau,k,6D,n}=O(n^3)$. Also $n^\ell/\eps\le n^{O(\ell)}/p$ and by our choice of $k=\Theta\pbra{\log(n^{\ell}/p)}$ we have  $\pbra{n^\ell/\eps}^{6/k} = O(1)$.}
$$
M=M(D,d,k,\ell,\ell,\eps)=\pbra{O\pbra{\sqrt d\cdot\log\pbra{\tfrac{n^\ell}p}}}^\ell
$$ 
such that $\Pr\sbra{|U|\ge M}\le \ell \cdot \eps$.
Then, combining the first bound, we have
\begin{align*}
\Cref{eq:level_l_thm_2}
&=\E\sbra{\Tcal(u_{D'})\cdot|U|\cdot\pbra{\indicator_{|U|<M}+\indicator_{|U|\ge M}}}
\le M\cdot\E\sbra{\Tcal(u_{D'})}+
\ell \cdot \eps\cdot\binom{d\cdot\ell}\ell\\
&=p\cdot\pbra{O\pbra{\sqrt d\cdot\log\pbra{\tfrac{n^\ell}p}}}^\ell,
\end{align*}
which is maximized at $p=1$, hence $\Cref{eq:level_l_thm_2}=O\!\pbra{\sqrt d\cdot\ell\cdot\log(n)}^\ell$ as desired.
\end{proof}

\section{Fourier Bounds for Noisy Decision Trees}\label{sec:noisy}

Let $\Tcal$ be a noisy decision tree. By adding queries with zero correlation, we assume without loss of generality each root-to-leaf path in the noisy decision tree is of the same length. Let $v$ be any node of $\Tcal$. We use $\Pcal_v$ to denote the uniform distribution over $\binpm^n$ \emph{conditioning} on reaching $v$. Note that $\Pcal_v$ is always a \emph{product distribution}. As before, for any $S\subseteq[n]$ we define $\widehat{\Pcal_v}(S)=\E_{x\sim\Pcal_v}\sbra{x_S}$.

\begin{claim}\label{clm:noisedelta}
Let $\Tcal\colon\binpm^n\to \bin$ be a cost-$d$ noisy decision tree. Let $v_0,\ldots,v_D$ be any root-to-leaf path in $\Tcal$. Define $\vbm^{(0)},\ldots,\vbm^{(D)}\in[-1,1]^n$ by setting $\vbm^{(i)}_j=\widehat{\Pcal_{v_i}}(j)$ for each $0\le i\le D$ and $j\in[n]$. Then for any $i\in \{0,\ldots,D-1\}$, $\vbm_{q_{v_i}}^{(i+1)}-\vbm_{q_{v_i}}^{(i)}$ is a mean-zero random variable with magnitude bounded by $2\cdot \abs{\gamma_{v_i}}$.
\end{claim}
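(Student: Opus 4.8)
The plan is to reduce the whole statement to a single-coordinate Bayesian update. Fix $i\in\{0,\ldots,D-1\}$ and abbreviate $j:=q_{v_i}$, $\gamma:=\gamma_{v_i}$, and $v:=\vbm^{(i)}_j=\widehat{\Pcal_{v_i}}(j)$. First I would record the structural observation (noted in the text preceding the claim) that $\Pcal_{v_i}$ is a product distribution, and that passing from $v_i$ to $v_{i+1}$ amounts to observing the outcome $b'\in\binpm$ of a single noisy query to $x_j$; since $b'$ depends on $x$ only through $x_j$, the posterior $\Pcal_{v_{i+1}}$ is again a product distribution agreeing with $\Pcal_{v_i}$ on every coordinate $j'\neq j$. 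Hence $\vbm^{(i+1)}_{j'}=\vbm^{(i)}_{j'}$ for $j'\neq j$, and the claim concerns only the scalar $\vbm^{(i+1)}_j-v$. (Along any realized path the conditioning events have positive probability, so the degenerate edges — e.g.\ $|\gamma|=1$ with $v=\mp1$ — never arise.)

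Next I would carry out the Bayes computation. Writing $p:=\Pr_{\Pcal_{v_i}}[x_j=1]=(1+v)/2$ and using $\Pr[b'=x_j]=(1+\gamma)/2$, a direct calculation gives $\Pr[b'=1]=(1+\gamma v)/2$ and $\Pr[b'=-1]=(1-\gamma v)/2$, and by Bayes' rule
$$
\E[x_j\mid b'=1]=\frac{v+\gamma}{1+\gamma v},\qquad \E[x_j\mid b'=-1]=\frac{v-\gamma}{1-\gamma v}.
$$
Thus $\vbm^{(i+1)}_j$ takes the first value with probability $(1+\gamma v)/2$ and the second with probability $(1-\gamma v)/2$; multiplying through, the weights cancel the denominators and $\E\!\sbra{\vbm^{(i+1)}_j\mid v_i}=\tfrac12(v+\gamma)+\tfrac12(v-\gamma)=v$, so the increment is mean zero conditioned on $v_i$.

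For the magnitude bound I would simplify each branch: $\frac{v+\gamma}{1+\gamma v}-v=\frac{\gamma(1-v^2)}{1+\gamma v}$ and $\frac{v-\gamma}{1-\gamma v}-v=\frac{-\gamma(1-v^2)}{1-\gamma v}$, so in both cases $\abs{\vbm^{(i+1)}_j-v}=\abs{\gamma}\cdot\frac{1-v^2}{\abs{1\pm\gamma v}}$. It then remains to verify $\frac{1-v^2}{\abs{1\pm\gamma v}}\le 2$ for $v,\gamma\in[-1,1]$: since $\abs{1\pm\gamma v}\ge 1-\abs{\gamma v}\ge 1-\abs{v}$, and $1-v^2=(1-v)(1+v)\le 2(1-\abs{v})$ (splitting on the sign of $v$), the bound follows, giving $\abs{\vbm^{(i+1)}_{q_{v_i}}-\vbm^{(i)}_{q_{v_i}}}\le 2\abs{\gamma_{v_i}}$.

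I do not expect a genuine obstacle: the argument is a textbook posterior update followed by an elementary inequality. The only places warranting care are justifying that the posterior remains a product distribution — so that exactly one coordinate moves — and confining attention to realized paths to sidestep the degenerate edges where one child of $v_i$ is unreachable.
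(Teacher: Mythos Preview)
Your proposal is correct and follows essentially the same approach as the paper: both compute the Bayesian update explicitly to obtain $\vbm^{(i+1)}_j=\frac{v\pm\gamma}{1\pm\gamma v}$ with probabilities $\frac{1\pm\gamma v}{2}$, verify the mean-zero property by direct calculation, and bound the magnitude via the elementary inequality $\frac{1-v^2}{1-|v|}\le 2$. The only cosmetic difference is that the paper dispatches the case $|\gamma|=1$ separately with the trivial bound $|\vbm^{(i+1)}_j-\vbm^{(i)}_j|\le 2=2|\gamma|$, whereas you fold it into the realized-path discussion.
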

\begin{proof}
Fix $i\in\{0,\ldots,D-1\}$. For convenience, let $j=q_{v_i}$, $\gamma=\gamma_{v_i}$, and $\alpha=\vbm_j^{(i)}$. 
Suppose $\abs{\gamma}=1$ then $\abs{v_j^{(i+1)}-v_j^{(i)}}\le 2 = 2\cdot\abs{ \gamma_{v_i}}$ as desired. Now we turn to the case $\abs{\gamma}<1$. 

Note that for the distribution $\Pcal_{v_i}$, the measure of $x_j=1$ (resp., $x_j=-1$) inputs is $(1+\alpha)/2$ (resp., $(1-\alpha)/2$).
The measure of $x_j=1$ (resp., $x_j=-1$) inputs that follow the edge labeled $1$ is $a:=(1+\alpha)(1+\gamma)/4$ (resp., $b:=(1-\alpha)(1-\gamma)/4$). The total measure of inputs that take the edge labeled $1$ is $a+b$ and the resulting node $v_{i+1}$ satisfies $\vbm^{(i+1)}_j=(a-b)/(a+b)$. This implies that
\[ 
\vbm^{(i+1)}_j = \begin{cases}  
\frac{\alpha + \gamma}{1 + \gamma\cdot\alpha} & \text{ with probability } \frac{1+\gamma\cdot\alpha}2, \\  
\frac{\alpha- \gamma}{1 - \gamma\cdot\alpha} & \text{ with probability } \frac{1-\gamma\cdot\alpha}2. 
\end{cases}
\]
The above calculation implies
\[ 
\vbm^{(i+1)}_j-\vbm^{(i)}_j = \begin{cases}  
\gamma\cdot  \frac{1-\alpha^2}{1 + \gamma\cdot \alpha} & \text{ with probability } \frac{1+\gamma\cdot \alpha}2, \\ 
-\gamma \cdot \frac{1-\alpha^2 }{1 - \gamma\cdot \alpha} & \text{ with probability } \frac{1-\gamma\cdot \alpha}2,
\end{cases} 
\]
and thus $\vbm^{(i+1)}_j-\vbm^{(i)}_j$ is a mean-zero random variable.
Since $\alpha\in[-1,1]$ and $\gamma\in(-1,1)$, we have 
\[
\max\cbra{ \frac{1-\alpha^2}{1-\gamma\cdot\alpha}, \frac{1-\alpha^2}{1+\gamma\cdot\alpha} }\le \frac{1-\alpha^2}{1-\abs{\alpha}}=1+\abs{\alpha}\le 2,
\]
which implies $\abs{ \vbm^{(i+1)}_j-\vbm^{(i)}_j } \le 2\cdot \abs{\gamma} $.  
\end{proof}

We now prove the general Fourier bounds. As before, for any $S\subseteq[n]$, let $\vbm_S^{(i)}$ be $\prod_{j\in S}\vbm^{(i)}_j$.

\begin{lemma}\label{lem:noisylevell} 
There exists a universal constant $\tau$ such that the following holds. Let $\ell\ge1$ be an integer. Let $\Tcal\colon\binpm^n\to\bin$ be a cost-$d$ noisy decision tree. 

Let $v_0,\ldots,v_D$ be a random root-to-leaf path in $\Tcal$. Define $\vbm^{(0)},\ldots,\vbm^{(D)}\in[-1,1]^n$ by setting $\vbm^{(i)}_j=\widehat{\Pcal_{v_i}}(j)$ for each $0\le i\le D$ and $j\in[n]$. Then for any sequence $a_S\in\cbra{-1,0,1},S\in\binom{[n]}\ell$, any $\eps\le1/2$ and $t\in\cbra{0,\ldots,\ell}$, we have
\[ 
\Pr\sbra{ \exists T\in\binom{[n]}{\ell-t}, \exists i\in [D],\abs{\sum_{S\subseteq \overline{T},|S|=t} a_{S\cup T}\cdot \vbm_S^{(i)}} \ge S(d,\ell,t,\eps) } \le \eps\cdot t,
\]
where $S(d,\ell,0,\eps)=1$ and
\[ 
S(d,\ell,t,\eps)=\sqrt{\pbra{\tau\cdot d}^t\cdot \log\pbra{\tfrac{n^{\ell-t}}\eps} \cdots \log\pbra{\tfrac{n^{\ell-1}}\eps}}
\qquad\text{for $t\in[\ell]$}. 
\]
\end{lemma}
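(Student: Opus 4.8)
The plan is to mirror the proof of \Cref{lem:level_l} and argue by (reverse) induction on $t$, taking $\tau$ to be a sufficiently large universal constant (say $\tau=100$). The noisy setting is cleaner: because a noisy query touches only a single coordinate, the relevant sequences will be genuine martingales (there is no analogue of the even-degree terms $\mu_T^{(i)}$ from the parity-decision-tree proof), and the role played by the depth $D$ in \Cref{lem:level_l} is played directly by the cost bound $\sum_i\gamma_{v_{i-1}}^2\le d$. Concretely, fix the sign pattern $(a_S)$ and, for $T\subseteq[n]$ with $\abs T=\ell-t$ and $0\le i\le D$, set $X_T^{(i)}=\sum_{S\subseteq\overline T,\,\abs S=t}a_{S\cup T}\cdot\vbm_S^{(i)}$. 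The base case $t=0$ is trivial: $\abs{X_T^{(i)}}=\abs{a_T}\le1=S(d,\ell,0,\eps)$. For the inductive step, assume the bound at level $t-1$; equivalently, with probability at least $1-(t-1)\eps$ the event $\Ecal$ holds that $\abs{X_{T'}^{(i)}}\le S(d,\ell,t-1,\eps)$ for all $T'\in\binom{[n]}{\ell-t+1}$ and all $i\in\cbra{0,\ldots,D}$ --- the $i=0$ instance being automatic since $X_{T'}^{(0)}$ is $0$ when $t\ge2$ and equals $a_{T'}$ when $t=1$.

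Next I would extract the martingale structure. Since only the coordinate $j:=q_{v_{i-1}}$ changes when the path steps from $v_{i-1}$ to $v_i$, one computes that $X_T^{(i)}-X_T^{(i-1)}$ equals $0$ if $j\in T$ and equals $X_{T\cup\cbra{j}}^{(i-1)}\cdot\pbra{\vbm_j^{(i)}-\vbm_j^{(i-1)}}$ if $j\notin T$. By \Cref{clm:noisedelta}, conditioned on $v_{i-1}$ the factor $\vbm_j^{(i)}-\vbm_j^{(i-1)}$ is mean-zero of magnitude at most $2\abs{\gamma_{v_{i-1}}}$, while $X_{T\cup\cbra{j}}^{(i-1)}$ is determined by $v_{i-1}$. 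Hence $X_T^{(0)}=0,X_T^{(1)},\ldots,X_T^{(D)}$ is a martingale for the path filtration of the form $X_T^{(i)}=X_T^{(i-1)}+\Delta_T^{(i)}z_T^{(i)}$, with $z_T^{(i)}$ mean-zero and $\abs{z_T^{(i)}}\le1$, and $\Delta_T^{(i)}=2\abs{\gamma_{v_{i-1}}}\cdot\abs{X_{T\cup\cbra{j}}^{(i-1)}}\cdot\indicator_{j\notin T}$ measurable with respect to $v_{i-1}$; in particular there is no additive term, so we may take $V=0$ in \Cref{lem:adaptive_azuma}. On the event $\Ecal$, for every $T$ we have $\abs{X_{T\cup\cbra{j}}^{(i-1)}}\le S(d,\ell,t-1,\eps)$ (note $\abs{T\cup\cbra{j}}=\ell-t+1$), and hence, using that the path has cost at most $d$,
$$
\sum_{i=1}^D\abs{\Delta_T^{(i)}}^2\le4\,S(d,\ell,t-1,\eps)^2\sum_{i=1}^D\gamma_{v_{i-1}}^2\le4d\cdot S(d,\ell,t-1,\eps)^2=:U.
$$
Therefore $\Pr\sbra{\exists T:\ \sum_{i=1}^D\abs{\Delta_T^{(i)}}^2>U}\le(t-1)\eps=:\eta$.

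I would then apply \Cref{lem:adaptive_azuma} to the $\binom{n}{\ell-t}\le n^{\ell-t}$ martingales $\pbra{X_T^{(i)}}_i$ with parameters $U$, $V=0$, $\eta$ as above and $\beta=\sqrt{2\log\pbra{2n^{\ell-t}/\eps}}$, which yields $\Pr\sbra{\exists T\in\binom{[n]}{\ell-t},\exists i,\ \abs{X_T^{(i)}}\ge\beta\sqrt{2U}}\le\eta+2n^{\ell-t}e^{-\beta^2/2}\le(t-1)\eps+\eps=t\eps$. It remains to check that the threshold is small enough: $\beta\sqrt{2U}=\beta\sqrt{8d}\cdot S(d,\ell,t-1,\eps)$, and since $S(d,\ell,t,\eps)/S(d,\ell,t-1,\eps)=\sqrt{\tau d\log(n^{\ell-t}/\eps)}$, the inequality $\beta\sqrt{2U}\le S(d,\ell,t,\eps)$ reduces to $8\beta^2\le\tau\log(n^{\ell-t}/\eps)$, which holds for $\tau\ge32$ because $\log\pbra{2n^{\ell-t}/\eps}\le2\log\pbra{n^{\ell-t}/\eps}$ when $\eps\le1/2$. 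This closes the induction.

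I expect the main difficulty to be organizational rather than conceptual. One has to set up the induction so that the step-size quantity $\abs{X_{T\cup\cbra{j}}^{(i-1)}}$ is exactly controlled by the level-$(t-1)$ hypothesis on sets of size $\ell-t+1$; one has to route the failure of $\Ecal$ into the $\eta$-slot of the adaptive Azuma inequality (via truncation of the martingale) rather than conditioning on $\Ecal$, which would destroy the martingale property; and one has to keep the conditioning straight so that $\Delta_T^{(i)}$ is past-measurable while $z_T^{(i)}$ stays mean-zero and bounded. The one genuinely different --- and simplifying --- feature compared to parity decision trees is that a noisy query reveals only one coordinate, so the difference sequence has no even-degree constant part and $\pbra{X_T^{(i)}}_i$ is an honest martingale with no $\mu_T^{(i)}$ to bound separately.
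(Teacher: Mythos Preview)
Your proposal is correct and follows essentially the same approach as the paper: induction on $t$, using \Cref{clm:noisedelta} to show that $X_T^{(i)}-X_T^{(i-1)}=\pbra{\vbm_{q_{v_{i-1}}}^{(i)}-\vbm_{q_{v_{i-1}}}^{(i-1)}}\cdot X_{T\cup\{q_{v_{i-1}}\}}^{(i-1)}$ is a martingale increment with magnitude bounded by $2\abs{\gamma_{v_{i-1}}}\cdot S(d,\ell,t-1,\eps)$ on the good event, then applying \Cref{lem:adaptive_azuma} with $V=0$ and $U=4d\cdot S(d,\ell,t-1,\eps)^2$. The paper takes $\beta=2\sqrt{\log(n^{\ell-t}/\eps)}$ and $\tau=32$, which matches your final numerics.
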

\begin{proof}
We prove the bound by induction on $t$ and show $\tau=32$ suffices. The base case $t=0$ is trivial, since for any $T$ of size $\ell$ and any $i$, we have $\abs{ a_T \cdot v_{\emptyset}^{(i)}}\le 1 = S(d,\ell,0,\eps)$. 
 
Now we focus on the case $1\le t\le \ell$. For any $T\in\binom{[n]}{\le\ell}$, define $X_T^{(0)},\ldots,X_T^{(D)}$ by $X^{(i)}_T=\sum_{S\subseteq \overline T, |S|+|T|=\ell} a_{S\cup T}\cdot \vbm_S^{(i)}$. Define $\delta^{(i)}_T$ for $i\in[D]$ as follows:
\begin{align*}
\delta_T^{(i)}=X_T^{(i)}-X_T^{(i-1)} 
&= \sum_{S\subseteq \overline{T}, |S|=t, S\ni q_{v_{i-1}}}   a_{S\cup T}\cdot  \pbra{\vbm^{(i)}_S-\vbm^{(i-1)}_S } \\
&=  \pbra{\vbm^{(i)}_{q_{v_{i-1}}}-\vbm^{(i-1)}_{q_{v_{i-1}}} } \cdot \sum_{S'\subseteq \overline{T\cup\{q_{v_{i-1}}\}}, |S'|=t-1} a_{S'\cup\cbra{q_{v_{i-1}}}\cup T}\cdot\vbm_S^{(i-1)}\\
&=  \pbra{\vbm^{(i)}_{q_{v_{i-1}}}-\vbm^{(i-1)}_{q_{v_{i-1}}} } \cdot X^{(i-1)}_{T\cup \{q_{v_{i-1}}\} }.
\end{align*}
Note that by \Cref{clm:noisedelta} and conditioning on $v_{i-1}$, $\delta_T^{(i)}$ is a mean-zero random variable.

The induction hypothesis implies that with all but $\eps\cdot (t-1)$ probability, for all $i\in [D]$ and $T'\in\binom{[n]}{\ell-t+1}$, we have $\abs{X_{T' }^{(i)}} \le S(d,\ell,t-1,\eps)$. By \Cref{clm:noisedelta}, we have
$$
\abs{\delta_T^{(i)}}
=\abs{\vbm^{(i)}_{q_{v_{i-1}}}-\vbm^{(i-1)}_{q_{v_{i-1}}}}\cdot \abs{X_{T\cup\cbra{q_{v_{i-1}}}}^{(i-1)}}
\le2\cdot\abs{\gamma_{v_{i-1}}}\cdot S(d,\ell,t-1,\eps).
$$
Denote by $\Delta_{T}^{(i)} = 2\cdot\abs{\gamma_{v_{i-1}}}\cdot S(d,\ell,t-1,\eps)$. We can thus express $X_T^{(i)} = X_T^{(i-1)}  + \Delta_T^{(i)} \cdot z_T^{(i)}$ where $\abs{z_T^{(i)}}\le 1$.
Then we apply \Cref{lem:adaptive_azuma} to the family of martingales $X_T^{(0)},\ldots,X_T^{(D)},|T|\in\binom{[n]}{\ell-t}$ with difference sequence $\delta_T^{(i)}=\Delta_T^{(i)}\cdot z_T^{(i)}$ satisfying 
$$
\sum_{i=1}^D\pbra{\Delta_T^{(i)}}^2=4\cdot( S(d,\ell,t-1,\eps))^2\cdot\sum_{i=1}^D\abs{\gamma_{v_{i-1}}}^2\le4d\cdot\pbra{S(d,\ell,t-1,\eps)}^2.
$$ 
Hence for any $\beta\ge0$, we have
\[ 
\Pr\sbra{ \exists T\in\binom{[n]}{\ell-t},\exists i\in [D], \abs{X_T^{(i)}} \ge 2\beta\cdot \sqrt{2d}\cdot S(d,\ell, t-1,\eps)} \le \eps\cdot(t-1)+ 2\cdot n^{\ell-t}\cdot e^{-\beta^2/2}. 
\]
Since $\eps\le1/2$, we can set $\beta=2\cdot \sqrt{\log(n^{\ell-t}/\eps)}$ so that $2\cdot n^{\ell-t}\cdot e^{-\beta^2/2}\le\eps$, which completes the induction by noticing 
\begin{equation*}
2\beta\cdot\sqrt{2d}\cdot S(d,\ell,t-1,\eps)
=\sqrt{32\cdot d\cdot\log\pbra{\tfrac{n^{\ell-t}}\eps}}\cdot S(d,\ell,t-1,\eps)
\le S(d,\ell,t,\eps).
\tag*{\qedhere}
\end{equation*}
\end{proof}

\begin{theorem}\label{thm:level-l_noisy_DT}
Let $\ell\ge1$ and $n\ge\max\cbra{\ell,2}$ be integers. Let $\Tcal\colon\binpm^n\to\bin$ be a cost-$d$ noisy decision tree. Let $p=\Pr[\Tcal(x)=1]\in(0,1/2]$.\footnote{If $p>1/2$, then we can consider $\tilde\Tcal=1-\Tcal$ by symmetry.} Then we have 
\[ 
\sum_{S\subseteq[n],|S|=\ell} \abs{\widehat{\Tcal}(S)} 
\le p\cdot O(d)^{\ell/2}\cdot\sqrt{\log\pbra{\tfrac1p}\pbra{\log\pbra{\tfrac{n^\ell}p}}^{\ell-1}}
=O(d)^{\ell/2}\cdot\sqrt{1+\pbra{\ell\log(n)}^{\ell-1}}. 
\]
\end{theorem}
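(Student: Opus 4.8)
The plan is to transcribe the argument behind the ``Third Bound'' in the proof of \Cref{thm:level_l} (itself modeled on \Cref{thm:level_1}), now invoking \Cref{lem:noisylevell} in place of \Cref{lem:level_l}; no cleanup step is needed, since for a noisy decision tree every $\Pcal_v$ is automatically a product distribution. Padding with zero-correlation queries, assume WLOG every root-to-leaf path has length $D$, let $v_0,\ldots,v_D$ be a random root-to-leaf path, and set $\vbm^{(i)}_j=\widehat{\Pcal_{v_i}}(j)$ as in \Cref{clm:noisedelta}. Put $a_S=\sgn(\widehat\Tcal(S))$ for $|S|=\ell$. Because $\Pcal_{v_D}$ is a product distribution, $\widehat{\Pcal_{v_D}}(S)=\prod_{j\in S}\vbm^{(D)}_j=\vbm^{(D)}_S$, so, writing $U:=\sum_{|S|=\ell}a_S\vbm^{(D)}_S$,
\[
\sum_{|S|=\ell}\abs{\widehat\Tcal(S)}=\E_{v_0,\ldots,v_D}\sbra{\Tcal(v_D)\cdot U}\le\E_{v_0,\ldots,v_D}\sbra{\Tcal(v_D)\cdot\abs U}
\]
(the noisy analogue of \Cref{lem:clean_to_Fourier}, using $\Tcal(v_D)\ge0$ and $\E[\Tcal(v_D)]=p$). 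Applying \Cref{lem:noisylevell} with $t=\ell$ and $T=\emptyset$ gives, for every $\eps\le1/2$, the tail bound $\Pr\sbra{\abs U\ge S(d,\ell,\ell,\eps)}\le\ell\eps$.

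I would then bucket $\abs U$ along a geometric sequence of failure probabilities, exactly as in the second bound of \Cref{thm:level_1}. Set $\eps_i=p\,2^{-i}\le1/2$ and $M_i:=S(d,\ell,\ell,\eps_i)$; since $S(d,\ell,\ell,\cdot)$ is decreasing, $(M_i)_{i\ge0}$ is non-decreasing and tends to $\infty$, so $[0,\infty)=[0,M_0]\cup\bigcup_{i\ge1}(M_{i-1},M_i]$. Splitting $\E[\Tcal(v_D)\abs U]$ over these buckets, bounding the bucket $[0,M_0]$ by $M_0\E[\Tcal(v_D)]=M_0p$ and the bucket $(M_{i-1},M_i]$ by $M_i\Pr[\abs U>M_{i-1}]\le M_i\cdot\ell\eps_{i-1}=2\ell p\,2^{-i}M_i$, gives $\sum_{|S|=\ell}\abs{\widehat\Tcal(S)}\le p\bigl(M_0+2\ell\sum_{i\ge1}2^{-i}M_i\bigr)$. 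In contrast to the parity-tree case there is no crude a-priori bound on $\abs U$ to fall back on (a noisy tree's reachable set is all of $\binpm^n$, so \Cref{lem:weight_l_vectors_in_dim_d} has no analogue), so the whole estimate must come out of these buckets.

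The quantitative heart is thus to bound $\sum_{i\ge1}2^{-i}M_i$ without losing an $\ell^{\Theta(\ell)}$ factor. Write $a_j:=\log(n^{j}/p)$, so $M_i^2=(\tau d)^{\ell}\prod_{j=0}^{\ell-1}(a_j+i)$. The key point is that $a_0=\log(1/p)\ge1$ while $a_j\ge j\log n\ge j$ for $1\le j\le\ell-1$ (using $n\ge2$), so $\prod_{j=0}^{\ell-1}(1+i/a_j)\le(1+i)\prod_{j=1}^{\ell-1}(1+i/j)=(1+i)\binom{i+\ell-1}{\ell-1}\le(1+i)\,2^{\,i+\ell-1}$. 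Hence $M_i\le(\tau d)^{\ell/2}\sqrt{\prod_{j}a_j}\cdot\sqrt{(1+i)\,2^{\,i+\ell-1}}$, the factor $2^{i}$ cancels against $2^{-i}$, and $\sum_{i\ge1}2^{-i}M_i\le 2^{(\ell-1)/2}(\tau d)^{\ell/2}\sqrt{\prod_j a_j}\sum_{i\ge1}2^{-i/2}\sqrt{1+i}=O(d)^{\ell/2}\sqrt{\prod_{j=0}^{\ell-1}\log(n^{j}/p)}$, which also dominates $M_0$. Using $1+2\ell\le3^{\ell}$ and $\prod_{j=0}^{\ell-1}\log(n^{j}/p)\le\log(1/p)\bigl(\log(n^{\ell}/p)\bigr)^{\ell-1}$ then gives $\sum_{|S|=\ell}\abs{\widehat\Tcal(S)}\le p\cdot O(d)^{\ell/2}\sqrt{\log(1/p)\,\bigl(\log(n^{\ell}/p)\bigr)^{\ell-1}}$. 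The $p$-free bound follows from a two-case analysis: if $p\ge n^{-\ell}$ then $\log(n^{\ell}/p)\le2\ell\log n$ and $p\sqrt{\log(1/p)}=O(1)$, giving $O(d)^{\ell/2}(\ell\log n)^{(\ell-1)/2}$; if $p<n^{-\ell}$ then $\log(n^{\ell}/p)\le2\log(1/p)$, so the expression is $\le 2^{(\ell-1)/2}p(\log(1/p))^{\ell/2}$, and since $\log(1/p)>\ell\log n\ge\ell$ puts us on the decreasing branch of $q\mapsto 2^{-q}q^{\ell/2}$, this is $<2^{(\ell-1)/2}n^{-\ell}(\ell\log n)^{\ell/2}$, which is dominated by the first case (as $n^{-\ell}(\ell\log n)^{1/2}\le1$ when $n\ge2$ and $\ell\le n$). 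In both cases $\sum_{|S|=\ell}\abs{\widehat\Tcal(S)}=O(d)^{\ell/2}\sqrt{1+(\ell\log n)^{\ell-1}}$.

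I expect the only real obstacle to be the estimate in the last paragraph: because there is no crude fallback bound on $\abs U$, everything is squeezed out of the geometric buckets, and one must verify that the logarithmic weights $(a_j+i)$ grow slowly enough in $i$ — this is exactly where $a_j\ge j$ and $\binom{i+\ell-1}{\ell-1}\le2^{\,i+\ell-1}$ are used — so that the $2^{-i}$ factors keep the sum at $2^{O(\ell)}$ rather than $\ell^{\Theta(\ell)}$. Everything else is a routine adaptation of the parity-tree argument, with \Cref{lem:noisylevell} (already proved, and the noisy counterpart of \Cref{lem:level_l}) supplying the needed concentration.
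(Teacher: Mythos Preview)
Your proposal is correct and follows essentially the same approach as the paper: both set up $\sum_{|S|=\ell}|\widehat\Tcal(S)|\le\E[\Tcal(v_D)|U|]$, invoke \Cref{lem:noisylevell} with $t=\ell$, bucket $|U|$ over a geometric sequence of failure probabilities $\eps_i\asymp p\cdot 2^{-i}$, and sum. The only differences are cosmetic computational choices---the paper first crudely bounds $S(d,\ell,\ell,\eps)\le\sqrt{O(d)^\ell(\log(n^{\ell-1}/\eps))^{\ell-1}\log(1/\eps)}$ and then uses $(x+y)^b\le 2^b(x^b+y^b)$ to control the bucket sum, whereas you keep the full product $\prod_{j=0}^{\ell-1}(a_j+i)$ and use the neat identity $\prod_{j=1}^{\ell-1}(1+i/j)=\binom{i+\ell-1}{\ell-1}\le 2^{i+\ell-1}$; and for the $p$-free bound the paper invokes $p(\log(1/p))^k\le O(k)^k$ directly while you case-split on $p\gtrless n^{-\ell}$.
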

\begin{proof}
For any $S\in\binom{[n]}\ell$, let $a_S=\sgn\pbra{\widehat{\Tcal}(S)}$.
Let $v_0,\ldots,v_D$ be a random root-to-leaf path in $\Tcal$. 
Note that
\begin{equation}\label{eq:noisyeqn}  \sum_S\abs{\widehat{\Tcal}(S)}= \sum_S a_S\cdot \widehat{\Tcal}(S) = \E \sbra{ \Tcal(v_D) \cdot \sum_S a_S \cdot \vbm_S^{(D)} }  \le  \E \sbra{ \Tcal(v_D) \cdot \abs{V}},
\end{equation}
where $V=\sum_S a_S\cdot_S\vbm_S^{(D)}$. 
By \Cref{lem:noisylevell}, we know $ \Pr\sbra{  \abs{V} \ge S(\eps) } \le \eps \cdot\ell$, where
\[ 
S(\eps)=S(d,\ell,\ell,\eps) = \sqrt{O(d)^\ell \cdot \log\pbra{\tfrac{n^{\ell-1}}{\eps}}\cdots \log\pbra{\tfrac{n^0}{\eps}} } \le \sqrt{O(d)^\ell \cdot \pbra{\log\pbra{\tfrac{n^{\ell-1}}{\eps}}}^{\ell-1}\cdot \log\pbra{\tfrac{1}{\eps}}}.
\]

For integer $i\ge1$, let $I_{i}=\sbra{S\pbra{ p/\pbra{\ell2^i}},S\pbra{p/\pbra{\ell2^{i+1}}}}$ and $I_0=\sbra{0,S(p/\ell )}$ be intervals. Then for each $i\ge1$, $\Pr\sbra{\abs{V}\in I_i}\le p/2^i$. We also know that $\E_{v_0,\ldots,v_D}\sbra{\Tcal(v_D)}\le p$. Thus,
\begin{align*} 
\Cref{eq:noisyeqn}
&\le  \E_{v_0,\ldots,v_D}\sbra{\Tcal(v_D)\cdot|V|\cdot \sum_{i=0}^{+\infty} \indicator_{\abs{V}\in I_i}}\\
&\le  S\pbra{\frac p\ell}\cdot\E\sbra{\Tcal(v_D)}+\sum_{i=1}^{+\infty} S\pbra{\frac{p}{\ell\cdot 2^{i+1}}}\cdot  \E\sbra{\indicator_{\abs{V}\in I_i}}\\
&\le\sum_{i=0}^{+\infty} S\pbra{\frac{p}{\ell\cdot 2^{i+1}}}\cdot \frac{p}{2^i}\\
&= \sum_{i=0}^{+\infty} p\cdot \sqrt{ O(d)^\ell \cdot \pbra{\log\pbra{\tfrac {n^{\ell-1}\cdot\ell}p}+i+1}^{\ell-1}\cdot \pbra{\log\pbra{\tfrac1p}+\log(\ell)+i+1}}\cdot \frac{1}{2^i} \\
&\le\sum_{i=0}^{+\infty} p\cdot \sqrt{ O(d)^\ell \cdot \pbra{\pbra{\log\pbra{\tfrac {n^\ell}p}}^{\ell-1}+\pbra{i+1}^{\ell-1}}\cdot \pbra{\log\pbra{\tfrac1p}+i+1}}\cdot \frac{1}{2^i}
\tag{since $n\ge\ell$, and $(x+y)^b\le2^b\cdot\pbra{x^b+y^b}$ and $\sqrt{x+y}\le\sqrt x+\sqrt y$ for $x,y,b\ge0$}\\
&\le p\cdot\sqrt{O(d)^\ell\cdot \log\pbra{\tfrac1p}\pbra{\log\pbra{\tfrac {n^{\ell} }p}}^{\ell-1}},
\end{align*}
where the last inequality follows from $p\le1/2$, $n\ge2$ and 
$$
\sum_{i=0}^{+\infty}(i+1)^{\ell/2}\cdot2^{-i}= O(\ell)^{\ell/2}
\le O(1)^\ell\cdot\ell^{(\ell-1)/2}
\le O(1)^\ell\cdot \pbra{\log\pbra{n^\ell/p}}^{(\ell-1)/2}.
$$
Note that $p\cdot\pbra{\log(1/p)}^k\le O(k)^k$ for $p\in(0,1)$ and $k\ge0$, thus
\begin{align*}
p\cdot\sqrt{\log\pbra{\tfrac1p}\pbra{\log\pbra{\tfrac {n^{\ell} }p}}^{\ell-1}}
&=p\cdot\sqrt{\log\pbra{\tfrac1p}\pbra{\ell\log(n)+\log\pbra{\tfrac1p}}^{\ell-1}}\\
&\le O(1)^\ell\cdot\pbra{\sqrt{\pbra{\ell\log(n)}^{\ell-1}}+\ell^{\ell/2}}\\
&=O(1)^\ell\cdot\sqrt{1+(\ell\log(n))^{\ell-1}}.
\tag*{\qedhere}
\end{align*}
\end{proof}

\section*{Acknowledgement}
We thank anonymous reviewers for helpful comments.

\bibliographystyle{alphaurl} 
\bibliography{ref}

\appendix

\section[Proof of Corollary 1.8]{Proof of \Cref{cor:small-size}}\label{sec:small-size decision trees}
\begin{corollary*}[\Cref{cor:small-size} restated]
Let $\Tcal$ be a parity decision tree of size at most $s>1$ on $n$ variables.
Then, 
$$\forall{\ell \in [n]}: L_{1,\ell}(f) \le (\log(s))^{\ell/2} \cdot O(\ell \cdot \log(n))^{1.5\ell}.$$
\end{corollary*}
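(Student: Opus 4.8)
The plan is a routine size-to-depth reduction feeding into \Cref{thm:level_l}. Let $\Tcal$ compute $f\colon\binpm^n\to\bin$ and have size at most $s$; we may assume $n\ge2$, the case $n=1$ being degenerate. Since every root-to-leaf path queries linearly independent parities, $\Tcal$ has depth at most $n$. Fix $d=\min\{n,\ \lceil\log s+\ell\log n\rceil\}$ and form $\Tcal'$ from $\Tcal$ by replacing, at every internal node of depth $d$, the subtree hanging below it by a leaf labeled $0$. Then $\Tcal'$ is a parity decision tree of depth at most $d\le n$ computing some $f'\colon\binpm^n\to\bin$, and $f$ and $f'$ can differ only on inputs whose $\Tcal$-path has length exceeding $d$. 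Such inputs reach one of at most $s$ nodes of depth $d$, each reached with probability exactly $2^{-d}$, so $\Pr[f\neq f']\le s\cdot 2^{-d}$; in particular, if $d=n<\lceil\log s+\ell\log n\rceil$ then no truncation happens and $f'\equiv f$.

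Next I would bound $L_{1,\ell}(f)\le L_{1,\ell}(f')+L_{1,\ell}(f-f')$. For the first term, \Cref{thm:level_l} applies to $\Tcal'$ (its hypothesis $n\ge\max\{d,\ell\}$ holds by construction and since $\ell\le n$) and gives $L_{1,\ell}(f')\le O(\sqrt d\cdot\ell\cdot\log n)^{\ell}=O(1)^\ell\cdot d^{\ell/2}\cdot(\ell\log n)^\ell$. For the second term, Cauchy--Schwarz over the $\binom n\ell$ sets of size $\ell$ followed by Parseval gives
\[
L_{1,\ell}(f-f')\ \le\ \sqrt{\binom n\ell}\cdot\|f-f'\|_2\ =\ \sqrt{\binom n\ell}\cdot\sqrt{\Pr[f\neq f']}\ \le\ n^{\ell/2}\sqrt{s\cdot 2^{-d}},
\]
using that $(f-f')^2$ is the indicator of $\{f\neq f'\}$. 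When $d=\lceil\log s+\ell\log n\rceil$ we have $s\cdot2^{-d}\le 2^{-\ell\log n}=n^{-\ell}$, so this term is at most $1$; when instead $d=n$ it vanishes. Finally, since $s>1$ forces $\log s\ge1$ and $n\ge2$ forces $\ell\log n\ge1$, the additive $1$ is absorbed into $(\log s)^{\ell/2}\cdot O(\ell\log n)^{1.5\ell}$, and for the main term $d^{\ell/2}\le O(1)^\ell(\log s+\ell\log n)^{\ell/2}\le O(1)^\ell\big((\log s)^{\ell/2}+(\ell\log n)^{\ell/2}\big)$ gives
\[
L_{1,\ell}(f')\ \le\ (\log s)^{\ell/2}\cdot O(\ell\log n)^{\ell}\ +\ O(\ell\log n)^{1.5\ell}\ \le\ (\log s)^{\ell/2}\cdot O(\ell\log n)^{1.5\ell},
\]
which, together with the error bound, is the claim. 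In the edge regime $d=n\le\lceil\log s+\ell\log n\rceil$ one instead notes that either $\log s\ge n/2$, so $n^{\ell/2}\le O(1)^\ell(\log s)^{\ell/2}$, or $\ell\log n\ge n/2$, so $n^{\ell/2}\le O(\ell\log n)^{\ell/2}$; either way the same bound on $L_{1,\ell}(f')=L_{1,\ell}(f)$ follows.

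I do not expect a genuine obstacle here. The only points requiring care are (i) keeping the truncation depth $d$ at most $n$ so that the hypothesis $n\ge d$ of \Cref{thm:level_l} is met — hence the cap, together with the observation that in the capped regime there is nothing to truncate — and (ii) checking that the exponent gap between the target $1.5\ell$ and the depth-bound exponent $\ell$ is exactly the slack $O(\ell\log n)^{\ell/2}$ needed to swallow the factor $\sqrt{\binom n\ell}\le n^{\ell/2}$ produced by Cauchy--Schwarz, which is what pins down the choice $d\asymp\log s+\ell\log n$.
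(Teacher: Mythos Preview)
Your proof is correct and follows essentially the same route as the paper: truncate $\Tcal$ at depth $d\approx\log s+\ell\log n$, bound the resulting $L_{1,\ell}$-error by $1$, and apply the depth bound of \Cref{thm:level_l}. The only differences are cosmetic---the paper controls the error via the per-coefficient estimate $|\hat\Tcal(S)-\hat{\Tcal'}(S)|\le\Pr[f\neq f']$ summed over $\binom n\ell\le n^\ell$ sets rather than via Cauchy--Schwarz and Parseval, and it does not explicitly treat the edge case $d=n$---but the argument is the same.
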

\begin{proof}
	We approximate $\Tcal$ with error $\eps = 1/n^{\ell}$ by another parity decision tree $\Tcal'$ of depth $d =\lceil\log\pbra{s \cdot n^{\ell}}\rceil$, where we simply replace all nodes of depth $d$ in $\Tcal$ with leaves that return $0$.
	Since there are at most $s$ nodes in $\Tcal$, the probability that a random input would reach one of the nodes of depth $d$ is at most $2^{-d} \cdot s \le 1/n^{\ell}$. 
	Hence $\Pr_{x}\sbra{\Tcal(x) \neq \Tcal'(x)} \le \eps$.
	This implies that $\abs{\hat{\Tcal}(S) - \hat{\Tcal'}(S)} \le \eps$ for any subset $S \subseteq [n]$.
	Thus, 
	$$
	L_{1,\ell}(\Tcal) = \sum_{S:|S|=\ell} \abs{\hat{\Tcal}(S)} \le \sum_{S:|S|=\ell} \pbra{\abs{\hat{\Tcal'}(S)} +\eps} \le L_{1,\ell}(\Tcal') + 1.
	$$
	Since $\Tcal'$ is of depth at most $d =\lceil\log(s) + \ell \cdot \log (n)\rceil=O\pbra{\log(s) \cdot\ell \cdot \log(n)}$, we obtain our bound.
\end{proof}

\section[Proof of Lemma 3.3]{Proof of \Cref{lem:adaptive_azuma_new}}\label{app:lem:adaptive_azuma_new}

We will use the definition of sub-Gaussian random variables.
\begin{definition}[Sub-Gaussian random variables]
We say a random variable $x$ is $\Delta$-sub-Gaussian if $\E\sbra{e^{t\cdot x}}\le e^{t^2\Delta^2}$ holds for all $t\in\Rbb$.
\end{definition}

Now we prove the following sub-Gaussian adaptive Azuma's inequality.
\begin{lemma}[Sub-Gaussian adaptive Azuma's inequality]\label{lem:sub_gaussian_azuma}
Let $X^{(0)},\ldots,X^{(D)}$ be a martingale with respect to a filtration $\pbra{\Fcal^{(i)}}_{i=0}^D$\footnote{$\Fcal^{(0)}\subseteq\Fcal^{(1)}\subseteq\cdots\subseteq\Fcal^{(D)}$ is an increasing sequence of $\sigma$-algebra where each $\Fcal^{(i)}$ makes $X^{(0)},\ldots,X^{(i+1)}$ measurable and $\E\sbra{X^{(i)}\mid\Fcal^{(i-1)}}=X^{(i-1)}$. Intuitively, the filtration is the history of the martingale.} and $\Delta^{(1)},\ldots,\Delta^{(D)}$ be a sequence of magnitudes such that $X^{(0)}=0$ and $X^{(i)}=X^{(i-1)}+\delta^{(i)}$ for $i\in[D]$, where if conditioning on $\Fcal^{(i-1)}$, $\delta^{(i)}$ is a $\Delta^{(i)}$-sub-Gaussian random variable and $\Delta^{(i)}$ is a fixed value.

If there exists some constant $U\ge0$ such that $\sum_{i=1}^D\abs{\Delta^{(i)}}^2\le U$ always holds, then for any $\beta\ge0$ we have
$$
\Pr\sbra{\max_{i=0,1,\ldots,D}\abs{X^{(i)}}\ge\beta\cdot\sqrt{2U}}\le 2\cdot e^{-\beta^2/2}.
$$
\end{lemma}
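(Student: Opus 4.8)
The plan is to run the classical exponential–supermartingale argument (the Chernoff–Cramér method adapted to martingales) in the adaptive setting, and then upgrade the one–sided tail bound to a maximal bound via Doob's inequality for nonnegative supermartingales. We may assume $U>0$: if $U=0$ then every $\Delta^{(i)}=0$, which forces each $\delta^{(i)}$ to vanish almost surely (a $0$-sub-Gaussian variable has moment generating function $\le 1$ for all $t$, hence is $0$), so $X^{(i)}\equiv 0$ and there is nothing to do. We may also assume $\beta>0$, the case $\beta=0$ being immediate from $2\le 2$.

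\emph{Step 1: build the supermartingale.} Fix $\lambda>0$ and set
\[
M^{(i)}=\exp\!\pbra{\lambda X^{(i)}-\lambda^2\sum_{j=1}^i\abs{\Delta^{(j)}}^2},\qquad i=0,1,\ldots,D,
\]
so that $M^{(0)}=1$ and each $M^{(i)}$ is nonnegative and $\Fcal^{(i)}$-measurable. I would first verify that $\pbra{M^{(i)}}_i$ is a supermartingale for $\pbra{\Fcal^{(i)}}_i$. Since $\Delta^{(i)}$ is $\Fcal^{(i-1)}$-measurable, the factor $e^{-\lambda^2\abs{\Delta^{(i)}}^2}$ pulls out of the conditional expectation, and the conditional sub-Gaussianity of $\delta^{(i)}$ gives
\[
\E\sbra{M^{(i)}\mid\Fcal^{(i-1)}}=M^{(i-1)}\cdot e^{-\lambda^2\abs{\Delta^{(i)}}^2}\cdot\E\sbra{e^{\lambda\delta^{(i)}}\mid\Fcal^{(i-1)}}\le M^{(i-1)}.
\]
Doob's maximal inequality for nonnegative supermartingales then yields $\Pr\sbra{\max_{0\le i\le D}M^{(i)}\ge a}\le\E\sbra{M^{(0)}}/a=1/a$ for every $a>0$.

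\emph{Step 2: translate to $X^{(i)}$ and optimize $\lambda$.} Here the surely-true hypothesis $\sum_{j=1}^D\abs{\Delta^{(j)}}^2\le U$ is crucial (an in-expectation bound would not suffice for a statement about the maximum): for every $i$ we have $\sum_{j=1}^i\abs{\Delta^{(j)}}^2\le U$, so on the event $X^{(i)}\ge\beta\sqrt{2U}$ we get $M^{(i)}\ge\exp\!\pbra{\lambda\beta\sqrt{2U}-\lambda^2U}$. Hence
\[
\Pr\sbra{\max_{0\le i\le D}X^{(i)}\ge\beta\sqrt{2U}}\le\Pr\sbra{\max_{0\le i\le D}M^{(i)}\ge e^{\lambda\beta\sqrt{2U}-\lambda^2U}}\le e^{\lambda^2U-\lambda\beta\sqrt{2U}}.
\]
Choosing $\lambda=\beta/\sqrt{2U}$ makes the exponent equal to $-\beta^2/2$, so $\Pr\sbra{\max_i X^{(i)}\ge\beta\sqrt{2U}}\le e^{-\beta^2/2}$. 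Running the identical argument for the martingale $\pbra{-X^{(i)}}_i$ — note $-\delta^{(i)}$ is again $\Delta^{(i)}$-sub-Gaussian, since the defining inequality $\E\sbra{e^{tx}}\le e^{t^2\Delta^2}$ is invariant under $t\mapsto -t$ — bounds $\Pr\sbra{\max_i(-X^{(i)})\ge\beta\sqrt{2U}}$ by $e^{-\beta^2/2}$ as well, and a union bound gives the claimed $2e^{-\beta^2/2}$.

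\emph{Main obstacle.} There is no serious difficulty; the two points that deserve care are (i) the measurability of $\Delta^{(i)}$ with respect to $\Fcal^{(i-1)}$, which is exactly what lets the $e^{-\lambda^2\abs{\Delta^{(i)}}^2}$ factor leave the conditional expectation, and (ii) using the pointwise bound $\sum_j\abs{\Delta^{(j)}}^2\le U$ so that the containment of events in Step~2 is valid for the running maximum, not merely for the endpoint $X^{(D)}$. Finally, the passage from \Cref{lem:sub_gaussian_azuma} to \Cref{lem:adaptive_azuma_new} is immediate: if $z^{(i)}$ is mean-zero with $\abs{z^{(i)}}\le 1$ conditioned on the past, then Hoeffding's lemma gives $\E\sbra{e^{\lambda\Delta^{(i)}z^{(i)}}\mid\Fcal^{(i-1)}}\le e^{\lambda^2\abs{\Delta^{(i)}}^2/2}$, i.e. $\Delta^{(i)}z^{(i)}$ is $\pbra{\abs{\Delta^{(i)}}/\sqrt2}$-sub-Gaussian, and applying \Cref{lem:sub_gaussian_azuma} with $U/2$ in place of $U$ yields the stated bound at threshold $\beta\sqrt{2U}$.
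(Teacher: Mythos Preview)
Your proof is correct and follows the same Chernoff--Cram\'er philosophy as the paper, but the two arguments are organized differently in how they obtain the \emph{maximal} bound. The paper constructs a stopped martingale $\hat X^{(i)}$ (freezing $X$ at the first time it exceeds the threshold), observes that $\Pr[\max_i|X^{(i)}|\ge\beta\sqrt{2U}]=\Pr[|\hat X^{(D)}|\ge\beta\sqrt{2U}]$, and then iteratively bounds $\E[e^{t\hat X^{(D)}}]\le e^{t^2U}$ by peeling off one increment at a time. You instead recognize $M^{(i)}=\exp(\lambda X^{(i)}-\lambda^2\sum_{j\le i}|\Delta^{(j)}|^2)$ as a nonnegative supermartingale and appeal to Doob's maximal inequality. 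Both routes are standard; yours is arguably cleaner since it avoids the explicit stopping construction, while the paper's is more self-contained in that it does not cite Doob.

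One small slip in your closing remark: with Hoeffding's lemma giving $\Delta^{(i)}z^{(i)}$ the parameter $|\Delta^{(i)}|/\sqrt2$, applying \Cref{lem:sub_gaussian_azuma} with $U/2$ yields the threshold $\beta\sqrt{2\cdot U/2}=\beta\sqrt U$, not $\beta\sqrt{2U}$. This is \emph{stronger} than what \Cref{lem:adaptive_azuma_new} claims, so the implication still goes through; the paper simply uses the cruder \Cref{fct:bounded_implies_sub_gaussian} (parameter $|\Delta^{(i)}|$ rather than $|\Delta^{(i)}|/\sqrt2$) and thereby lands exactly on $\beta\sqrt{2U}$.
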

\begin{proof}
The bound holds trivially when $\beta=0$, hence we assume $\beta>0$ from now on.
We construct another martingale ${\hat X}^{(0)},\ldots,{\hat X}^{(D)}$ as follows:
$$
{\hat X}^{(i)}=\begin{cases}
X^{(i)} & 0\le i\le d,\\
X^{(d)} & i>d,
\end{cases}
\quad\text{where}\quad
d=\min\cbra{D}\cup\cbra{i\in\cbra{0,1\ldots,D}\mid \abs{X^{(i)}}\ge\beta\cdot\sqrt{2U}}.
$$
We write $\hat{\delta}^{(i)}=\hat{X}^{(i)}-{\hat X}^{(i-1)}$, then $\hat{\delta}^{(i)}=\delta^{(i)}$ for all $i\le d$; and $\hat{\delta}^{(i)}\equiv0$ for all $i>d$.
Let $\hat{\Delta}^{(i)}=\Delta^{(i)}$ for all $i\le d$; and $\hat{\Delta}^{(i)}\equiv0$ for all $i>d$.
Thus $\hat{\delta}^{(i)}$ is $\hat{\Delta}^{(i)}$-sub-Gaussian given $\Fcal^{(i-1)}$; and 
$$
\sum_{i=1}^D\abs{\hat{\Delta}^{(i)}}^2=\sum_{i=1}^d\abs{\Delta^{(i)}}^2\le U.
$$
Moreover, we have
$$
\Pr\sbra{\max_{i=0,1,\ldots,D}\abs{X^{(i)}}\ge\beta\cdot\sqrt{2U}}=\Pr\sbra{\abs{{\hat X}^{(D)}}\ge\beta\cdot\sqrt{2U}}.
$$ 
Let $t>0$ be a parameter and we bound $\E\sbra{e^{t\cdot\hat{X}^{(D)}}}$ as follows
\begingroup
\allowdisplaybreaks
\begin{align}
\E\sbra{e^{t\cdot\hat{X}^{(D)}}}
&=\E_{\Fcal^{(D-1)}}\sbra{e^{t\cdot\hat{X}^{(D-1)}}\cdot\E_{\Fcal^{(D)}}\sbra{e^{t\cdot\pbra{\hat{X}^{(D)}-\hat{X}^{(D-1)}}}\mid\Fcal^{(D-1)}}}
\label{eq:clm:adaptive_azuma_1}\\
&=\E_{\Fcal^{(D-1)}}\sbra{e^{t\cdot\hat{X}^{(D-1)}}\cdot\E_{\Fcal^{(D)}}\sbra{e^{t\cdot\hat{\delta}^{(D)}}\mid\Fcal^{(D-1)}}}
\label{eq:clm:adaptive_azuma_2}\\
&\le\E_{\Fcal^{(D-1)}}\sbra{e^{t\cdot\hat{X}^{(D-1)}}\cdot e^{t^2\pbra{\hat{\Delta}^{(D)}}^2}}
\tag{since $\hat{\delta}^{(D)}$ is $\hat{\Delta}^{(D)}$-sub-Gaussian}\\
&\le\E_{\Fcal^{(D-1)}}\sbra{e^{t\cdot\hat{X}^{(D-1)}}\cdot e^{t^2\pbra{U-\pbra{\hat{\Delta}^{(1)}}^2-\cdots-\pbra{\hat{\Delta}^{(D-1)}}^2}}}
\notag\\
&\le\E_{\Fcal^{(D-2)}}\sbra{e^{t\cdot\hat{X}^{(D-2)}}\cdot e^{t^2\pbra{U-\pbra{\hat{\Delta}^{(1)}}^2-\cdots-\pbra{\hat{\Delta}^{(D-1)}}^2}}e^{t^2\pbra{\hat{\Delta}^{(D-1)}}^2}}
\tag{similar to \Cref{eq:clm:adaptive_azuma_1} and \Cref{eq:clm:adaptive_azuma_2}}\\
&=\E_{\Fcal^{(D-2)}}\sbra{e^{t\cdot\hat{X}^{(D-2)}}\cdot e^{t^2\pbra{U-\pbra{\hat{\Delta}^{(1)}}^2-\cdots-\pbra{\hat{\Delta}^{(D-2)}}^2}}}
\notag\\
&\le\cdots
\le\E_{\Fcal^{(D-k)}}\sbra{e^{t\cdot\hat{X}^{(D-k)}}\cdot e^{t^2\pbra{U-\pbra{\hat{\Delta}^{(1)}}^2-\cdots-\pbra{\hat{\Delta}^{(D-k)}}^2}}}
\le\cdots\notag\\
&\le e^{t^2U}.
\end{align}
\endgroup
Setting $t=\beta/\sqrt{2U}$ implies that
$$
\Pr\sbra{\hat{X}^{(D)}\ge \beta \cdot \sqrt{2U}} \le \frac{\E\sbra{e^{t\cdot\hat{X}^{(D)}}}}{e^{t\cdot\beta\cdot\sqrt{2U} }} \le\frac{e^{t^2U}}{e^{\beta^2}}= e^{-\beta^2/2}. 
$$
Similarly we can show $\Pr\sbra{\hat{X}^{(D)}\le -\beta \cdot \sqrt{2U}} \le e^{-\beta^2/2}$, which completes the proof by a union bound.
\end{proof}

For our applications, we need the following fact.
\begin{fact}\label{fct:bounded_implies_sub_gaussian}
Let $x$ be a mean-zero random variable and assume $|x|\le\Delta$ always holds. Then $x$ is $\Delta$-sub-Gaussian.
\end{fact}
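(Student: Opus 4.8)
This is the classical Hoeffding lemma (with room to spare, since the definition only demands $e^{t^2\Delta^2}$ rather than $e^{t^2\Delta^2/2}$). The plan is to exploit convexity of the exponential together with an elementary power-series comparison. First, one may assume $\Delta>0$, as $\Delta=0$ forces $x\equiv 0$ and the inequality $\E[e^{tx}]=1\le 1$ is trivial; the case $t=0$ is likewise immediate. So fix $\Delta>0$ and $t\in\Rbb$.

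The first step is to linearize $e^{tx}$ on the interval $[-\Delta,\Delta]$. Since $|x|\le\Delta$ always, I would write each realization of $x$ as the convex combination $x=\lambda(x)\cdot\Delta+(1-\lambda(x))\cdot(-\Delta)$ with $\lambda(x)=\tfrac{\Delta+x}{2\Delta}\in[0,1]$. Convexity of $y\mapsto e^{ty}$ then gives the pointwise bound $e^{tx}\le \lambda(x)\,e^{t\Delta}+(1-\lambda(x))\,e^{-t\Delta}$. Taking expectations and using $\E[x]=0$, hence $\E[\lambda(x)]=\tfrac12$, yields $\E[e^{tx}]\le \tfrac12 e^{t\Delta}+\tfrac12 e^{-t\Delta}=\cosh(t\Delta)$.

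The second step is the scalar inequality $\cosh(y)\le e^{y^2/2}$ for all real $y$. I would prove this by comparing Taylor series coefficient by coefficient: $\cosh(y)=\sum_{k\ge0}\frac{y^{2k}}{(2k)!}$ and $e^{y^2/2}=\sum_{k\ge0}\frac{y^{2k}}{2^k k!}$, and $(2k)!\ge 2^k k!$ for every $k\ge0$ (indeed $(2k)!/(k!)=(k+1)(k+2)\cdots(2k)\ge 2^k$), so all coefficients of $\cosh$ are dominated by those of $e^{y^2/2}$ and both series have nonnegative terms. Applying this with $y=t\Delta$ gives $\E[e^{tx}]\le e^{(t\Delta)^2/2}=e^{t^2\Delta^2/2}\le e^{t^2\Delta^2}$, which is exactly the asserted $\Delta$-sub-Gaussianity.

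There is no real obstacle here; the proof is entirely standard and the only routine care needed is the degenerate case $\Delta=0$ and the coefficientwise verification $(2k)!\ge 2^k k!$. The single ingredient worth flagging is that the mean-zero hypothesis is used precisely once, to collapse $\E[\lambda(x)]$ to $1/2$ so that the linearized bound becomes the symmetric $\cosh(t\Delta)$.
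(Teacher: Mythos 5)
Your proposal is correct and follows essentially the same route as the paper: linearize $e^{tx}$ on $[-\Delta,\Delta]$ via convexity to obtain $\E[e^{tx}]\le\cosh(t\Delta)$, then compare power series coefficientwise. The only difference is cosmetic — you prove the sharper $\cosh(y)\le e^{y^2/2}$ using $(2k)!\ge 2^k k!$, while the paper settles for the weaker $(2i)!\ge i!$ since the definition only requires $e^{t^2\Delta^2}$.
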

\begin{proof}
Note that $e^{t\cdot x}$ is convex for all $t\in\Rbb$. By Jensen's inequality, we have
\begin{equation*}
\E\sbra{e^{t\cdot x}}\le\frac12\pbra{e^{-t\Delta}+e^{t\Delta}}=\sum_{i=0}^{+\infty}\frac{\pbra{t\Delta}^{2i}}{(2i)!}\le\sum_{i=0}^{+\infty}\frac{\pbra{t\Delta}^{2i}}{i!}=e^{t^2\Delta^2}.
\tag*{\qedhere}
\end{equation*}
\end{proof}

As a corollary of \Cref{lem:sub_gaussian_azuma} and \Cref{fct:bounded_implies_sub_gaussian}, we obtain \Cref{lem:adaptive_azuma_new}.
\begin{corollary*}[\Cref{lem:adaptive_azuma_new} restated]
Let $X^{(0)},\ldots,X^{(D)}$ be a martingale and $\Delta^{(1)},\ldots,\Delta^{(D)}$ be a sequence of magnitudes such that $X^{(0)}=0$ and $X^{(i)}=X^{(i-1)}+\Delta^{(i)}\cdot z^{(i)}$ for $i\in[D]$, where if conditioning on $z^{(1)},\ldots,z^{(i-1)}$,
\begin{itemize}
\item[(1)] $z^{(i)}$ is a mean-zero random variable and $\abs{z^{(i)}}\le1$ always holds;
\item[(2)] $\Delta^{(i)}$ is a fixed value.
\end{itemize}
If there exists some constant $U\ge0$ such that $\sum_{i=1}^D\abs{\Delta^{(i)}}^2\le U$ always holds, then for any $\beta\ge0$ we have
$$
\Pr\sbra{\max_{i=0,1,\ldots,D}\abs{X^{(i)}}\ge\beta\cdot\sqrt{2U}}\le 2\cdot e^{-\beta^2/2}.
$$
\end{corollary*}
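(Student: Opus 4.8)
The plan is to derive this directly from the sub-Gaussian adaptive Azuma inequality (\Cref{lem:sub_gaussian_azuma}) together with \Cref{fct:bounded_implies_sub_gaussian}, so essentially no new probabilistic argument is needed. First I would fix the filtration: let $\Fcal^{(i)}$ be the $\sigma$-algebra generated by $z^{(1)},\ldots,z^{(i)}$, so that $\pbra{\Fcal^{(i)}}_{i=0}^{D}$ is increasing and every $X^{(i)}$ is $\Fcal^{(i)}$-measurable. Since $\Delta^{(i)}$ is a fixed value given $z^{(1)},\ldots,z^{(i-1)}$ (hence $\Fcal^{(i-1)}$-measurable) and $\E\sbra{z^{(i)}\mid\Fcal^{(i-1)}}=0$, we get $\E\sbra{X^{(i)}\mid\Fcal^{(i-1)}}=X^{(i-1)}$; thus $X^{(0)},\ldots,X^{(D)}$ is a martingale with respect to $\pbra{\Fcal^{(i)}}_i$, which matches the hypothesis of \Cref{lem:sub_gaussian_azuma}.

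Next I would verify the per-step sub-Gaussian bound on the increments $\delta^{(i)}:=X^{(i)}-X^{(i-1)}=\Delta^{(i)}\cdot z^{(i)}$. Conditioned on $\Fcal^{(i-1)}$, the variable $z^{(i)}$ is mean-zero with $\abs{z^{(i)}}\le1$, so by \Cref{fct:bounded_implies_sub_gaussian} it is $1$-sub-Gaussian, i.e. $\E\sbra{e^{s z^{(i)}}\mid\Fcal^{(i-1)}}\le e^{s^2}$ for all $s$. Applying this with $s=t\Delta^{(i)}$ (a valid fixed scalar given $\Fcal^{(i-1)}$) yields $\E\sbra{e^{t\delta^{(i)}}\mid\Fcal^{(i-1)}}\le e^{t^2(\Delta^{(i)})^2}$, so $\delta^{(i)}$ is $\abs{\Delta^{(i)}}$-sub-Gaussian conditioned on $\Fcal^{(i-1)}$, with $\abs{\Delta^{(i)}}$ a fixed value. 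Together with the assumed bound $\sum_{i=1}^{D}\abs{\Delta^{(i)}}^2\le U$, all hypotheses of \Cref{lem:sub_gaussian_azuma} are met, and invoking that lemma gives exactly $\Pr\sbra{\max_{i=0,\ldots,D}\abs{X^{(i)}}\ge\beta\cdot\sqrt{2U}}\le2\cdot e^{-\beta^2/2}$.

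I do not expect any genuine obstacle: the only points needing (minor) care are that the ``conditioning on $z^{(1)},\ldots,z^{(i-1)}$'' phrasing in the hypotheses is precisely the filtration setup of \Cref{lem:sub_gaussian_azuma}, and the elementary scaling fact that a $1$-sub-Gaussian random variable times a constant $c$ is $\abs{c}$-sub-Gaussian. If one wanted a self-contained proof instead of reusing \Cref{lem:sub_gaussian_azuma}, the alternative would be to rerun its inductive moment-generating-function argument inline — bounding $\E\sbra{e^{t X^{(D)}}}$ by peeling off one increment at a time and using $\abs{\Delta^{(i)}}^2\le U-\sum_{j<i}\abs{\Delta^{(j)}}^2$, after first truncating the process at the first index where $\abs{X^{(i)}}$ crosses the threshold so that the adaptive (history-dependent) step sizes cause no trouble — but reusing the already-proved \Cref{lem:sub_gaussian_azuma} via \Cref{fct:bounded_implies_sub_gaussian} is the cleanest route.
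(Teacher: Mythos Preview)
Your proposal is correct and follows exactly the paper's approach: the paper simply states that \Cref{lem:adaptive_azuma_new} is a corollary of \Cref{lem:sub_gaussian_azuma} and \Cref{fct:bounded_implies_sub_gaussian}, and you have spelled out precisely the filtration and scaling details needed to make that deduction explicit.
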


\section[Proof of Claim 5.8]{Proof of \Cref{clm:level_l_Gamma}}\label{app:clm:level_l_Gamma}

\begin{claim*}[\Cref{clm:level_l_Gamma} restated]
$\Pr\sbra{\Ecal_2}\le\eps/3$, where $\Ecal_2$ is the following event:
$$
\text{`` }\exists T\in\binom{[n]}{\ell-t},i,r,r',\abs{\Gamma_T^{(i)}(r,r')}\ge\pbra{100\min\cbra{k,\log\pbra{\tfrac{n^\ell}\eps}}\cdot\pbra{\tfrac{n^\ell}\eps}^{\frac6k}}^{\frac{r+r'}2}\cdot\sigma_T(r,r',C(v_{i-1}),i)\text{''}.
$$
\end{claim*}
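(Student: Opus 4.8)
The plan is to apply the tail bound of \Cref{lem:low_degree_polys} to the polynomial $\Gamma_T^{(i)}(r,r')$ after conditioning on $C(v_{i-1})$, and then union bound over all choices of $T$, $i$, $r$, and $r'$. First I would fix a $2k$-clean node $w$ that can serve as $C(v_{i-1})$, fix $T\in\binom{[n]}{\ell-t}$, a step index $i$ with $C(v_{i-1})=w$, and exponents $r\in[t]$, $0\le r'\le t-r$. Conditioning on $C(v_{i-1})=w$, the collection $\cbra{\vbm_j^{(i)}\mid j\in J(v_{i-1})}\cup\cbra{\vbm_j^{(i-1)}\mid j\in L(v_{i-1})}$ is $2k$-wise independent (this is exactly the observation already made before the claim, using \Cref{fct:coefficients_of_Pv} and the fact that these coordinates are fixed by non-adaptive queries issued within $\Tcal_w$), and $\Gamma_T^{(i)}(r,r')$ is a polynomial of degree $r+r'\le\ell\le k/4$ in these variables. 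Moreover its conditional mean is $0$ and its conditional variance is exactly $\sigma_T^2(r,r',w,i)$ as computed in the text.

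Next I would invoke the second (tail) bound in \Cref{lem:low_degree_polys} with degree $d=r+r'$, with the $2k$-wise independent distribution, and with deviation parameter $\alpha$ chosen so that the right-hand side is at most $\eps/(3\cdot D\cdot\binom n{\ell-t}\cdot(t+1)^2)$ or so — i.e. small enough that a union bound over the at most $D$ choices of $i$, the at most $n^{\ell-t}\le n^\ell$ choices of $T$, and the $O(\ell^2)$ choices of $(r,r')$ still sums to at most $\eps/3$. Concretely, \Cref{lem:low_degree_polys} gives
$$
\Pr\sbra{\abs{\Gamma_T^{(i)}(r,r')-0}\ge\alpha\cdot\sigma_T(r,r',w,i)}\le\alpha^2\cdot\pbra{\frac{2k}{(r+r')\cdot\alpha^{2/(r+r')}}}^k,
$$
and I want the threshold $\alpha\cdot\sigma$ to match $\pbra{100\min\cbra{k,\log(n^\ell/\eps)}\cdot(n^\ell/\eps)^{6/k}}^{(r+r')/2}\cdot\sigma$, i.e. $\alpha=\pbra{100\min\cbra{k,\log(n^\ell/\eps)}\cdot(n^\ell/\eps)^{6/k}}^{(r+r')/2}$. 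With this choice, $\alpha^{2/(r+r')}=100\min\cbra{k,\log(n^\ell/\eps)}\cdot(n^\ell/\eps)^{6/k}\ge 100k\cdot(n^\ell/\eps)^{6/k}/\log(\cdots)\ge 50k\cdot(n^\ell/\eps)^{6/k}$ when $\min\{k,\log(n^\ell/\eps)\}\ge k/2$, and in the other regime $\min=\log(n^\ell/\eps)$ one still gets $\alpha^{2/(r+r')}\cdot(r+r')/(2k)\ge (n^\ell/\eps)^{6/k}\cdot\Theta(1)$ using $r+r'\ge 1$ and $k\le\log(n^\ell/\eps)\cdot 4\ell/\ell=4\log(\cdots)$... actually cleaner: in either case $\frac{2k}{(r+r')\alpha^{2/(r+r')}}\le\frac{2k}{100\min\{k,\log(n^\ell/\eps)\}(n^\ell/\eps)^{6/k}}\le\frac{1}{(n^\ell/\eps)^{6/k}}$ (using $k\ge 4\ell\ge 4\ge $ the relevant constants, and $\min\{k,\log(n^\ell/\eps)\}\ge k/\log(\cdots)$ is not quite right, so I will instead just bound $\min\{k,\log(n^\ell/\eps)\}\ge$ whichever is needed and absorb constants). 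Then the failure probability is at most $\alpha^2\cdot(n^\ell/\eps)^{-6}\le (100\min\{k,\log(n^\ell/\eps)\}(n^\ell/\eps)^{6/k})^\ell\cdot(n^\ell/\eps)^{-6}$, and since $k\ge 4\ell$ we have $(n^\ell/\eps)^{6\ell/k}\le(n^\ell/\eps)^{3/2}$ and $\min\{k,\log(n^\ell/\eps)\}^\ell\le\log(n^\ell/\eps)^\ell\le (n^\ell/\eps)^{o(1)}$, hence the whole thing is at most $(n^\ell/\eps)^{-4}$, say, which beats $\eps/(3\cdot n^\ell\cdot D\cdot O(\ell^2))$ because $n\ge\max\{\tau,k,D\}$ forces $D\le n$ and $\ell\le n$. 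A union bound over all $(T,i,r,r')$ then yields $\Pr[\Ecal_2]\le\eps/3$.

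The one subtlety — and the part I'd be most careful about — is making sure the probability estimate is valid \emph{uniformly over the conditioning}: the event $\Ecal_2$ involves $\sigma_T(r,r',C(v_{i-1}),i)$ which is itself a random quantity depending on the realized $C(v_{i-1})$. This is handled by conditioning: for each fixed realization $w=C(v_{i-1})$ the bound above holds with $\sigma_T(r,r',w,i)$ a deterministic number, and since the bound on the conditional failure probability does not depend on $w$, averaging over the distribution of $C(v_{i-1})$ preserves it. The only other thing to double-check is that $\Gamma_T^{(i)}$ really is a polynomial of degree $r+r'$ in the $2k$-wise independent variables and that $r+r'\le\ell\le k/(4)$ so that $2\ell\cdot(r+r')\le 2k$ and \Cref{lem:low_degree_polys} applies with its hypothesis $\ell\le k/d$; here the role of "$\ell$" in \Cref{lem:low_degree_polys} is played by a small constant like $1$, so this is immediate. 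I expect the arithmetic of choosing $\alpha$ and verifying the constant $100$ (rather than some other absolute constant) to be the only place requiring genuine bookkeeping; everything else is a direct application of the already-proven concentration lemma plus a union bound.
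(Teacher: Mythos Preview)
Your overall plan is right — condition on $C(v_{i-1})$, invoke the tail bound of \Cref{lem:low_degree_polys} on the degree-$(r+r')$ polynomial $\Gamma_T^{(i)}(r,r')$, then union bound over $T,i,r,r'$. That is exactly what the paper does. But there is one genuine gap in your execution.

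You try to apply the second bound of \Cref{lem:low_degree_polys} directly with the full parameter $k$, and then you need the base $\frac{2k}{(r+r')\cdot\alpha^{2/(r+r')}}$ to be at most $(n^\ell/\eps)^{-6/k}$. With your choice of $\alpha$ this amounts to $\frac{2k}{100\min\{k,\log(n^\ell/\eps)\}}\le 1$. You yourself flag that your justification ``$k\le 4\log(n^\ell/\eps)$'' is not quite right — and indeed it is simply false: the hypothesis of \Cref{lem:level_l} only says $k\ge 4\ell$ and $k\le n$, so $k$ can be far larger than $\log(n^\ell/\eps)$. In that regime $\min\{k,\log(n^\ell/\eps)\}=\log(n^\ell/\eps)\ll k$, the base exceeds $1$, and raising it to the power $k$ makes the tail bound worthless.

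The fix the paper uses (and which your write-up is missing) is to replace $k$ by
\[
k'=\min\bigl\{k,\ \lceil 6\log(n^\ell/\eps)\rceil\bigr\}
\]
\emph{before} invoking \Cref{lem:low_degree_polys}. This is legitimate because a $2k$-clean tree is automatically $2k'$-clean for any $k'\le k$, so the $2k'$-wise independence hypothesis still holds. With this substitution one has $k'\ge 4(r+r')$ (from $n\ge k\ge 4\ell$ and $\eps\le 1/2$) and $k'\le 12\min\{k,\log(n^\ell/\eps)\}$, and now the base $\frac{2k'}{(r+r')\cdot\alpha^{2/(r+r')}}$ really is $\le 1$; the rest of the arithmetic (choosing the per-event failure budget $\eta=\eps/n^{3\ell}$ and verifying $4k'/\eta^{2/k'}\le 100\min\{k,\log(n^\ell/\eps)\}\cdot(n^\ell/\eps)^{6/k}$) then goes through cleanly. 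Equivalently, you could use the \emph{first} bound of \Cref{lem:low_degree_polys} and choose the moment parameter to be roughly $k'/(r+r')$ rather than the maximal $k/(r+r')$. Either way, the point is that when $k$ is large you must not use all of the available $2k$-wise independence — the optimal moment is dictated by $\log(n^\ell/\eps)$, not by $k$.
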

\begin{proof}
Let $k'=\min\cbra{k,\left\lceil6\log\pbra{n^\ell/\eps}\right\rceil}\le12\min\cbra{k,\log\pbra{n^\ell/\eps}}$. Then $\Tcal$ is also a depth-$D$ $2k'$-clean parity decision tree.
Observe that
\begin{align*}
&\phantom{\le}\Pr\sbra{\abs{\Gamma_T^{(i)}(r,r')}\ge
\pbra{\frac{4k'}{\eta^{2/k'}}}^{(r+r')/2}
\cdot\sigma_T(r,r',C(v_{i-1}),i)}\\
&\le\max_{C(v_{i-1})}
\Pr\sbra{\abs{\Gamma_T^{(i)}(r,r')}\ge
\pbra{\frac{4k'}{\eta^{2/k'}}}^{(r+r')/2}
\cdot\sigma_T(r,r',C(v_{i-1}),i)\mid C(v_{i-1})}\\
&\le\underbrace{\frac{(4\cdot k')^{r+r'}}{(2\cdot(r+r'))^{k'}}}_{\le1}\cdot\underbrace{\eta^{2-\frac{2(r+r')}{k'}}}_{\le\eta}
\tag{due to the second bound in \Cref{lem:low_degree_polys} and $k\ge4\cdot\ell\ge4\cdot(r+r')$}\\
&\le\eta.
\end{align*}
Thus by union bound over all $T\in\binom{[n]}{\ell-t},i\in[D'],r\in[t],0\le r'\le t-r$, we have
$$
\Pr\sbra{\exists T,i,r,r',~\abs{\Gamma_T^{(i)}(r,r')}\ge
\pbra{\tfrac{4k}{\eta^{2/k}}}^{(r+r')/2}
\cdot\sigma_T(r,r',C(v_{i-1}),i)}
\le Dt^2n^{\ell-t}\cdot\eta
\le \tfrac{n^{\ell+2}\cdot\eta}3
\le \tfrac{n^{3\cdot\ell}\cdot\eta}3,
$$
where we use the fact $n\ge\max\cbra{D,3\cdot t}$ and $t\ge1$.
By setting $\eta=\eps/n^{3\cdot\ell}$, we have
\begin{align*}
\frac{4k'}{\eta^{2/k'}}=4k'\pbra{\frac{n^{3\cdot\ell}}\eps}^{\frac2{k'}}\le4k'\pbra{\frac{n^\ell}\eps}^{\frac6{k'}}\le4\cdot12\min\cbra{k,\log\pbra{\frac{n^\ell}\eps}}\cdot2\pbra{\frac{n^\ell}\eps}^{\frac6k},
\end{align*}
as desired.
\end{proof}

\section[Proof of Claim 5.10]{Proof of \Cref{clm:level_l_final}}\label{app:clm:level_l_final}

We first need the following simple bound on $M$.
\begin{lemma}\label{lem:M_sums}
For any integer $s\ge1$, we have
$$
\sum_{r=s}^tM(D,d,k,\ell,t-r,\eps)\le\frac{2\cdot M(D,d,k,\ell,t,\eps)}{\pbra{\tau D\cdot\log\pbra{n^\ell/\eps}}^{s/2}}.
$$
\end{lemma}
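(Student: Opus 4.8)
The plan is to notice that $M(D,d,k,\ell,t',\eps)$ is, up to its exponent, a single fixed quantity, so the sum collapses to a geometric series. Write
$$A := \tau\cdot(D+dk)\cdot\pbra{\tfrac{n^\ell}{\eps}}^{6/k}\cdot\log\pbra{\tfrac{n^\ell}{\eps}},$$
so that $M(D,d,k,\ell,t',\eps)=A^{t'/2}$ for every $t'\in\cbra{0,\ldots,\ell}$. First I would record two elementary facts about $A$: since $n\ge2$, $\ell\ge1$ and $\eps\le1/2$ we have $n^\ell/\eps\ge4$, hence $\pbra{n^\ell/\eps}^{6/k}\ge1$ and $\log(n^\ell/\eps)\ge1$; consequently $A\ge\tau\cdot(D+dk)$, and (assuming $D\ge1$, which is harmless since a depth-$0$ tree makes the whole statement trivial) $A\ge\tau\ge4$ as well as $A\ge\tau D\log(n^\ell/\eps)$.

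Next I would reindex by $j=t-r$. If $s>t$ the sum is empty and the inequality is vacuous, so assume $s\le t$; then
$$\sum_{r=s}^{t}M(D,d,k,\ell,t-r,\eps)=\sum_{j=0}^{t-s}A^{j/2}.$$
Since $A\ge4$, i.e.\ $A^{-1/2}\le1/2$, this geometric sum is within a factor $2$ of its largest term:
$$\sum_{j=0}^{t-s}A^{j/2}=A^{(t-s)/2}\sum_{i=0}^{t-s}A^{-i/2}\le A^{(t-s)/2}\sum_{i=0}^{\infty}2^{-i}=2\cdot A^{(t-s)/2}.$$
Finally I would rewrite $A^{(t-s)/2}=A^{t/2}/A^{s/2}$ and use $A^{s/2}\ge\pbra{\tau D\log(n^\ell/\eps)}^{s/2}$ to conclude
$$\sum_{r=s}^{t}M(D,d,k,\ell,t-r,\eps)\le\frac{2\cdot A^{t/2}}{\pbra{\tau D\log(n^\ell/\eps)}^{s/2}}=\frac{2\cdot M(D,d,k,\ell,t,\eps)}{\pbra{\tau D\log(n^\ell/\eps)}^{s/2}},$$
which is exactly the claim.

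There is no genuine obstacle here: the argument is a one-line geometric-series estimate once $M$ is written as a power of $A$. The only points that need a moment's care are the degenerate cases ($s>t$ and $D=0$) and the inequality $A\ge4$, which guarantees the series is comparable to its top term; both follow immediately from the standing hypotheses of \Cref{lem:level_l} ($\tau=10^4$, $n\ge\max\cbra{\tau,k,D}$, $\eps\le1/2$, $\ell\ge1$).
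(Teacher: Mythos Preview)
Your proof is correct and follows essentially the same approach as the paper: both arguments write $M(D,d,k,\ell,t',\eps)$ as $A^{t'/2}$ for the single quantity $A=\tau(D+dk)(n^\ell/\eps)^{6/k}\log(n^\ell/\eps)$, bound the resulting geometric series by twice its leading term using $A\ge\tau\ge4$, and then drop $A$ to $\tau D\log(n^\ell/\eps)$ via $(D+dk)(n^\ell/\eps)^{6/k}\ge D$. The only cosmetic difference is that you reindex to sum powers of $A^{1/2}$ directly, whereas the paper divides through by $M(D,d,k,\ell,t,\eps)$ and sums powers of $A^{-1/2}$; your explicit handling of the degenerate cases $s>t$ and $D=0$ is a small bonus.
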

\begin{proof}
We simply expand the formula of $M$ as follows:
\begin{align*}
\frac{\sum_{r=s}^tM(D,d,k,\ell,t-r,\eps)}{M(D,d,k,\ell,t,\eps)}
&=\sum_{r=s}^t\pbra{\tau \cdot (D+dk)\cdot\pbra{\tfrac{n^\ell}\eps}^{6/k}\log\pbra{\tfrac{n^\ell}\eps}}^{-r/2}\\
&\le\sum_{r=s}^{+\infty}\pbra{\tau \cdot (D+dk)\cdot\pbra{\tfrac{n^\ell}\eps}^{6/k}\log\pbra{\tfrac{n^\ell}\eps}}^{-r/2}\\
&\le2\cdot\pbra{\tau \cdot (D+dk)\cdot\pbra{\tfrac{n^\ell}\eps}^{6/k}\log\pbra{\tfrac{n^\ell}\eps}}^{-s/2}
\tag{due to $\tau\ge 4$ and $s\ge1$}\\
&\le2\cdot\pbra{\tau D\cdot\log\pbra{n^\ell/\eps}}^{-s/2}.
\tag*{\qedhere}
\end{align*}
\end{proof}
Now we prove \Cref{clm:level_l_final}.
\begin{claim*}[\Cref{clm:level_l_final} restated]
When $\Ecal_1\lor\Ecal_2$ does not happen, $\sum_{i=1}^D\abs{\mu_T^{(i)}}\le R$ and $\sum_{i=1}^D\abs{\delta_T^{(i)}}^2\le R^2$ hold for all $T\in\binom{[n]}{\ell-t}$, where 
$$
R=\frac{M(D,d,k,\ell,t,\eps)}{5\cdot\sqrt{\log\pbra{n^\ell/\eps}}}.
$$
\end{claim*}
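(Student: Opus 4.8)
The plan is to fix a set $T\in\binom{[n]}{\ell-t}$, work on the event that $\Ecal_1\lor\Ecal_2$ does not occur, and bound both sums term by term using the estimates on $A(T,r,i)$ already established, together with the cleanup bounds \Cref{fct:depth_to_singleton} and \Cref{cor:depth_to_singleton}. The first thing I would record is a purely combinatorial fact: since $A(T,r,i)$ is a sum over subsets $U\subseteq J(v_{i-1})\cap\overline T$ of size $r$, it vanishes whenever $r>\abs{J(v_{i-1})}$. Consequently $\mu_T^{(i)}$, which by \Cref{eq:def_mu_delta} collects only the even indices $r\ge2$, is identically zero unless $\abs{J(v_{i-1})}\ge2$; and whenever $\abs{J(v_{i-1})}\le1$ the only surviving contribution to $\Delta_T^{(i)}$ is the $r=1$ term, which by \Cref{clm:level_l_first} obeys $\abs{\Delta_T^{(i)}}=\abs{A(T,1,i)}\le M(D,d,k,\ell,t-1,\eps)$ (and is $0$ when $J(v_{i-1})\subseteq T$).

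Next I would handle the indices $i$ with $\abs{J(v_{i-1})}\ge2$ using the second estimate, \Cref{clm:level_l_second} (equivalently the second term of \Cref{eq:final_bound_on_A}): for every $r$ in range, $\abs{A(T,r,i)}\le \tfrac{M(D,d,k,\ell,t,\eps)}{\sqrt{\log(n^\ell/\eps)}}\cdot\bigl(c\cdot\abs{J(v_{i-1})}/(2d)\bigr)^{r/2}$, where $c=O(1/\tau)$ is an absolute constant (one may take $c=800/\tau$ directly from \Cref{clm:level_l_second}, or the sharper $c=O(1/\tau)$ obtained by going one step back and resumming the per-$\Gamma_T^{(i)}(r,r')$ bound over $r'$ as well — this sharper version is what makes the final numerology fit the constant $\tau$ fixed in \Cref{lem:level_l}). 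Since $\abs{J(v_{i-1})}\le2d$ by \Cref{cor:depth_to_singleton}, the base of this geometric-in-$r$ series is at most $c\le1/2$, so collapsing the geometric series over the odd (resp.\ even) $r$ yields, for every such $i$,
$$\abs{\mu_T^{(i)}}\le O(1/\tau)\cdot\frac{M(D,d,k,\ell,t,\eps)}{\sqrt{\log(n^\ell/\eps)}}\cdot\frac{\abs{J(v_{i-1})}}{d},\qquad
\abs{\Delta_T^{(i)}}^2\le O(1/\tau)\cdot\frac{M(D,d,k,\ell,t,\eps)^2}{\log(n^\ell/\eps)}\cdot\frac{\abs{J(v_{i-1})}}{d}.$$

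Finally I would sum over $i\in[D]$. For $\mu$ only the indices with $\abs{J(v_{i-1})}\ge2$ contribute, and $\sum_{i:\abs{J(v_{i-1})}\ge2}\abs{J(v_{i-1})}\le2d$ by \Cref{cor:depth_to_singleton}, so $\sum_{i=1}^D\abs{\mu_T^{(i)}}\le O(1/\tau)\cdot M(D,d,k,\ell,t,\eps)/\sqrt{\log(n^\ell/\eps)}\le R$. For $\sum_i\abs{\Delta_T^{(i)}}^2$ the same inequality controls the $\abs{J(v_{i-1})}\ge2$ indices, whose total is $O(1/\tau)\cdot M(D,d,k,\ell,t,\eps)^2/\log(n^\ell/\eps)$; the indices with $\abs{J(v_{i-1})}=1$ number at most $D$ by \Cref{fct:depth_to_singleton}, and since $M(D,d,k,\ell,t-1,\eps)^2=M(D,d,k,\ell,t,\eps)^2/\bigl(\tau(D+dk)(n^\ell/\eps)^{6/k}\log(n^\ell/\eps)\bigr)\le M(D,d,k,\ell,t,\eps)^2/(\tau D\log(n^\ell/\eps))$ (a one-step instance of \Cref{lem:M_sums}), they contribute at most $M(D,d,k,\ell,t,\eps)^2/(\tau\log(n^\ell/\eps))$. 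Adding the two gives $\sum_{i=1}^D\abs{\Delta_T^{(i)}}^2\le O(1/\tau)\cdot M(D,d,k,\ell,t,\eps)^2/\log(n^\ell/\eps)\le R^2$.

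The routine parts are the two geometric summations and the reindexing. The one step I expect to be the real obstacle is tracking the absolute constants so that the $\abs{J(v_{i-1})}\ge2$ contributions and the $\abs{J(v_{i-1})}=1$ contributions together stay under the factor $1/5$ (resp.\ $1/25$) built into $R$ in \Cref{eq:def_R}; this is precisely why one prefers the sharper $O(1/\tau)$ bound on $A(T,r,i)$ to the crude $(800/\tau)^r$ form, and it is also the only place where the estimate $\abs{J(v_{i-1})}\le2d$ is used twice — once to keep the geometric ratio strictly below $1$, and once (in the stronger cumulative form of \Cref{cor:depth_to_singleton}) to bound the outer sum $\sum_i\abs{J(v_{i-1})}$.
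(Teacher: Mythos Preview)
Your proposal is correct and follows essentially the same strategy as the paper's proof. Both use \Cref{eq:final_bound_on_A} (i.e., the first bound from \Cref{clm:level_l_first} when $|J(v_{i-1})|\le1$ and the second bound from \Cref{clm:level_l_second} when $|J(v_{i-1})|\ge2$), sum the resulting geometric series in $r$, and close with \Cref{cor:depth_to_singleton} and \Cref{lem:M_sums}; the only cosmetic difference is that you split the indices explicitly into the two cases before summing, whereas the paper carries both terms of \Cref{eq:final_bound_on_A} uniformly and separates them with $(a+b)^2\le2(a^2+b^2)$ at the end. Your parenthetical about a ``sharper $c=O(1/\tau)$'' is a little muddled---$800/\tau$ already \emph{is} $O(1/\tau)$, and the paper does not need anything finer---but this does not affect the argument.
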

\begin{proof}
We verify for each $T\in\binom{[n]}{\ell-t}$ as follows:
\begin{align*}
\sum_{i=1}^D\abs{\mu_T^{(i)}}
&=\sum_{i=1}^{D'}\abs{\mu_T^{(i)}}\le\sum_{i=1}^{D'}\sum_{\substack{r=2,\\\text{even}}}^t\abs{A(T,r,i)}
\tag{due to \Cref{eq:def_mu_delta}}\\
&\le\sum_{i=1}^{D'}\sum_{\substack{r=2,\\\text{even}}}^t
\pbra{M(D,d,k,\ell,t-r,\eps)+\tfrac{M(D,d,k,\ell,t,\eps)}{\sqrt{\log\pbra{n^\ell/\eps}}}\cdot\sqrt{\pbra{\tfrac{800}\tau}^r\pbra{\tfrac{\abs{J(v_{i-1})}}{2d}}^r} \cdot \indicator_{|J(v_{i-1})|>1}}
\tag{due to \Cref{eq:final_bound_on_A}}\\
&\le\sum_{i=1}^{D'}\sum_{\substack{r=2,\\\text{even}}}^t
\pbra{M(D,d,k,\ell,t-r,\eps)+\tfrac{M(D,d,k,\ell,t,\eps)}{\sqrt{\log\pbra{n^\ell/\eps}}}\cdot\pbra{\tfrac{\abs{J(v_{i-1})}}{2d}}\pbra{\tfrac{800}\tau}^{r/2}\cdot \indicator_{|J(v_{i-1})|>1}}
\tag{Since $|J(v_{i-1})| \le 2d$ from \Cref{cor:depth_to_singleton}}\\
&\le\tfrac{2\cdot M(D,d,k,\ell,t,\eps)}{\tau\cdot\log\pbra{n^\ell/\eps}}+\tfrac{1.1\cdot 800 \cdot M(D,d,k,\ell,t,\eps)}{\tau \cdot\sqrt{\log\pbra{n^\ell/\eps}}}
\tag{due to \Cref{lem:M_sums} and \Cref{cor:depth_to_singleton} and $\tau = 10^4$}\\
&\le\tfrac{M(D,d,k,\ell,t,\eps)}{5\cdot\sqrt{\log\pbra{n^\ell/\eps}}}=R
\end{align*}
and with similar calculation, we have
\begin{align*}
\sum_{i=1}^D\abs{\delta_T^{(i)}}^2
&\le\sum_{i=1}^{D'}\pbra{\sum_{\substack{r=1,\\\text{odd}}}^t\pbra{
M(D,d,k,\ell,t-r,\eps)+\tfrac{M(D,d,k,\ell,t,\eps)}{\sqrt{\log\pbra{n^\ell/\eps}}}\cdot\sqrt{\tfrac{\abs{J(v_{i-1})}}{2d}}\pbra{\tfrac{800}\tau}^{r/2} \cdot \indicator_{|J(v_{i-1})|>1}} }^2\\
&\le\sum_{i=1}^{D'}\pbra{\tfrac{2\cdot M(D,d,k,\ell,t,\eps)}{\sqrt{\tau D\cdot\log\pbra{n^\ell/\eps}}}
+\tfrac{1.1\cdot\sqrt{800}\cdot M(D,d,k,\ell,t,\eps)}{\sqrt{\tau} \sqrt{\log\pbra{n^\ell/\eps}}}\cdot\sqrt{\tfrac{\abs{J(v_{i-1})}}{2d}} \cdot \indicator_{|J(v_{i-1})|>1}}^2\tag{due to $\tau = 10^4$}\\
&\le\pbra{\tfrac{M(D,d,k,\ell,t,\eps)}{\sqrt{\log\pbra{n^\ell/\eps}}}}^2\sum_{i=1}^{D'}2\cdot\pbra{\frac4{\tau D}+\tfrac{968}{\tau}\cdot\tfrac{\abs{J(v_{i-1})}}{2d}  \cdot \indicator_{|J(v_{i-1})|>1}}
\tag{due to $(a+b)^2\le2(a^2+b^2)$}\\
&\le\pbra{ \tfrac{2000 \cdot M(D,d,k,\ell,t,\eps)}{\tau \cdot\sqrt{\log\pbra{n^\ell/\eps}}}}^2=R^2.
\tag*{\qedhere}
\end{align*}
\end{proof}

\end{document}